\pgfplotsset{compat=1.18}
\newtheorem{theorem}{Theorem}[section]
\newtheorem{proposition}{Proposition}[section]
\newtheorem{lemma}{Lemma}[section]
\newtheorem{corollary}{Corollary}[section]
\theoremstyle{definition}
\newtheorem{definition}{Definition}[section]
\theoremstyle{remark}
\newtheorem{remark}{Remark}[section]
\title{\textbf{Complete Exchange Mechanisms}\thanks{We thank Morimitsu Kurino, participants in Japanese Economic Association Meeting 2025 fall, and seminar participants at Osaka Metropolitan University for helpful comments and suggestions. The authors used AI assistance tools (Gemini Pro 2.5 \& Claude Sonnet 4) to help refine the article and develop computational scripts. All AI-generated content was critically reviewed and subsequently edited by the authors. This work was supported by JSPS KAKENHI Grant Number JP21K01391 and 25K05004. All errors are our own.}}
\author{
Minoru Kitahara\thanks{Osaka Metropolitan University. Email: mkitahar [at] omu.ac.jp} \and
Hiroshi Uno\thanks{Osaka Metropolitan University. Email: uno [at] omu.ac.jp}
}
\date{\today}
\begin{document}

\maketitle

\begin{abstract}
\noindent This paper studies one-sided matching under a complete exchange (CE) requirement, where each agent must be assigned an object different from its initial endowment. We introduce assignment partition---a partition of agents and choice sets that builds CE into feasibility---and, within this structure, propose two new mechanisms. 

Chain Serial Dictatorship (C-SD) operates within the partition as a binding-choice chain: the highest-priority agent picks from its allowed set and the right to pick passes to the owner of the chosen object; if that owner has already picked, the right reverts to the highest-priority remaining agent. Two-Stage Serial Dictatorship (T-SD) operates within the partition as a nominate-then-assign procedure: in Stage 1, agents tentatively nominate objects in exogenous priority, and the owners of nominated objects determine an endogenous final priority; in Stage 2, serial dictatorship runs within the partition using that final priority. For any given assignment partition, C-SD and T-SD simultaneously satisfy strategy-proofness, respecting improvement, and efficiency relative to the partition.

We then examine the limits of pursuing market-wide efficiency under the CE constraint. As a benchmark, we study a modified TTC, CE-TTC, which first enforces a CE-compliant reassignment and then runs a self-avoiding top-trading-cycles phase; CE-TTC achieves efficiency within the CE constraint and strategy-proofness but fails respecting improvement. Moreover, for three and four agents, no mechanism can simultaneously achieve efficiency within the CE constraint, respecting improvement, and strategy-proofness. These findings underscore the value of partition-based design for truthful implementation with investment incentives under a hard CE mandate.
\end{abstract}

\vspace{0.1in}
\noindent \textbf{Keywords:} Mechanism design, Market design, One-sided matching, House allocation, Complete exchange, Respecting improvement, Strategy-proofness, Human capital investment, Job rotation, Personnel assignment

\vspace{0.1in}
\noindent \textbf{JEL Classification:} C78, D47, D82

\section{Introduction}\label{sec:introduction}

Organizational objectives and values sometimes necessitate mandatory personnel transfers that override individual preferences. Such systems, where all participants must be reassigned to positions different from their initial assignments, are implemented in limited but important contexts including the periodic rotation of civil servants,\footnote{Periodic transfers are a standard practice in public administration in several countries, including Japan, South Korea, India, and China. The practice is frequently employed as an anti-corruption measure, especially in high-risk sectors like customs agencies \citep{kim2008, bertrandEtAl2020, worldBankGroup2016, bergin2023}.} military personnel assignments,\footnote{Job rotation is a standard practice in military organizations across English-speaking countries, including Australia, Canada, New Zealand, the United Kingdom, and the United States. Officers typically change assignments every two to three years, far more frequently than their civilian counterparts \citep{jansFrazerJans2004}. In Japan, the Self-Defense Forces operate under similar principles, with personnel transfers governed by legal obligations under Article 57 of the Self-Defense Forces Law.} teacher exchanges in Japan and China, and corporate job rotation. These systems are motivated by various organizational imperatives - such as developing human capital, promoting resource equalization, ensuring coverage for less desirable positions, fostering organizational learning, or preventing corruption - that may not be attainable if individual retention rights are upheld \citep{japanInternationalCooperationAgencyHistoryJapanEducational2004, watanabeKenpiFutanKyoshokuin2019, wuNarrowingGapChinese2020, aokiInformationIncentivesBargaining1988, itoJapaneseEconomy2020}.

We formalize this institutional requirement as complete exchange (CE), where every division must receive a worker different from its initial endowment.\footnote{The term ``complete exchange'' has also been studied in the context of a house-swapping problem with complex divisions by \citet{cechlarovaHouseswappingDivorcingEngaged2016}. Their work focuses on the computational complexity of finding an exchange and the necessary conditions for the existence of a complete exchange outcome. In contrast, our paper treats complete exchange as an exogenously imposed institutional constraint and analyzes the design of mechanisms with desirable incentive properties under this rule.} This constraint conflicts with standard notions of efficiency and individual rationality (IR), which ensures voluntary participation \citep{shapleyCoresIndivisibility1974}. When every division prefers its own worker, any CE assignment is Pareto dominated by the initial endowment, making standard efficient mechanisms like serial dictatorship and top trading cycles inapplicable.

This conflict creates an incentive problem: divisions may underinvest in worker development when they cannot retain the benefits of their investments \citep{beckerHumanCapitalTheoretical1964, pigouWealthWelfare1912}. To address this, mechanisms should satisfy respecting improvement (RI), which ensures that when a division's worker becomes more attractive to others, the division's assignment does not worsen \citep{biroShapleyScarfHousing2024}. 

To ensure complete exchange is met, one practical approach is to impose structural constraints on the assignment process. Additionally, any practical mechanism must be strategy-proof (SP), ensuring truthful preference revelation \citep{satterthwaiteStrategyproofAllocationMechanisms1981}. This leads to the central research question of our paper:
\begin{itemize}
\item
\emph{What complete exchange mechanism achieves some efficiency under strategy-proofness and respecting improvement?}
\end{itemize}
To answer this question, we propose two novel mechanisms, Chain Serial Dictatorship (C-SD) and Two-Stage Serial Dictatorship (T-SD). To make our approach concrete, we first illustrate the structure and the mechanisms through a simple example. 

Consider an assignment problem with six divisions and six workers, where each division initially owns one worker and has strict preferences over all workers. Each division initially owns a worker with the same name ($A_1$ owns $a_1$, $A_2$ owns $a_2$, etc.). We also have an exogenous priority order over divisions. We partition both the divisions and workers into two equal-sized groups: Group A consists of divisions $\{A_1, A_2, A_3\}$ and the choice set of workers $\{b_1, b_2, b_3\}$, and Group B consists of divisions $\{B_1, B_2, B_3\}$ and the choice set of workers $\{a_1, a_2, a_3\}$. 

This partitioning creates what we call an \textit{assignment partition}, which consists of pairs of division groups and their corresponding worker choice sets. Each pair satisfies two key properties: (1) \textit{separation}, meaning no division in a group initially owns any worker in that group's choice set, and (2) \textit{balance}, meaning each group has the same number of divisions and workers. 

These properties ensure that each division can only choose workers from the other group, which automatically guarantees complete exchange. For instance, $A_1$ can only choose from $\{b_1, b_2, b_3\}$ (Group B workers), and $B_1$ can only choose from $\{a_1, a_2, a_3\}$ (Group A workers).

Assume that all divisions have identical preferences: $a_2 \succ a_1 \succ a_3 \succ b_1 \succ b_2 \succ b_3$. The exogenous priority order is $A_1 \triangleright A_2 \triangleright A_3 \triangleright B_1 \triangleright B_2 \triangleright B_3$.

\paragraph{Chain Serial Dictatorship (C-SD)}

C-SD implements a chain-like process where the right to choose next passes to the division whose worker was just selected. 
The process begins with the division that has the highest priority according to the exogenous priority order. Since $A_1$ has the highest priority in the order $A_1 \triangleright A_2 \triangleright A_3 \triangleright B_1 \triangleright B_2 \triangleright B_3$, $A_1$ gets to choose first.

The process proceeds as follows:
\begin{enumerate}
\item $A_1$ chooses first and selects $b_1$, then exits. The right to choose next passes to $B_1$ (the original owner of $b_1$).
\item $B_1$ selects $a_2$ and exits. The right to choose next passes to $A_2$ (the original owner of $a_2$).
\item $A_2$ selects $b_2$ (since $b_1$ is already taken) and exits. The right to choose next passes to $B_2$ (the original owner of $b_2$).
\item $B_2$ selects $a_1$ and exits. Since $A_1$ (the original owner of $a_1$) has already exited, the chain breaks and the remaining division with highest priority, $A_3$, gets to choose next.
\item $A_3$ selects $b_3$ and exits. The right to choose next passes to $B_3$ (the original owner of $b_3$).
\item $B_3$ selects $a_3$ and exits.
\end{enumerate}

The final assignment is: $A_1$ gets $b_1$, $A_2$ gets $b_2$, $A_3$ gets $b_3$, $B_1$ gets $a_2$, $B_2$ gets $a_1$, and $B_3$ gets $a_3$.

\paragraph{Two-Stage Serial Dictatorship (T-SD)}

T-SD operates in two stages. In Stage 1, divisions make tentative choices to determine the selection order for Stage 2. In Stage 2, divisions make final choices based on this order.

Stage 1:
\begin{itemize}
\item Tentative Choices: Each division tentatively chooses its most-preferred worker from the other group, following a predetermined order within each group. Group A goes first: $A_1$ tentatively chooses $b_1$, $A_2$ tentatively chooses $b_2$, $A_3$ tentatively chooses $b_3$. Then Group B: $B_1$ tentatively chooses $a_2$, $B_2$ tentatively chooses $a_1$, $B_3$ tentatively chooses $a_3$.

\item Order Determination: The selection order for Stage 2 is determined by the order in which workers were tentatively chosen. A division gets to choose earlier if its worker was tentatively chosen earlier by the other group. This results in Group A's selection order: $A_2 \rightarrow A_1 \rightarrow A_3$, and Group B's selection order: $B_1 \rightarrow B_2 \rightarrow B_3$.
\end{itemize}

Stage 2 (Final Choices): Group A chooses first in the determined order ($A_2 \rightarrow A_1 \rightarrow A_3$): $A_2$ chooses $b_2$, $A_1$ chooses $b_1$, $A_3$ chooses $b_3$. Then Group B chooses in their determined order ($B_1 \rightarrow B_2 \rightarrow B_3$): $B_1$ chooses $a_2$, $B_2$ chooses $a_1$, $B_3$ chooses $a_3$.

The final assignment is identical to C-SD: $A_1$ gets $b_1$, $A_2$ gets $b_2$, $A_3$ gets $b_3$, $B_1$ gets $a_2$, $B_2$ gets $a_1$, and $B_3$ gets $a_3$.

The paper's first main contribution is to show that both C-SD and T-SD provide a positive answer to our research question. They are strategy-proof, respect improvement, and are efficient within the constraints of the partition. We formalize this latter property as efficiency under assignment partition (EAP). These mechanisms offer a robust solution for practical implementation, as we illustrate with a proposed reform for the NPB Active Player Draft, which can leverage its two-league structure as a natural assignment partition. 

We illustrate our framework's policy relevance by analyzing the ``Active Player Draft'' (\emph{gen'eki dorafuto}) in Nippon Professional Baseball (NPB). This draft mandates complete exchange to create new player opportunities and implicitly aims for respecting improvement to encourage clubs to list valuable players. While the mechanism satisfies CE-efficiency (efficiency under the complete exchange constraint) and strategy-proofness for $n=3$ \citep{umenishiGenekiDorafutoRuuru2025}, we prove that its desirable properties do not extend to larger markets: CE-efficiency is maintained for all $n\ge 3$, but strategy-proofness fails for $n\ge 4$ (generalizing the $n=4$ result from \citet{umenishiGenekiDorafutoRuuru2025}), and respecting improvement---not examined in the $n=3$ case---fails already at $n\ge 3$. We propose a reform based on our C-SD and T-SD mechanisms, using the NPB's two-league structure as a natural assignment partition. This reform would not only satisfy complete exchange, strategy-proofness, and respecting improvement, but also achieve efficiency {under the restriction of cross-league transfers} and hence resolve practical issues with same-league transfers, thus contributing to the draft's stated goals.

Our practical approach of using a partition, however, raises a question: why rely on such a seemingly restrictive structure at all? An alternative approach would be to pursue efficiency under the complete exchange constraint directly, without imposing any specific partition. This notion of efficiency, which we term CE-efficiency, considers efficiency from the perspective of allocations restricted to CE, aiming for overall efficiency within the CE constraint rather than the efficiency achievable under assignment partition. The pursuit of such a mechanism forms the second arc of our paper.

To answer this question, we establish impossibility results that demonstrate the fundamental constraints on achieving CE-efficiency while preserving key incentive properties. For small markets ($n=3, 4$), we prove that CE-efficiency and respecting improvement are incompatible---any mechanism that achieves CE-efficiency must violate RI, creating a fundamental trade-off between efficiency and investment incentives. For $n=4$, while CE-efficiency and respecting improvement can be simultaneously satisfied, adding strategy-proofness creates impossibility. These impossibility results establish boundaries that demonstrate the fundamental constraints on achieving CE-efficiency while preserving key incentive properties.

\paragraph{Our Contribution}\label{ssubsec:our-contribution}

This paper makes a three-fold contribution to one-sided matching and market design under mandatory complete exchange.

First, we provide the first formal analysis of CE mechanisms as a primary design constraint in one-sided matching. While CE conflicts with standard efficiency and individual rationality assumptions, many institutions require it. We treat CE as a central design challenge rather than an obstacle.

Second, we identify two critical incentive problems under mandatory exchange: (a) maintaining organizational investment incentives (respecting improvement) and (b) achieving efficient allocation based on true preferences, which requires ensuring truthful preference revelation (strategy-proofness) as a means. We show these can be simultaneously satisfied through structured design.

Third, we propose two mechanisms---Chain Serial Dictatorship and Two-Stage Serial Dictatorship---that simultaneously achieve strategy-proofness, respecting improvement, and efficiency under assignment partition.

We demonstrate practical relevance through a proposed reform of the NPB Active Player Draft, showing how our framework resolves strategic flaws while advancing institutional objectives.

\paragraph{Limitations and Scope}

Our research addresses a specific technical question: given that CE is exogenously mandated, how can we design mechanisms that address two key incentive problems (under-investment and misreporting) while achieving partial efficiency? CE requirements force us to relinquish Pareto efficiency and individual rationality, but we do not evaluate whether the benefits of CE justify these costs, nor do we analyze CE's broader consequences or overall desirability.

We analyze the CE assignment problem as one-sided matching, considering only division preferences. While two-sided matching incorporating worker preferences could potentially yield better CE allocations, and the choice between one-sided and two-sided approaches presents important organizational design questions, we do not address these issues. For empirical analysis of this choice, see \citet{cowgillStableMatchingJob2025a}.

Our analysis focuses on one-shot assignment problems. In practice, personnel assignments involve long-term considerations, and improvement incentives operate over extended periods. While CE violates IR constraints in one-shot settings, job rotation systems may offer solutions that satisfy IR for most divisions in the long run. This represents an important direction for future research.

We model respecting improvement as a property that prevents divisions from being harmed when their workers improve, assuming these improvements can be achieved without cost. In reality, improvements often require costly investments. We do not address optimal investment levels or the dynamic aspects of improvement decisions.

Additional technical limitations include: static analysis excluding externalities, complementarities, team production, and inter-divisional competition; deterministic mechanisms only, leaving randomization and group strategy-proofness for future work; exogenous assignment partition without optimization. Our impossibility results show hierarchical structure: for $n=3$, CE-efficiency and respecting improvement are incompatible; for $n=4$, adding strategy-proofness creates incompatibility. For larger $n$, the feasibility of CE-efficiency with all incentive properties remains open. Finally, while our impossibility results establish boundaries for CE-efficiency, they do not establish that the partition approach is necessary---the space between efficiency under assignment partition and CE-efficiency remains largely unexplored, representing an important direction for future research. 

\paragraph{Related Literature}

Our research is positioned at the intersection of matching theory, particularly work on investment incentives, and the economics of personnel practices like job rotation. We first review the foundations that our model builds upon and departs from, then survey the institutional context that motivates our inquiry.

A common assumption in matching theory, tracing back to \citet{shapleyCoresIndivisibility1974}, is individual rationality. This principle guarantees that no participant is made worse off than their initial endowment, serving as a participation constraint in voluntary markets. In practice, it ensures participants in a housing allocation are guaranteed a home at least as good as their initial one \citep{abdulkadirogluHouseAllocationExisting1999} and that patients in a kidney exchange are not harmed by participating \citep{rothKidneyExchange2004}.

Our work explores environments where this assumption is deliberately violated by a central planner through a mechanism of complete exchange, where all divisions are reassigned. While this may seem counterintuitive, it is often motivated by long-term organizational goals that supersede short-term individual preferences. For instance, in mandatory job rotation systems, employees may be assigned to temporarily undesirable but operationally necessary posts to develop skills or prevent corruption, under the implicit guarantee that such sacrifices are part of a broader, equitable career trajectory \citep{ortegaJobRotationLearning2001, campionEtAl1994}. While we do not explicitly model such dynamic considerations, this concept of ``long-term IR''---where expected utility over a career path is positive, even if short-term assignments are not---provides one possible justification for CE mechanisms. This contrasts with models focused on improving upon existing matches, which implicitly respect individual rationality.

The closest precursor to our work is the model of job rotation by \citet{yuMarketDesignApproach2020}, who also study a setting where individual rationality is not guaranteed assignment-by-assignment. However, their focus is on the cyclical nature of rotation, whereas our model accommodates reassignments within assignment partition structures and centers on the incentive implications of respecting divisions' endogenous investments. The respecting improvement property has been studied in various matching contexts \citep{balinskiTaleTwoMechanisms1999, hatfieldImprovingSchoolsSchool2016}, and the incentive problems of talent poaching and multiskilling have been analyzed in organizational contexts \citep{battistonTalentPoachingJob, carmichaelMultiskillingTechnicalChange1993}. Our work contributes to this literature by providing the first formal analysis of respecting improvement under a strict complete exchange constraint, bridging the literature on investment incentives (which assumes IR) with the design challenges inherent in mandatory exchange systems. We introduce a framework for implementing complete exchange through assignment partition, establishing necessary and sufficient conditions for its existence and providing efficient algorithms for its construction.

\paragraph{Institutional Context}\label{ssubsec:institutional_context}

CE mechanisms, though rarely labeled as such, are prevalent across various sectors. They are typically implemented to address systemic issues like corruption, to foster skill development, or to ensure that unpopular but necessary positions are filled.

\subparagraph{Professional Sports Drafts}\label{ssubsec:sports}

While not a perfect analogy, certain features of professional sports leagues resonate with CE principles. In Japan's Nippon Professional Baseball (NPB), the ``Active Player Draft'' (Gen'eki Draft) functions as a limited CE system \citep{nikkanSports2022}. Teams must submit a list of players they are willing to trade, and a centralized process reallocates a subset of these players to other teams. The goal is to provide playing opportunities for athletes who are stuck on the depth chart of their current team, thereby increasing competitive balance and player opportunities. This mechanism, though limited in scope, demonstrates how complete exchange can be adapted to facilitate movement and opportunity in a closed market.

\subparagraph{Civil Servant Assignment}\label{ssubsec:civil_servant}

Mandatory rotation is a feature of civil service personnel policy in numerous countries, serving diverse and sometimes conflicting objectives, from anti-corruption to strategic leadership development \citep{worldBankGroup2016}. The design of these systems often embodies a CE constraint. In South Korea, for instance, a high-frequency ``cyclical reshuffle'' system mandates that senior civil servants rotate positions, often annually, to prevent corruption and foster a generalist perspective \citep{kim2008}. Similarly, India's elite Indian Administrative Service (IAS) has historically used centralized, tenure-based rotation policies, although recent reforms have attempted to incorporate officer preferences \citep{bertrandEtAl2020}.

These systems, however, are not without costs. Frequent rotations can lead to a loss of specialized expertise and lower morale, as observed in customs administrations where it may disrupt efforts to combat smuggling \citep{bergin2023}. While some systems are rigid, others employ more targeted approaches. The U.S. federal government uses ``directed reassignments'' for specific strategic purposes \citep{officeOfPersonnelManagement2016}, and countries like Germany and Italy apply risk-based rotations in corruption-prone agencies such as the tax authority \citep{worldBankGroup2016}. Our model provides a framework for analyzing the trade-offs inherent in these diverse implementations.

\subparagraph{Military Personnel Assignment}\label{ssubsec:military}

Military organizations worldwide rely heavily on CE-type systems to manage personnel. The U.S. Department of Defense, for example, operates a centralized assignment system where service members are regularly rotated through different posts to meet operational needs, facilitate career progression, and develop a broad range of skills \citep{uSDepartmentOfDefense2015}. Officers, in particular, are moved between operational, training, and administrative roles to cultivate the leadership qualities required at senior levels. Individual preferences may be considered, but the needs of the service are paramount, making this a real-world example of a centrally mandated CE mechanism.

\paragraph{Corporate Job Rotation: Organizational Imperatives}\label{ssubsec:corporate_rotation}

In corporate settings, mandatory rotation serves distinct organizational objectives that complement but differ from educational contexts. Rotation transfers tacit knowledge, breaks down silos, and develops multi-skilled employees. \citet{ortegaJobRotationLearning2001} provides a foundational analysis of rotation as a learning mechanism, and recent work studies dynamic trade-offs between rotation and specialization using matching models \citep{kurinoJobRotationSpecialization2024}. 

Rotation also addresses incentive challenges. Experimentally, job rotation can eliminate the ratchet effect in compensation \citep{weiCanJobRotation2020}, and it can reinvigorate motivation by resetting career concerns \citep{hakenesIncentiveEffectsJob2017}. In Japanese firms, rotation underpins broad skill development and organizational knowledge \citep{aokiInformationIncentivesBargaining1988}. These studies, however, focus on \emph{employee} incentives, whereas our research targets \emph{organizational} investment incentives under mandatory exchange.

Applications extend beyond human capital development. Evidence from internal talent markets suggests that division-led assignment can enhance productivity relative to worker-preference-based systems \citep{cowgillStableMatchingJob2025a}, supporting our division-centric modeling. Rotation also serves governance functions, including anti-corruption.

Despite benefits, rotation has costs: mandatory loan officer rotation in Indian banks induced temporary efficiency losses during adaptation \citep{bhowalCostsJobRotation2021}. Thus, CE is a prevalent institutional constraint motivated by organizational goals, but it entails trade-offs between knowledge transfer and job-specific human capital destruction. Our framework addresses how to design mechanisms that preserve \emph{organizational} investment incentives when rotation is exogenously mandated.

\section{Assignment Problems}\label{sec:model}

Consider \(n\) divisions, where each division initially has one worker. Each division has strict preferences over all workers. This is a one-sided matching environment (objects have no preferences). This situation naturally arises in various contexts, such as mandatory teacher reassignments across schools or compulsory employee rotations across departments, where workers cannot express preferences.

We formalize this situation as follows:

\begin{definition}[Assignment Problem]\label{def:assignment}
An \textbf{assignment problem} is a tuple $\mathcal{A}=(N, X, \succ)$, where:
\begin{itemize}
\item $N=\{1,2,\ldots,n\}$ is the set of divisions,
\item $X$ is the set of workers, with $|X|=|N|$,
\item $\succ:=(\succ_i)_{i \in N}$ is a preference profile, where $\succ_i$ is the strict preference of division $i$ over $X$.
\end{itemize}
\end{definition}

We denote by $\mathcal{P}$ the set of all preference profiles.

For any division $i$, we denote by $\succeq_i$ the weak preference relation induced by $\succ_i$: $j \succeq_i h$ if and only if $j \succ_i h$ or $j = h$.

When a strict priority order $\triangleright$ over divisions is given, we call it an \textbf{assignment problem with priority} and denote it as $\mathcal{A}=(N, X, \succ, \triangleright)$.

For notational convenience, and without loss of generality, we assume that division $i$ initially owns worker $i$. This assumption implies that the set of workers $X$ equals the set of divisions $N$ (i.e., $X=N$). 

Let $o:N\to X$ be the identity owner map, $o(i)=i$ for all $i\in N$.

An assignment specifies which worker each division receives. Since we have an equal number of divisions and workers, we consider complete assignments where every division receives exactly one worker and every worker is assigned to exactly one division.

\begin{definition}[Assignment]\label{def:assignment-mu}
An \textbf{assignment} is a bijective function $\mu: N \to X$, where $\mu_i:=\mu(i)$ denotes the worker assigned to division $i$ under $\mu$.
\end{definition}

We denote by $\mathcal{M}$ the set of all assignments.

A mechanism is a systematic way to determine assignments based on divisions' preferences. It takes as input the preference profile of all divisions and produces an assignment as output.

\begin{definition}[Mechanism]\label{def:mechanism}
A \textbf{mechanism} is a function $f: \mathcal{P} \to \mathcal{M}$ that assigns to each preference profile $\succ \in \mathcal{P}$ an assignment $\mu \in \mathcal{M}$.
\end{definition}

We write $f_i(\succ)$ to denote the worker assigned to division $i$ under mechanism $f$ at preference profile $\succ$.

\subsection{Complete Exchange, Respecting Improvement, and Standard Properties}\label{subsec:properties}

In most allocation problems, it is natural to require that no division should be worse off by participating in the mechanism than they would be with their initial endowment. This requirement, called individual rationality, ensures voluntary participation.

However, some institutional settings require all participants to change their workers. Such requirements arise when individual rationality fails to secure adequate coverage of essential positions. For instance, rural schools may struggle to attract teachers voluntarily, necessitating mandatory rotation \citep{japanInternationalCooperationAgencyHistoryJapanEducational2004}. Similarly, less desirable corporate positions may remain unfilled under voluntary systems, requiring mandatory rotation. These constraints demonstrate that complete exchange is often an institutional necessity rather than merely an organizational preference.

Complete exchange is a property in assignment mechanisms that ensures every division must receive a different worker from its initial one. This property is used in institutional settings where rotation or exchange is mandatory, such as in teacher rotation systems or corporate job transfers.

\begin{definition}[Complete Exchange (CE)]\label{def:CE}
A mechanism $f$ satisfies \textbf{complete exchange (CE)} if for any preference profile $\succ$, $f_i(\succ) \neq i$ for all $i \in N$.
\end{definition}

We denote by $\mathcal{C}$ the set of all assignments that satisfy complete exchange, i.e., the set of all derangements (permutations with no fixed points) on $N$.

It is immediate from their definitions that individual rationality and complete exchange are conflicting properties. To see this, consider a preference profile where every division regards its own initial worker as the most desirable. Under such preferences, any assignment that satisfies complete exchange necessarily makes every division worse off, thus violating individual rationality. Conversely, any assignment that satisfies individual rationality must allow each division to keep its worker, thus violating complete exchange.

While complete exchange ensures the desired rotation outcome, it may create a potential incentive problem for divisions. Since divisions must exchange their workers and cannot retain them regardless of the investment made in worker development, division managers may lack sufficient motivation to invest in improving their workers' skills and capabilities. This is because the benefits of such investments---improved worker quality---would accrue to other divisions that receive the enhanced workers, while the investing division bears the full cost of development without retaining the improved worker.

To address this incentive misalignment, we introduce the respecting improvement property. Respecting improvement \citep{biroShapleyScarfHousing2024} ensures divisions have proper incentives to improve their workers' quality. When a division's worker's ranking improves relative to other workers in the preferences of other divisions, respecting improvement guarantees that the division's assignment will not worsen. 

\begin{definition}[Improvement and Respecting Improvement (RI)]\label{def:RI}
A preference profile $\succ'$ is said to be an \textbf{improvement for division} $i \in N$ \textbf{with respect to} $\succ$ if:
\begin{enumerate}
\item $\succ'_i = \succ_i$ (division $i$'s preferences remain unchanged)
\item for all $j \neq i$ and all $k \neq i$, if $i \succ_j k$, then $i \succ'_j k$ (worker $i$'s ranking can only improve)
\item for all $j \neq i$ and all $k, l \neq i$, $k \succ_j l \Leftrightarrow k \succ'_j l$ (relative rankings of other workers remain unchanged)
\end{enumerate}

A mechanism $f$ satisfies \textbf{respecting improvement (RI)} if, for each division $i \in N$ and any preferences $\succ, \succ'$ such that $\succ'$ is an improvement for $i$ with respect to $\succ$, we have $f_i(\succ') \succeq_i f_i(\succ)$.
\end{definition}

Strategy-proofness ensures that divisions cannot benefit from misreporting their preferences. This property maintains the integrity and efficiency of the mechanism, as it guarantees that the mechanism's outcomes are based on divisions' true preferences rather than strategic misrepresentation.

\begin{definition}[Strategy-proofness (SP)]\label{def:SP}
A mechanism $f$ satisfies \textbf{strategy-proofness (SP)} if for any division $i \in N$, any preference profile $\succ$, and any alternative preference $\succ'_i$ of division $i$, we have $f_i(\succ) \succeq_i f_i(\succ'_i, \succ_{-i})$.
\end{definition}

Efficiency is a criterion in evaluating mechanism properties, as it ensures that there is no waste in the assignment outcome.

An assignment is Pareto efficient if there exists no other assignment that makes at least one division better off without making anyone else worse off.

Like individual rationality, Pareto efficiency can be in direct conflict with the complete exchange constraint. The same example profile illustrates the point: when every division's most-preferred worker is its own, the only Pareto efficient assignment is the initial endowment itself. Any other assignment would make every division strictly worse and would thus be Pareto dominated by the initial state. Consequently, for this profile, any mechanism that satisfies complete exchange must produce a Pareto inefficient outcome, while any Pareto efficient mechanism must violate complete exchange.

When mechanisms operate under assignment partition structures, efficiency should be evaluated within the constraints imposed by this partition. The question of whether complete exchange, strategy-proofness, respecting improvement, and general complete exchange-constrained efficiency can be simultaneously achieved remains an open problem. However, within the specific structure of assignment partition, we can define and achieve a meaningful efficiency concept: efficiency under assignment partition, which we will formally define after introducing our proposed mechanisms.

Across educational and corporate contexts, mandatory personnel exchange serves institutional objectives but creates a common challenge: how to maintain investment incentives when the benefits of development accrue to other units. This challenge transcends sectors and poses a mechanism design problem. Our framework focuses on this problem and develops mechanisms that sustain investment incentives when CE is institutionally required.

\section{Complete Exchange Mechanisms}\label{sec:mechanisms}

We study two mechanisms that achieve complete exchange, strategy-proofness, and respecting improvement: Chain Serial Dictatorship and Two-Stage Serial Dictatorship. 

Both mechanisms operate within a structural constraint we call an \textbf{assignment partition}. Intuitively, we partition divisions and workers into balanced groups so that no division can select its incumbent worker. Given this structural guarantee of complete exchange, the mechanisms build upon the Serial Dictatorship (SD) framework---where divisions sequentially select their most-preferred available option according to a priority order---but differ in how they endogenously determine these selection priorities:

\begin{itemize}
\item \textbf{Chain Serial Dictatorship} iteratively determines the final order through a chain of selections: the right to choose passes to the owner of the selected object, creating an adaptive sequence similar to a modified SD.
\item \textbf{Two-
stage Serial Dictatorship} uses two rounds of SD-like processes: in Stage 1, divisions make tentative nominations in an exogenous order to determine a final order; in Stage 2, a final SD is run using this final order to make binding assignments.
\end{itemize}

This SD-based structure ensures that, like traditional Serial Dictatorship, selections are made sequentially and bindingly, but with priorities adapted to incentivize truthful behavior and respect improvements under the CE constraint.

We begin by defining assignment partitions and stating a general efficiency notion relative to a given partition. We then describe Chain Serial Dictatorship and Two-
stage Serial Dictatorship.

\subsection{Assignment Partition}\label{subsec:a_partition}

To guarantee complete exchange structurally, we employ an assignment partition that partitions divisions and workers into groups such that no division can select a worker from its own group. This separation ensures that each division must choose from workers initially assigned to other groups, thereby eliminating the possibility of retaining one's own worker.

\begin{definition}[Assignment Partition]\label{def:a_partition}
Let $\mathcal{A}=(N, X, \succ)$ be an assignment problem. An \textbf{assignment partition} of $\mathcal{A}$ is a collection $\mathcal{D} = \left( (N_k, X_k) \right)_{k=1}^K$ of pairs such that:
\begin{itemize}
\item $\{N_k\}_{k=1}^K$ partitions the divisions $N$, $\{X_k\}_{k=1}^K$ partitions the workers $X$, and
\item for each group $k$: (i) $N_k \cap X_k = \emptyset$ (separation), and (ii) $|N_k| = |X_k|$ (balance).
\end{itemize}
\end{definition}

For notational convenience, we denote $n_k = |N_k|$ for each group $k$, and let $g(i)$ denote the group index for division $i \in N$.

An assignment partition can always be constructed when the grouping of divisions can be chosen freely. We illustrate this with two cases depending on the parity of the total number of divisions, $n$.

\begin{enumerate}
\item
  If $n=2k$ is even: Partition $N$ into two equal groups $N_1, N_2$ of size $k$ each, and set $X_1=N_2$ and $X_2=N_1$.
\item
  If $n=2k+1$ is odd: Partition $N$ into three groups with sizes $1, k, k$. Take any worker $j$ from the second group to form the worker sets appropriately.
\end{enumerate}

In practice, divisions are often partitioned by geographical or institutional constraints. For such predetermined partitions, an assignment partition exists if and only if no group contains more than half the total divisions.

\begin{proposition}\label{prop:a_partition-existence}
Let $N$ be a set of $n$ divisions. For any given partition $\{N_k\}_{k=1}^K$ of $N$ with $K \ge 2$, there exists a partition $(X_k)_{k=1}^K$ of $N$ such that $(N_k, X_k)_{k=1}^K$ is an assignment partition if and only if
$$\max_{k \in \{1,\dots,K\}} n_k \le \frac{n}{2}.$$
\end{proposition}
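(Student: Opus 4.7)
The plan is to split the proof into an easy necessity direction and a sufficiency direction obtained from Hall's marriage theorem applied to a well-chosen bipartite graph.

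For necessity, I would argue directly by counting. Suppose an assignment partition $(X_k)_{k=1}^K$ exists. Then the separation property $X_k \cap N_k = \emptyset$ forces $X_k \subseteq N \setminus N_k$, so combined with the balance property $|X_k| = n_k$ we obtain $n_k \le n - n_k$, i.e., $n_k \le n/2$. Taking the max over $k$ finishes this direction. (Note that $K \ge 2$ is used implicitly so that $N \setminus N_k$ is nonempty for at least one $k$, but the argument is otherwise immediate.)

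For sufficiency, assume $\max_k n_k \le n/2$. Let $g(i)$ denote the unique group index with $i \in N_{g(i)}$. Build a bipartite graph $B$ on vertex classes $N$ (``division side'') and $N$ (``target side''), placing an edge between $i$ on the left and $j$ on the right whenever $g(i) \ne g(j)$. A perfect matching $\pi$ in $B$ is a bijection $\pi \colon N \to N$ with $g(i) \ne g(\pi(i))$ for all $i$; setting $X_k := \pi^{-1}(N_k)$ then gives a partition of $N$ with $|X_k| = |N_k| = n_k$, and the edge condition yields $g(i) \ne k$ for every $i \in X_k$, i.e., $X_k \cap N_k = \emptyset$. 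So it suffices to verify Hall's condition for $B$.

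To verify Hall, fix $S \subseteq N$ on the left and let $\Gamma(S)$ denote its neighborhood on the right. Split into two cases. If $S \subseteq N_\ell$ for some $\ell$, then $\Gamma(S) = N \setminus N_\ell$, and the hypothesis gives $|\Gamma(S)| = n - n_\ell \ge n_\ell \ge |S|$. If instead $S$ meets two distinct groups $N_\ell$ and $N_{\ell'}$, then for any $j \in N$ at least one of $g(j) \ne \ell$, $g(j) \ne \ell'$ holds, so $j \in \Gamma(S)$; hence $\Gamma(S) = N$ and $|\Gamma(S)| = n \ge |S|$. Hall's theorem then produces the required matching and hence the required assignment partition.

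I do not anticipate a real obstacle here: the content of the statement is essentially a defect version of Hall's theorem, with the single-group subsets being the only tight family. The one thing worth being careful about is the two-group case in Hall's check, since a naive bound like ``$\Gamma(S) \supseteq N \setminus N_\ell$ for each $\ell$ met by $S$'' is weaker than needed for large $|S|$; the sharper observation that $\Gamma(S) = N$ as soon as $S$ spans two groups is what makes the verification clean.
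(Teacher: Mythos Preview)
Your proof is correct. The necessity argument is identical to the paper's. For sufficiency, however, you take a genuinely different route: you invoke Hall's marriage theorem on the bipartite graph with edges between $i$ and $j$ whenever $g(i)\ne g(j)$, and your two-case verification of Hall's condition (single-group versus multi-group $S$) is clean and complete. The paper instead gives a direct constructive proof via a ``Largest-First Construction'' algorithm: relabel so that the largest group is $N_1$, queue all workers with $N_1$'s workers in front, let groups $N_2,\dots,N_K$ draw sequentially from the front of the queue, and give the remainder to $N_1$. The paper explicitly notes that the Hall route is available but prefers the constructive one because it yields an $O(n)$ algorithm, whereas extracting the matching from Hall's theorem via general bipartite-matching algorithms costs $O(n^{2.5})$ or worse. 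So your argument is shorter and more transparent as an existence proof, while the paper's argument additionally delivers a linear-time construction and an explicit description of the resulting $X_k$'s.
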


\begin{proof}[Proof of Necessity]
Suppose an assignment partition exists. For any group $k$, the separation property requires $X_k \subseteq N \setminus N_k$, and the balance property requires $|X_k| = n_k$. Combining these, we get
$$n_k = |X_k| \le |N \setminus N_k| = n - n_k,$$
which simplifies to $2n_k \le n$. This must hold for all $k$, thus $\max_k n_k \le n/2$.
\end{proof}

While the full equivalence can be established using Hall's Marriage Theorem, we provide a direct and constructive proof.\footnote{Indeed, Hall's marriage problem can be solved by algorithms designed to find a maximum matching in a bipartite graph. Here, $V$ denotes the number of vertices and $E$ denotes the number of edges in the bipartite graph. Since our condition guarantees that a perfect matching exists, any such algorithm will return a perfect matching. However, applying a general-purpose algorithm results in a super-linear time complexity. For example, the Hopcroft-Karp algorithm finds a maximum matching in $O(E\sqrt{V})$ time \citep{hopcroftN52Algorithm1973}. In our setting, with $V=2n$ vertices and the number of edges $E$ potentially being $O(n^2)$, this complexity becomes $O(n^{2.5})$.} General-purpose algorithms for Hall's marriage problem achieve super-linear time complexity, but our constructive approach achieves linear time complexity.

The core idea is to process groups in order of size, using the largest group as a buffer to ensure no division gets its own worker. The process is as follows:
\begin{enumerate}
\item
  Find the largest group among all division groups.
\item
  Reindex groups so the largest group becomes ``Group 1'' and others follow in processing order.
\item
  Create a worker queue by placing all workers from Group 1 at the front, followed by workers from other groups.
\item
  Distribute sequentially: Each smaller group (Groups 2, 3, ...) takes workers from the front of the queue until it gets enough workers.
\item
  Give the remainder to the largest group: What's left goes to Group 1.
\end{enumerate}

The algorithm's correctness relies on the buffer strategy: by placing the largest group's workers at the front of the queue, they are consumed first by smaller groups. Since the total size of smaller groups is at least as large as the largest group (guaranteed by $\max_k n_k \le n/2$), they collectively absorb all workers from the largest group. When the largest group finally gets its assignment, only workers from other groups remain available, ensuring the separation property $N_k \cap X_k = \emptyset$ holds for all groups.

The formal proof of sufficiency is presented in Appendix \ref{app:a_partition-sufficiency}.

\begin{remark}\label{rem:a_partition-examples}
The examples presented after the definition of an assignment partition (partitioning into two or three groups) serve a dual purpose. Initially, they illustrate that a partition always exists when the grouping is not given. In the context of Proposition \ref{prop:a_partition-existence}, they represent a class of coarsest possible partitions. They are notable for satisfying the condition $\max_k n_k \le n/2$ with the minimum number of groups, effectively pushing the group sizes to their maximum feasible size. The case of an even $n$ is particularly illustrative as it meets the condition with equality, representing a boundary case for the existence of any assignment partition.
\end{remark}

Under such an assignment partition, we define efficiency and complete exchange mechanisms.

\subsection{Efficiency under Assignment Partition (EAP)}\label{subsec:efficiency-under-a_partition}

With the assignment partition structure in place, we define an efficiency concept. Since standard Pareto efficiency conflicts with complete exchange, we evaluate outcomes relative to the set of assignments permitted by a given partition $\mathcal{D}$.

For $S\subseteq N$, write $\mu(S):=\{\mu_i: i\in S\}$ for the image of $S$ under $\mu$.

\begin{definition}[Efficiency under Assignment Partition (EAP)]\label{def:eap}
Let $\mathcal{D} = (N_k, X_k)_{k=1}^K$ be a given assignment partition. An assignment $\mu$ is \textbf{efficient under $\mathcal{D}$} if it is feasible under $\mathcal{D}$ (i.e., $\mu(N_k) = X_k$ for all $k$) and there exists no other feasible assignment $\mu'$ under $\mathcal{D}$ such that $\mu'_i \succeq_i \mu_i$ for all $i \in N$ and $\mu'_j \succ_j \mu_j$ for some $j \in N$.

A mechanism $f$ satisfies \textbf{efficiency under assignment partition} $\mathcal{D}$ if for any preference profile $\succ$, the assignment $f(\succ)$ is efficient under $\mathcal{D}$.
\end{definition}

This definition establishes a hierarchy of efficiency criteria based on the coarseness of the assignment partition. A coarser partition (with larger, fewer groups) permits more potential trades, making the associated efficiency standard more demanding. Conversely, a finer partition makes the efficiency criterion weaker. For example, under the finest partition where each division forms its own group, any assignment is trivially efficient, as no trades are possible within any group.

\subsection{Chain Serial Dictatorship}\label{subsec:CSD}

Given a predetermined assignment partition, the Chain Serial Dictatorship mechanism operates through the following sequential process:

\begin{itemize}
\item The highest-priority division according to the exogenous priority order $\triangleright$ begins the process as the first active division.
\item At each step, the active division selects its most preferred worker from the set of workers assigned to its group, and is immediately assigned that worker. The chosen worker is then removed from the available pool.
\item The selection right passes to the original owner of the just-chosen worker---this chaining feature gives the mechanism its name.
\item If the original owner has already made its selection, the right instead passes to the highest-priority division among those that have not yet chosen.
\end{itemize}

The process continues until every division has selected exactly one worker.

\begin{definition}[Chain Serial Dictatorship]\label{def:CSD}
Let $\mathcal{A}=(N, X, \succ, \triangleright)$ be an assignment problem, and let $\left( N_k, X_k \right)_{k=1}^K$ be an assignment partition of $\mathcal{A}$. The \textbf{Chain Serial Dictatorship mechanism} $f^{C}: \mathcal{P} \to \mathcal{M}$ is defined by the following sequential process:

Initialization: Set $i_1 = \max_{\triangleright} N$ and $W_0 = \emptyset$.

For each step $t = 1, 2, \ldots, n$:
\begin{enumerate}
\item Selection: Division $i_t$ chooses worker:
   $w_t = \max_{\succ_{i_t}}(X_{g(i_t)} \setminus W_{t-1})$
\item Assignment: Set $f^{C}_{i_t}(\succ) = w_t$ and update $W_t = W_{t-1} \cup \{w_t\}$.
\item Next selector (if $t < n$): 
   $i_{t+1} = \begin{cases}
   o(w_t) & \text{if } o(w_t) \notin \{i_1, \ldots, i_t\} \\
   \max_{\triangleright}(N \setminus \{i_1, \ldots, i_t\}) & \text{otherwise}
   \end{cases}$
\end{enumerate}

The mechanism operates through two types of transitions:
\begin{itemize}
\item \textbf{owner-call}: When $o(w_t) \notin \{i_1, \ldots, i_t\}$, the right to choose passes to the owner of the selected worker, creating a chain of selections.
\item \textbf{fallback}: When $o(w_t) \in \{i_1, \ldots, i_t\}$ (i.e., the owner has already moved), the next selector is determined by the exogenous priority $\triangleright$ among remaining divisions.
\end{itemize}
\end{definition}

The assignment partition structure guarantees that no division can retain its own worker, since each division's worker belongs to a different group than the division itself (separation property). This ensures complete exchange is achieved by construction.

\begin{remark}[Conceptual Distinction from YRMH-IGYT]\label{rem:YRMH-distinction}
The chaining rule in C-SD appears similar to the ``You Request My House, I Get Your Turn'' principle of YRMH-IGYT. However, this surface-level similarity conceals a difference in their underlying procedural approaches.

YRMH-IGYT operates on tentative pointings. A division's ``choice'' is a declaration of intent, and assignments are finalized only when a cycle of these pointings is formed and cleared. This logic of resolving trades via cycles makes YRMH-IGYT a procedural variant of Top Trading Cycles.

In contrast, C-SD operates on final selections. Each choice is an irrevocable, binding assignment that removes the selected worker from the pool. The ``turn'' that is passed is not for a tentative pointing but for the next binding assignment in the sequence. This logic of creating a final assignment via an ordered sequence of binding choices makes C-SD a sequential variant of Serial Dictatorship.

This distinction clarifies their intellectual heritage: while both mechanisms use a ``request-and-get-turn'' process, one points toward the exchange-based logic of TTC, and the other toward the priority-based logic of SD.
\end{remark}

\subsection{Two-Stage Serial Dictatorship (T-SD)}\label{subsec:TSD}

Given a predetermined assignment partition, the Two-Stage Serial Dictatorship mechanism operates as follows:

\begin{itemize}
\item Stage 1: Run serial dictatorship under the assignment partition using exogenous priority $\triangleright$ to determine tentative nominations. The sequence of owners of nominated workers creates a final order $\triangleright^*$ for Stage 2.
\item Stage 2: Run serial dictatorship under the assignment partition using the final order $\triangleright^*$ to determine final assignments.
\end{itemize}

\begin{definition}[Two-Stage Serial Dictatorship (T-SD)]\label{def:TSD}
Let $\mathcal{A}=(N, X, \succ, \triangleright)$ be an assignment problem, and let $\left( N_k, X_k \right)_{k=1}^K$ be an assignment partition of $\mathcal{A}$. The \textbf{Two-Stage Serial Dictatorship (T-SD) mechanism} $f^{T}: \mathcal{P} \to \mathcal{M}$ is defined by the following Two-Stage process:

Stage 1 (Priority Determination): 
\begin{enumerate}
\item Order divisions by $\triangleright$: $i_1 \triangleright i_2 \triangleright \cdots \triangleright i_n$
\item For each $t = 1,\ldots,n$: Division $i_t$ nominates worker:
   $w_t = \max_{\succ_{i_t}}(X_{g(i_t)} \setminus \{w_1, \ldots, w_{t-1}\})$
\item Construct final order: $\triangleright^* = (o(w_1), o(w_2), \ldots, o(w_n))$
\end{enumerate}

Stage 2 (Final Assignment):
\begin{enumerate}
\item Order divisions by $\triangleright^*$: $j_1 \triangleright^* j_2 \triangleright^* \cdots \triangleright^* j_n$
\item For each $t = 1,\ldots,n$: Division $j_t$ is assigned worker:
   $f^{T}_{j_t}(\succ) = \max_{\succ_{j_t}}(X_{g(j_t)} \setminus \{f^{T}_{j_1}(\succ), \ldots, f^{T}_{j_{t-1}}(\succ)\})$
\end{enumerate}
\end{definition}

Like C-SD, the assignment partition structure guarantees that no division can retain its own worker, ensuring complete exchange is achieved by construction.

\section{Properties of Chain Serial Dictatorship and Two-Stage Serial Dictatorship}\label{sec:properties-csd-tsd}

Fix an assignment problem $\mathcal{A}=(N, X, \succ, \triangleright)$ and an assignment partition $(N_k, X_k)_{k=1}^K$. Our main results establish that both Chain Serial Dictatorship and Two-Stage Serial Dictatorship satisfy three key properties under the given assignment partition:

\begin{theorem}\label{thm:CSD-properties}
Chain Serial Dictatorship satisfies strategy-proofness, respecting improvement, and efficiency under the assignment partition.
\end{theorem}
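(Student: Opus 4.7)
The plan is to tackle the three properties separately, treating efficiency under the partition and strategy-proofness as relatively clean adaptations of the standard serial-dictatorship template, and reserving the bulk of the work for respecting improvement. Throughout, I will exploit the fact that the assignment partition turns every run of C-SD into an ordered sequence of binding choices from the within-group choice sets $X_{g(\cdot)}$.

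For efficiency under the partition, my plan is a step-by-step induction along the endogenous order $i_1,\ldots,i_n$ produced by C-SD at $\succ$. Suppose a feasible $\mu'$ weakly Pareto dominates $f^C(\succ)$. At step $t$, feasibility gives $\mu'_{i_t}\in X_{g(i_t)}$, while the bijection together with the inductive hypothesis $\mu'_{i_s}=w_s$ for $s<t$ gives $\mu'_{i_t}\notin W_{t-1}$, and dominance gives $\mu'_{i_t}\succeq_{i_t} w_t$. Since $w_t=\max_{\succ_{i_t}}(X_{g(i_t)}\setminus W_{t-1})$, these three facts force $\mu'_{i_t}=w_t$, so $\mu'=f^C(\succ)$ and no strict improvement can exist.

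For strategy-proofness, the central observation is that the prefix of the chain up to the step at which $i$ first becomes active---both the list of selectors $i_1,\ldots,i_{t-1}$ and the set $W_{t-1}$ of already-taken workers---depends only on $\succ_{-i}$. Every owner-call and every fallback transition before step $t$ is resolved by the previous selector's pick, and since no earlier selector equals $i$, no earlier pick depends on $\succ_i$. Hence $i$ is called at the same step $t$ under any report $\succ'_i$ and is offered the same menu $X_{g(i)}\setminus W_{t-1}$; choosing its $\succ_i$-best element of that menu is exactly what truthful reporting does.

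The crux is respecting improvement. My plan is to couple the runs of C-SD under $\succ$ and under an improvement $\succ'$ for $i$ and locate the first step $T$ at which the two runs diverge. Before step $T$ the runs are identical, so the same selector $j_T$ acts at step $T$ under both profiles; since only $j_T$'s ranking of worker $i$ can differ across $\succ$ and $\succ'$ (and only upward), divergence at step $T$ is possible only when $j_T\neq i$ and $j_T$ picks worker $i$ under $\succ'$ while picking some $w\neq i$ under $\succ$. If no such $T$ exists, or if $i$ has already chosen before step $T$ (so the identical prefix forces identical picks for $i$), then $f^C_i(\succ')=f^C_i(\succ)$. Otherwise the owner-call after $j_T$'s pick of $i$ brings $i$ forward to step $T+1$ under $\succ'$, while under $\succ$ division $i$ is called only at some step $t\ge T+1$ (it cannot be called at step $T$ since $j_T\neq i$). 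The key set inclusion is $W'_{T}\cap X_{g(i)}=W_{T-1}\cap X_{g(i)}\subseteq W_{t-1}\cap X_{g(i)}$, where the first equality uses separation (worker $i$ does not lie in $X_{g(i)}$) and the second uses $W_{t-1}\supseteq W_T\supseteq W_{T-1}$. Hence the menu $X_{g(i)}\setminus W'_{T}$ offered to $i$ under $\succ'$ weakly contains the menu $X_{g(i)}\setminus W_{t-1}$ offered under $\succ$, so $i$'s top choice can only improve. The main obstacle is making this coupling airtight: verifying that selectors (not just picks) agree through step $T$ under both profiles, checking that the owner-call target $o(i)=i$ is indeed the next selector at step $T+1$ under $\succ'$ (which requires $i$ not yet called), and confirming that the case split---no divergence, divergence before $i$'s turn, divergence after $i$'s turn---is exhaustive. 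Once the coupling is pinned down, the set-inclusion arithmetic is routine.
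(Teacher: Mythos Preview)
Your argument is correct and the core ideas match the paper's: for EAP you use the standard step-by-step serial-dictatorship induction, for SP you use the observation that the prefix up to $i$'s turn depends only on $\succ_{-i}$, and for RI you couple the two runs, locate the first divergence step, argue it must be some $j_T\neq i$ switching to worker $i$ under $\succ'$, and then compare menus using separation. These are exactly the arguments the paper uses.

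The one organizational difference is that the paper first abstracts C-SD (together with T-SD) into an ``Endogenous Order Serial Dictatorship'' form, proves a general Predecessor Preservation Lemma (their Lemma~4.1), and then verifies that C-SD satisfies that lemma's hypotheses via the same coupling you describe. You bypass the abstraction and carry out the menu comparison $W'_T\cap X_{g(i)}=W_{T-1}\cap X_{g(i)}\subseteq W_{t-1}\cap X_{g(i)}$ directly. Your route is slightly more elementary and arguably tighter for this single theorem (you never need to translate the worker-selection time $\tau_C^i$ into a statement about division $i$'s position in the final order, a step the paper handles somewhat informally); the paper's abstraction, on the other hand, buys a unified proof covering T-SD as well.
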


\begin{theorem}\label{thm:TSD-properties}
Two-Stage Serial Dictatorship satisfies strategy-proofness, respecting improvement, and efficiency under the assignment partition.
\end{theorem}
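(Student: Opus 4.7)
The plan is to exploit how the assignment partition makes T-SD decompose into parallel, group-by-group Serial Dictatorships. In Stage 1, a division $i_t \in N_k$ nominates from $X_{g(i_t)} = X_k$, and any nomination made earlier by a division in $N_{k'}$ with $k' \neq k$ lies in $X_{k'}$, which is disjoint from $X_k$; hence the Stage-1 subsequence restricted to $N_k$ is exactly SD on $X_k$ ordered by $\triangleright|_{N_k}$. Writing $\ell(i)$ for the unique group with $i \in X_{\ell(i)}$, the position of any $j \in N_k$ in $\triangleright^*$ is the Stage-1 step at which worker $j$ is nominated, and that nomination is made by a division in $N_{\ell(j)}$ (with $\ell(j) \neq k$) using preferences over $X_{\ell(j)}$. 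Stage 2 likewise decomposes: within each $N_k$ it is SD on $X_k$ ordered by $\triangleright^*|_{N_k}$. I would first establish these two decomposition lemmas, after which EAP is immediate, since SD within each group is Pareto efficient for divisions in $N_k$ over $X_k$, and feasibility under $\mathcal{D}$ permits only within-group reassignments.

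For strategy-proofness I would show that division $i$'s report affects its outcome only through $i$'s own Stage-2 pick. By the decomposition, the position of every $j \in N_{g(i)}$ in $\triangleright^*$ depends only on reports of divisions outside $N_{g(i)}$ (those in $N_{\ell(j)}$), so $\triangleright^*|_{N_{g(i)}}$ is invariant to $i$'s report. The Stage-2 picks of the other divisions in $N_{g(i)}$ depend only on $\triangleright^*|_{N_{g(i)}}$ and on their preferences over $X_{g(i)}$, which exclude worker $i$ and are unaffected by $i$'s report. Hence the menu facing $i$ at its fixed position in $\triangleright^*|_{N_{g(i)}}$ is invariant to $i$'s report, and the standard SD argument makes truthful reporting weakly optimal for $i$.

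For respecting improvement, let $\succ'$ be an improvement for $i$. It can only alter Stage-1 nominations inside $N_{\ell(i)}$, since for $k' \neq \ell(i)$ we have $i \notin X_{k'}$, so $i$'s ranking is irrelevant to $X_{k'}$-valued picks. Within $N_{\ell(i)}$, standard SD monotonicity yields (a) worker $i$'s nomination step weakly decreases and (b) every other $X_{\ell(i)}$-worker's step weakly increases. Translating to ranks: for $i'' \in N_{g(i)} \setminus \{i\}$ with $\ell(i'') \neq \ell(i)$ the step is unchanged, and for $\ell(i'') = \ell(i)$ the step weakly increases by (b); combined with (a), the set of $N_{g(i)}$-divisions nominated before $i$ can only shrink, so $i$'s rank in $\triangleright^*|_{N_{g(i)}}$ weakly improves. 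Since Stage 2 within $N_{g(i)}$ is SD on $X_{g(i)}$ with preferences over $X_{g(i)}$ unchanged and $i$'s priority weakly better, monotonicity of SD in priority yields $f^T_i(\succ') \succeq_i f^T_i(\succ)$. The subtlest step is this RI rank-translation, which requires combining the pointwise step inequalities for $i$ and each relevant competitor into a genuine set inclusion, relying on the standard but nontrivial delay-monotonicity of SD; SP and EAP, by contrast, follow almost mechanically once the decomposition is in place.
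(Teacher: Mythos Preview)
Your decomposition into group-wise Serial Dictatorship and the SP and EAP arguments are correct and essentially match the paper's route (the paper packages the same ideas via an abstract ``Endogenous Order SD'' form and a Predecessor Preservation lemma). The RI argument, however, has a genuine gap in precisely the step you flag as subtlest.

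Claim (b) is false: under an improvement for worker $i$, it is \emph{not} true that every other $X_{\ell(i)}$-worker's nomination step weakly increases. Take four agents $1,2,3,4$ running SD on $\{a,b,c,i\}$ in that order, with
\[
1\colon a\succ b\succ i\succ c,\quad 2\colon b\succ i\succ a\succ c,\quad 3\colon i\succ c\succ a\succ b,\quad 4\colon c\succ a\succ b\succ i.
\]
The run is $(a,b,i,c)$, so $c$ is nominated at step $4$. Now raise $i$ to the top for agents $1$ and $2$ (leaving the relative order of $a,b,c$ unchanged): the run becomes $(i,b,c,a)$, and $c$ is nominated at step $3$. Its step \emph{decreased}. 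This also exposes a second gap in your final step: even granting that the predecessor set of $i$ in $\triangleright^*|_{N_{g(i)}}$ shrinks, ``monotonicity of SD in priority'' does not follow from set inclusion alone. If the surviving predecessors appear in a different relative order, their Stage-2 picks can differ and $i$'s menu need not expand.

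Both gaps are closed by replacing (b) with a sharper observation: the two Stage-1 sub-SD runs on $X_{\ell(i)}$ are \emph{identical at every step strictly before the step at which $i$ is nominated in the improved run}. (Easy induction: at any such step the remaining sets coincide, the mover does not pick $i$ under $\succ'$, and since $i$ can only be lower under $\succ$ while the ranking of non-$i$ workers is unchanged, the mover's top is the same under both profiles.) Hence every new predecessor $j\in N_{g(i)}$ of $i$ satisfies $\tau'_j=\tau_j$, which yields both the predecessor-set inclusion \emph{and} preservation of their relative order. Since $i\notin X_{g(i)}$, these predecessors' preferences over $X_{g(i)}$ are also unchanged, so their Stage-2 picks coincide and $i$'s menu weakly expands, giving $f^T_i(\succ')\succeq_i f^T_i(\succ)$. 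This is exactly the content of the paper's Predecessor Preservation lemma applied with the roles of $\succ$ and $\succ'$ swapped.
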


Our approach reinterprets both mechanisms within a unified form called ``Endogenous Order Serial Dictatorship,'' which separates selection order determination from assignment execution. We first establish general sufficient conditions for each property for mechanisms with this form, then demonstrate that both Chain Serial Dictatorship and Two-
stage Serial Dictatorship satisfy these conditions, thereby proving the theorems above.

\subsection{Endogenous Order Serial Dictatorship form}\label{subsec:eosd}

Mechanisms with Endogenous Order Serial Dictatorship form operate through a conceptual Two-Stage process:
\begin{enumerate}
\item Determine a selection order among divisions based on reported preferences.
\item Execute serial dictatorship within each group using this final selection order.
\end{enumerate}

This form captures mechanisms where the priority order for final assignment emerges from the preference profile itself, creating proper incentives while maintaining efficiency within groups.

\begin{definition}[Endogenous Order Serial Dictatorship Form]\label{def:eosd}
A mechanism $f$ admits an \textbf{Endogenous Order Serial Dictatorship (EO-SD) form} if there exists a \textit{final selection order rule} $\rho: \mathcal{P} \to \{\text{linear orders on } N\}$ such that $f(\succ)$ is obtained by: (1) computing the \textit{final selection order} $\triangleright^{*}:=\rho(\succ)$, then (2) running serial dictatorship within each group $k$ using priorities $\triangleright^{*}_k:=\triangleright^*|_{N_k}$.
\end{definition}

A key feature of the EO-SD form is its separability across groups, stemming from the assignment partition's disjoint choice sets. Consequently, the assignment process within each group $k$ is independent of others, as selections in group $j \neq k$ do not affect $X_k$. This implies that, for any given set of within-group orders $(\triangleright^*_k)_{k=1}^K$, the final assignment remains invariant to the interleaving of the $K$ group Serial Dictatorships-whether run in parallel, sequentially, or any mixed fashion. This invariance allows us to analyze efficiency and incentives for each group in isolation, simplifying the proofs while preserving outcome equivalence.

We distinguish between two types of priority orders: (i) \textit{exogenous priority} $\triangleright$ (given as input to the mechanism) and (ii) \textit{final selection order} $\triangleright^*$ (endogenously determined by the mechanism through the final selection order rule $\rho$). The final selection order rule maps preference profiles to linear orders, while the final selection order is the specific order determined for a given preference profile. For brevity, we subsequently refer to the final selection order rule as the final order rule and its output as the final order.

We first establish that any mechanism with EO-SD form guarantees efficiency under assignment partition. This follows directly from the fact that Stage 2 applies serial dictatorship within each group, and serial dictatorship is Pareto efficient within fixed choice sets.

\begin{proposition}[EAP in EO-SD form]\label{prop:eosd-eap}
Any mechanism that admits an EO-SD form is efficient under the assignment partition.
\end{proposition}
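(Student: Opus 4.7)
The plan is to exploit the separability of the EO-SD form across groups of the assignment partition, and then invoke the classical Pareto efficiency of serial dictatorship within each group. I will first verify that the output is feasible under $\mathcal{D}$, and then show that any hypothetical Pareto-improving feasible assignment would yield a Pareto improvement on the within-group SD outcome of some group $k$, contradicting the known efficiency of SD on a fixed choice set.

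Concretely, write $\mu=f(\succ)$ and let $\triangleright^*=\rho(\succ)$ with within-group restrictions $\triangleright^*_k=\triangleright^*|_{N_k}$. For feasibility, observe that Stage 2 of the EO-SD form runs serial dictatorship within group $k$ on the choice set $X_k$. Because $|N_k|=|X_k|$ (balance) and SD assigns distinct workers, every worker in $X_k$ is eventually selected; hence $\mu(N_k)=X_k$ for all $k$, so $\mu$ is feasible under $\mathcal{D}$.

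Next I use the component-wise structure to reduce efficiency under $\mathcal{D}$ to within-group efficiency. Feasibility under $\mathcal{D}$ means precisely that $\mu'|_{N_k}$ is a bijection from $N_k$ to $X_k$ for every $k$. Suppose, toward a contradiction, that some feasible $\mu'$ satisfies $\mu'_i\succeq_i\mu_i$ for all $i\in N$ and $\mu'_j\succ_j\mu_j$ for some $j$; let $k^*=g(j)$. Restricting to $N_{k^*}$, the bijection $\mu'|_{N_{k^*}}:N_{k^*}\to X_{k^*}$ weakly Pareto dominates $\mu|_{N_{k^*}}:N_{k^*}\to X_{k^*}$ with a strict improvement for $j$. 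This contradicts the Pareto efficiency of serial dictatorship on the fixed choice set $X_{k^*}$ with priorities $\triangleright^*_{k^*}$, a standard fact I would cite (or, if needed, sketch by the usual argument: the first division in $\triangleright^*_{k^*}$ already receives its $\succ$-maximum in $X_{k^*}$, so any weak improvement must leave its assignment unchanged; removing it reduces to the same problem on $N_{k^*}\setminus\{\cdot\}$ and $X_{k^*}\setminus\{\cdot\}$, and induction finishes).

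There is no real obstacle: the separation and balance conditions of the assignment partition collapse aggregate feasibility and aggregate Pareto dominance into their within-group counterparts, and then the efficiency of SD on a fixed choice set closes the argument. The only subtlety worth flagging in the write-up is the decomposition step, where one must explicitly note that a Pareto improvement on the joint assignment forces a Pareto improvement within the group containing the strictly benefiting division $j$, which is legitimate precisely because the choice sets $X_1,\ldots,X_K$ are disjoint and each $\mu'(N_k)$ must equal $X_k$.
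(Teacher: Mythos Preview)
Your proposal is correct and follows essentially the same approach as the paper: both exploit the product structure of feasible assignments under $\mathcal{D}$, restrict a hypothetical Pareto improvement to the group $g(j)$ containing a strictly benefiting division, and obtain a contradiction with the Pareto efficiency of serial dictatorship on the fixed choice set $X_{g(j)}$. Your write-up is slightly more explicit than the paper's in verifying feasibility and in sketching the inductive proof of SD efficiency, but the argument is the same.
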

\begin{proof}
Let $f$ be a mechanism that admits an \text{EO-SD} form and $\mu = f(\succ)$. By the assignment partition constraint, the feasible set factorizes as $\prod_{k=1}^K \{\nu: N_k \to X_k \text{ bijective}\}$. Within each group $k$, Stage 2 applies serial dictatorship, so $\mu|_{N_k}$ is Pareto efficient among divisions in $N_k$ with respect to workers in $X_k$ \citep{svenssonStrategyproofAllocationIndivisible1994}.

Suppose for contradiction that $\mu$ is Pareto dominated under the partition by some feasible $\mu'$. Then $\mu'_i \succeq_i \mu_i$ for all $i$ with $\mu'_j \succ_j \mu_j$ for some $j$. By the product structure, $\mu'|_{N_k}$ is feasible for each group $k$. Since some division $j$ strictly prefers $\mu'_j$ to $\mu_j$, we have $\mu'|_{N_{g(j)}} \neq \mu|_{N_{g(j)}}$, yet $\mu'|_{N_{g(j)}}$ weakly Pareto dominates $\mu|_{N_{g(j)}}$ within group $g(j)$, contradicting the Pareto efficiency of serial dictatorship within that group.
\end{proof}

With EAP established for any mechanism in EO-SD form, we now turn to the incentive properties of strategy-proofness and respecting improvement. Both properties hinge on how a division's final assignment is affected by changes in the preference profile. Our analysis rests on a unified insight: in a serial dictatorship, a division's outcome depends only on which divisions choose before it and which workers they choose.

To formalize this, we define the set of predecessors for any division/worker in an ordered list. For a linear order $\triangleright$ on a finite set $S$ and $i\in S$, let $\mathrm{pos}(i,\triangleright)$ be $i$'s position in $\triangleright$ and $$\mathrm{pre}(i,\triangleright):=\{j\in S: \mathrm{pos}(j,\triangleright)<\mathrm{pos}(i,\triangleright)\}$$ its predecessor set.

The following lemma shows that if a division's set of predecessors in the endogenous order expands, while their relative order and their preferences over the relevant choice set remain unchanged, its own assignment can only (weakly) worsen. This ``predecessor preservation'' principle will be the cornerstone for proving both SP and RI.

\begin{lemma}[Predecessor Preservation in EO-SD form]\label{lem:eosd-predecessor-preservation}
Let $f$ be a mechanism with \text{EO-SD} form with final order rule $\rho$. Consider any division $i \in N$ and two preference profiles $\succ$ and $\succ'$. Let $\triangleright^* = \rho(\succ)$, $\triangleright' = \rho(\succ')$, and $k = g(i)$.

If (i) $\mathrm{pre}(i,\triangleright^*_k) \subseteq \mathrm{pre}(i,\triangleright'_k)$, (ii) $\triangleright^*_k|_{\mathrm{pre}(i,\triangleright^*_k)} = \triangleright'_k|_{\mathrm{pre}(i,\triangleright^*_k)}$, and (iii) for all $j \in \mathrm{pre}(i,\triangleright^*_k)$, $\succ_j|_{X_k} = \succ'_j|_{X_k}$, then $f_i(\succ) \succeq_i f_i(\succ')$.
\end{lemma}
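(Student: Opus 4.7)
The plan is to exploit the separability of the EO-SD form: within group $k=g(i)$, the within-group serial dictatorship run depends only on $\triangleright^*_k$ (or $\triangleright'_k$) and on the $X_k$-restrictions of preferences of divisions in $N_k$, so I can ignore all other groups entirely. Write $P:=\mathrm{pre}(i,\triangleright^*_k)$ and $P':=\mathrm{pre}(i,\triangleright'_k)$, and let $W$, $W'$ be the sets of workers already taken at the moment $i$ moves, under $\succ$ and $\succ'$ respectively. The whole lemma then reduces to showing $W\subseteq W'$: granting this, $f_i(\succ')$ lies in $X_k\setminus W'\subseteq X_k\setminus W$, and since $f_i(\succ)$ is by definition the $\succ_i$-maximum of $X_k\setminus W$, the desired inequality $f_i(\succ)\succeq_i f_i(\succ')$ is immediate. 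Note that this reduction uses only $\succ_i$ and so requires no assumption linking $\succ_i$ and $\succ'_i$.

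To prove $W\subseteq W'$, I would enumerate $P$ as $j_1\triangleright^*_k\cdots\triangleright^*_k j_m$, let $w_t$ be the worker $j_t$ picks under $\succ$ (so $W=\{w_1,\ldots,w_m\}$), and observe that by condition (ii) the sequence $j_1,\ldots,j_m$ appears in $\triangleright'_k$ in the same relative order and, by $P\subseteq P'$, still before $i$---but now with divisions from $P'\setminus P$ possibly interleaved. Letting $B_t$ denote the set of workers chosen under $\succ'$ by the time $j_t$ has just completed its move, the key step is the inductive claim $\{w_1,\ldots,w_t\}\subseteq B_t$ for each $t=1,\ldots,m$. The inductive step splits on whether $w_t$ has already been taken by the time $j_t$ is about to move in $\triangleright'_k$: if yes, then $w_t$ was taken by some earlier division (necessarily in $P'\setminus P$) and so $w_t\in B_t$; if no, then conditions (ii)--(iii) plus the inductive hypothesis force $j_t$ to pick $w_t$ again under $\succ'$, because $j_t$'s $\succ'$-available set is contained in $X_k\setminus\{w_1,\ldots,w_{t-1}\}$ (the set on which $w_t$ is the $\succ_{j_t}$-maximum) and still contains $w_t$, while $\succ_{j_t}|_{X_k}=\succ'_{j_t}|_{X_k}$. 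Once the induction is completed at $t=m$, any further $P'\setminus P$-divisions sitting between $j_m$ and $i$ in $\triangleright'_k$ only grow $W'$, so $W\subseteq B_m\subseteq W'$, closing the argument.

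The main obstacle is precisely that the obvious simpler claim---``each $j\in P$ picks the same worker under $\succ$ as under $\succ'$''---is false: an interloper from $P'\setminus P$ may grab $w_t$ before $j_t$'s turn in $\triangleright'_k$, forcing $j_t$ to switch to a different worker. The fix is to track set inclusion rather than pointwise equality of picks, so that whenever a $P$-division is displaced from its $\succ$-choice, the displacer has already contributed that same worker to $W'$ on its behalf. This bookkeeping---carried out by the two-case induction on the $P$-indexed moves inside $\triangleright'_k$---is the only nontrivial part of the argument; everything else is a routine unpacking of the EO-SD definition.
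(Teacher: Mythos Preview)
Your proof is correct and follows the same strategy as the paper's: restrict to group $k$, show that the set of workers already taken when $i$ moves under $\succ'$ contains the corresponding set under $\succ$, and conclude from the serial-dictatorship structure. The paper's own proof is terser and in fact asserts that the common predecessors ``make identical selections under both profiles''---a claim that, as you rightly flag, can fail when divisions from $P'\setminus P$ are interleaved in $\triangleright'_k$ and pre-empt some $w_t$; your set-inclusion induction $\{w_1,\dots,w_t\}\subseteq B_t$ is exactly the patch that makes the argument watertight for arbitrary strict inclusion $P\subsetneq P'$, whereas the paper's phrasing is only literally correct in the equality case $P=P'$ (which is what arises in its SP application).
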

\begin{proof}
In a mechanism with EO-SD form, Stage 1 determines the final order, and Stage 2 applies serial dictatorship within each group using this order. 

By conditions (ii) and (iii), the common predecessors $\mathrm{pre}(i,\triangleright^*_k)$ maintain their relative order and make identical selections under both profiles (same order by condition (ii), same preferences over $X_k$ by condition (iii)).

By the first condition, division $i$ has at least as many predecessors under $\succ'$ as under $\succ$. When division $i$'s turn arrives in the serial dictatorship within group $k$, the choice set under $\succ'$ is a subset of the choice set under $\succ$ (since more predecessors select first). Therefore, $f_i(\succ) \succeq_i f_i(\succ')$.
\end{proof}

Strategy-proofness ensures that divisions cannot benefit from misreporting their preferences. By verifying that the predecessor preservation conditions hold when any division switches from misreporting to truth-telling, we can establish strategy-proofness.

\begin{corollary}[SP in EO-SD form]\label{cor:sp-eosd-framework}
If a mechanism $f$ with \text{EO-SD} form satisfies the conditions of Lemma \ref{lem:eosd-predecessor-preservation} whenever any division $i \in N$ switches from a misreport $\succ'_i$ to truth-telling $\succ_i$ (with $\succ' = (\succ'_i, \succ_{-i})$ and $\succ = (\succ_i, \succ_{-i})$), then $f$ is strategy-proof. That is, $f_i(\succ) \succeq_i f_i(\succ')$.
\end{corollary}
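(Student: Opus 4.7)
The plan is to observe that this corollary is a direct repackaging of Lemma \ref{lem:eosd-predecessor-preservation} applied to the specific transition from a misreport to truth-telling, so the proof will essentially amount to invoking the lemma after unpacking definitions.

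First, I would fix an arbitrary division $i \in N$, an arbitrary preference profile $\succ$, and an arbitrary alternative preference $\succ'_i$. Following the statement of strategy-proofness in Definition \ref{def:SP}, the goal is to prove $f_i(\succ) \succeq_i f_i(\succ'_i, \succ_{-i})$. To align the notation with Lemma \ref{lem:eosd-predecessor-preservation}, I would set $\succ = (\succ_i, \succ_{-i})$ and $\succ' = (\succ'_i, \succ_{-i})$; note that these two profiles differ only in coordinate $i$.

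Next, by the hypothesis of the corollary, when division $i$ switches from the misreport $\succ'_i$ to truth-telling $\succ_i$---that is, passing from the profile $\succ'$ to the profile $\succ$---the three conditions of Lemma \ref{lem:eosd-predecessor-preservation} are satisfied. Thus, writing $\triangleright^* = \rho(\succ)$, $\triangleright' = \rho(\succ')$, and $k = g(i)$, we have $\mathrm{pre}(i,\triangleright^*_k) \subseteq \mathrm{pre}(i,\triangleright'_k)$, $\triangleright^*_k|_{\mathrm{pre}(i,\triangleright^*_k)} = \triangleright'_k|_{\mathrm{pre}(i,\triangleright^*_k)}$, and $\succ_j|_{X_k} = \succ'_j|_{X_k}$ for every predecessor $j \in \mathrm{pre}(i,\triangleright^*_k)$. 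The third condition is moreover automatic for any $j \neq i$, since $\succ$ and $\succ'$ coincide outside of coordinate $i$, and $i$ itself is not in $\mathrm{pre}(i,\triangleright^*_k)$; this is the easy part of verifying the hypothesis.

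Applying Lemma \ref{lem:eosd-predecessor-preservation} to the pair $(\succ', \succ)$ then yields $f_i(\succ) \succeq_i f_i(\succ')$, which is exactly the required strategy-proofness inequality. Since $i$, $\succ$, and $\succ'_i$ were arbitrary, the conclusion follows. There is no substantive obstacle to this particular argument: the corollary is essentially a notational specialization of the lemma that isolates the one-deviation case relevant for SP. The real work will be in the subsequent subsections, where one must verify the hypothesis---that the three predecessor-preservation conditions hold under a single-division deviation---separately for Chain Serial Dictatorship and Two-Stage Serial Dictatorship.
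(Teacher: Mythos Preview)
Your proof is correct and matches the paper's treatment, which states the corollary as an immediate consequence of Lemma~\ref{lem:eosd-predecessor-preservation} without a separate proof. One minor wording slip: you write ``applying the lemma to the pair $(\succ', \succ)$,'' but since your setup already matches the lemma's notation with $\succ$ as the truthful profile and $\succ'$ as the misreport, the lemma applies directly (no role-swap) to yield $f_i(\succ) \succeq_i f_i(\succ')$; the reversed order is only needed in the RI case.
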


Respecting improvement ensures that when a division's worker's ranking improves in other divisions' preferences, the division's assignment does not worsen. By verifying that the predecessor preservation conditions hold for all improvements as defined in Definition \ref{def:RI}, we can establish respecting improvement.

\begin{corollary}[RI in EO-SD form]\label{cor:ri-eosd-framework}
If a mechanism $f$ with \text{EO-SD} form satisfies the conditions of Lemma \ref{lem:eosd-predecessor-preservation} whenever $\succ'$ is an improvement for division $i \in N$ with respect to $\succ$ (as defined in Definition \ref{def:RI}), then $f$ satisfies respecting improvement. That is, $f_i(\succ') \succeq_i f_i(\succ)$.
\end{corollary}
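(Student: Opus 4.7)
The plan is to derive this corollary as a direct instantiation of Lemma \ref{lem:eosd-predecessor-preservation}, taking care to apply the lemma with the two profiles in the order dictated by the RI conclusion. The lemma concludes $f_i(\succ) \succeq_i f_i(\succ')$ under its three conditions, whereas RI requires the reverse inequality $f_i(\succ') \succeq_i f_i(\succ)$ when $\succ'$ is an improvement for $i$ with respect to $\succ$. I therefore invoke the lemma after a role swap: the improved profile $\succ'$ is passed into the first slot (the lemma's ``$\succ$'') and the original profile $\succ$ into the second slot (the lemma's ``$\succ'$'').

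Carrying out the plan, fix any division $i \in N$ and any profiles $\succ, \succ'$ such that $\succ'$ is an improvement for $i$ with respect to $\succ$. Set $k = g(i)$, $\triangleright^{*} = \rho(\succ')$, and $\triangleright' = \rho(\succ)$. The corollary's standing hypothesis says that, in this relabeling, the three predecessor-preservation conditions of the lemma hold: (i) $\mathrm{pre}(i, \triangleright^{*}_k) \subseteq \mathrm{pre}(i, \triangleright'_k)$, so $i$ has no more predecessors under the improved profile than under the original---this is the intuitive content of RI, that a worker's improvement can only promote its owner in the endogenous order; (ii) the two restricted orders agree on the common (smaller) predecessor set; and (iii) every $j \in \mathrm{pre}(i, \triangleright^{*}_k)$ has identical preferences over $X_k$ across $\succ$ and $\succ'$. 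Applying Lemma \ref{lem:eosd-predecessor-preservation} then yields $f_i(\succ') \succeq_i f_i(\succ)$, which is exactly the RI inequality.

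There is no real obstacle internal to this corollary; it is a faithful repackaging of the lemma in the RI direction, and the only nontrivial step is correctly bookkeeping the role swap. The substantive work---checking that the hypothesized predecessor-preservation conditions indeed hold---lies outside this corollary, in tracing how an RI-type improvement propagates through the endogenous order rule $\rho(\cdot)$ of specific EO-SD mechanisms, namely the chain rule of C-SD and the nominate-then-assign rule of T-SD. That mechanism-specific verification, not this corollary, is where the hard work will be.
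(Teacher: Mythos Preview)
Your proposal is correct and matches the paper's own treatment: the corollary is a direct instantiation of Lemma~\ref{lem:eosd-predecessor-preservation} with the role swap you describe, and the paper itself flags exactly this point later when it writes ``Applying Lemma~\ref{lem:eosd-predecessor-preservation} to $(\succ',\succ)$ (note the order).'' The one small point you leave implicit is that the lemma's conclusion after the swap is $f_i(\succ') \succeq'_i f_i(\succ)$, which coincides with $f_i(\succ') \succeq_i f_i(\succ)$ because the improvement definition forces $\succ'_i = \succ_i$; this is harmless but worth a half-sentence.
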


\subsection{Property Verification for C-SD and T-SD}\label{subsec:property-verification}

Both Chain Serial Dictatorship and Two-
stage Serial Dictatorship admit an EO-SD form. The final order rule $\rho$ is:
\begin{itemize}
\item \textbf{C-SD:} For C-SD, the final order rule $\rho^C(\succ)$ generates the chooser sequence via the chaining process.
\item \textbf{T-SD:} For T-SD, the final order rule $\rho^T(\succ)$ generates the owner sequence in Stage 1.
\end{itemize}

By Proposition \ref{prop:eosd-eap}, any mechanism with EO-SD form satisfies efficiency under assignment partition (EAP). Therefore, both C-SD and T-SD are EAP. For strategy-proofness (SP) and respecting improvement (RI), we verify that C-SD and T-SD satisfy the predecessor-preservation conditions of Lemma \ref{lem:eosd-predecessor-preservation} via invariance under unilateral deviations and order monotonicity under improvements.

For strategy-proofness, we apply Corollary \ref{cor:sp-eosd-framework} by showing that the predecessor preservation conditions hold for any unilateral deviation by any division.

Fix any division $i \in N$ and any alternative report $\succ'_i$. Let $\succ' = (\succ'_i, \succ_{-i})$. We verify the conditions of Lemma \ref{lem:eosd-predecessor-preservation} for profiles $\succ'$ and $\succ$.

For C-SD: Let $\triangleright^* = \rho^C(\succ)$ and $\triangleright' = \rho^C(\succ')$ be the final orders. We verify the conditions of Lemma \ref{lem:eosd-predecessor-preservation}:
\begin{enumerate}
\item \textit{Predecessor inclusion and preserved order:} Following Definition \ref{def:CSD}, the chooser sequence before division $i$ becomes a chooser is determined by the chaining rule: $i_{t+1}$ is either $o(w_t)$ (if unassigned) or $\max_{\triangleright}(N \setminus \{i_1, \ldots, i_t\})$. Since $i$ is not among these earlier choosers, this sequence depends only on $\succ_{-i}$ and $\triangleright$, independent of $i$'s report. Therefore, $\mathrm{pre}(i, \triangleright^*_k) = \mathrm{pre}(i, \triangleright'_k)$ with identical internal order, which implies $\mathrm{pre}(i, \triangleright^*_k) \subseteq \mathrm{pre}(i, \triangleright'_k)$ and the order preservation condition required by Lemma \ref{lem:eosd-predecessor-preservation}.
\item \textit{Restricted-preference invariance:} Only $i$'s preference changes; all predecessors $j \neq i$ maintain $\succ_j|_{X_k} = \succ'_j|_{X_k}$.
\end{enumerate}
Thus, Lemma \ref{lem:eosd-predecessor-preservation} applies.

For T-SD, let $\triangleright^*=\rho^T(\succ)$ and $\triangleright'=\rho^T(\succ')$ be the final orders. We verify the conditions of Lemma \ref{lem:eosd-predecessor-preservation}:
\begin{enumerate}
\item \textit{Predecessor inclusion and preserved order:} Because $X=\bigsqcup_k X_k$ and each group $N_k$ nominates only from $X_k$, Stage-1 nominations are independent across groups. The time at which some other division nominates worker $i$ (owned by division $i$) depends only on others' reports and the group choice sets; changing division $i$'s own report affects only its own nomination, not when others nominate worker $i$. Hence $\mathrm{pre}(i, \triangleright^*_k) = \mathrm{pre}(i, \triangleright'_k)$ with the same internal order.
\item \textit{Restricted-preference invariance:} As the deviation is unilateral, all predecessors $j\neq i$ satisfy $\succ_j|_{X_k}=\succ'_j|_{X_k}$.
\end{enumerate}
Thus, Lemma \ref{lem:eosd-predecessor-preservation} applies.

By Corollary \ref{cor:sp-eosd-framework}, both C-SD and T-SD are strategy-proof.

For respecting improvement, fix $i\in N$ and let $\succ'$ be an improvement for worker $i$ relative to $\succ$. Write $\triangleright^*=\rho^C(\succ)$ (or $\rho^T(\succ)$) and $\triangleright'=\rho^C(\succ')$ (or $\rho^T(\succ')$) for the final orders. Define selection/nomination times:
\begin{itemize}
\item C-SD: $\tau_C^i(\succ)$ is the chain step at which some chooser selects worker $i$. Since one worker is consumed per step and $|X|=n$, $\tau_C^i(\succ)\in\{1,\ldots,n\}$.
\item T-SD: $\tau_T^i(\succ)$ is the Stage-1 step at which a division nominates $i$, so $\tau_T^i(\succ)\in\{1,\ldots,n\}$.
\end{itemize}

For C-SD, we show $\tau_C^i(\succ') \leq \tau_C^i(\succ)$ by comparing the two mechanisms run in parallel. Simulate both chains under $\succ$ and $\succ'$ simultaneously with identical exogenous priority and rules. Let $s$ be the first step where the simulations diverge. If $s<\min\{\tau_C^i(\succ),\tau_C^i(\succ')\}$, neither has selected $i$ yet. Since the only ranking differences involve $i$, and the current chooser at $s$ does not select $i$, its most-preferred among available workers is identical across $\succ$ and $\succ'$ (by strict preferences), contradicting divergence at $s$. Hence divergence can first occur only when one simulation selects $i$, implying $\tau_C^i(\succ')\le \tau_C^i(\succ)$.

T-SD. Improving $i$ raises it weakly relative to competitors within the relevant choice sets, so $\tau_T^i(\succ')\le \tau_T^i(\succ)$. Formally, at each Stage-1 step a mover nominates its current most-preferred remaining in $X_k$; since worker $i$ moves weakly upward in every other division's list while the relative order of all other workers is fixed, the first step at which $i$ becomes most-preferred cannot be delayed.

Therefore $i$ enters the final order weakly earlier under $\succ'$, and the predecessor set shrinks: $\mathrm{pre}(i,\triangleright'_k)\subseteq \mathrm{pre}(i,\triangleright^*_k)$. Moreover, by separation ($i\notin X_k$) and the improvement definition (relative order on $X\setminus\{i\}$ is preserved), predecessors' restricted preferences coincide, so all conditions of Lemma \ref{lem:eosd-predecessor-preservation} hold. Applying Lemma \ref{lem:eosd-predecessor-preservation} to $(\succ',\succ)$ (note the order) yields $f_i(\succ')\succeq_i f_i(\succ)$. By Corollary \ref{cor:ri-eosd-framework}, both C-SD and T-SD satisfy respecting improvement.

In sum, both C-SD and T-SD admit EO-SD forms. EAP follows from Proposition \ref{prop:eosd-eap}. SP follows by predecessor invariance under unilateral deviations (and Lemma \ref{lem:eosd-predecessor-preservation}(ii) holds automatically). RI follows by selection/nomination time monotonicity, predecessor shrinkage, and separation ensuring restricted-preference invariance, so Lemma \ref{lem:eosd-predecessor-preservation} applies to $(\succ',\succ)$. This establishes Theorems \ref{thm:CSD-properties} and \ref{thm:TSD-properties}.

\section{The Efficiency-Incentive Trade-off under Complete Exchange}\label{sec:efficiency-complete-exchange}

Our previous analysis has shown that C-SD and T-SD satisfy EAP, a constrained form of efficiency. A natural question arises: is it possible to achieve a stronger efficiency criterion, namely CE-efficiency? This section delves into this question and demonstrates that CE-efficiency is incompatible with desirable incentives (RI and SP). This result confirms that our proposed mechanisms and the EAP efficiency criterion represent achievable and meaningful objectives within this problem setting.

A natural approach to designing a CE-compliant mechanism is to first enforce a complete exchange and then allow for efficiency-enhancing trades. We now formalize and analyze this approach to demonstrate why, despite its intuitive appeal, it fails to solve the core incentive problem central to our paper.

\subsection{Efficiency under Complete Exchange}\label{subsec:ce-efficiency}

First, we introduce an efficiency concept tailored to the complete exchange constraint. Standard Pareto efficiency is often unattainable, as the initial endowment may be the unique efficient point. We therefore evaluate outcomes relative to the set of assignments that are permissible under the CE rule.

\begin{definition}[CE-efficient (CE-E)]\label{def:ce-efficient}
An assignment $\mu$ is \textbf{CE-efficient (CE-E)} if it satisfies complete exchange and there exists no other CE-compliant assignment $\mu'$ that Pareto dominates $\mu$.
A mechanism $f$ satisfies \textbf{CE-Efficiency (CE-E)} if for any preference profile $\succ$, the assignment $f(\succ)$ is CE-efficient at $\succ$.
\end{definition}

\subsection{The Complete Exchange Reassignment then Self-Avoiding Top Trading Cycle (CE-TTC) Mechanism}\label{sec:ce-ttc}

The Two-Stage logic of ``reassign, then trade'' can be formalized as the \textbf{Complete Exchange Reassignment then Self-Avoiding Top Trading Cycle (CE-TTC)} mechanism.
Conceptually, CE-TTC is the self-avoiding TTC mechanism from a CE assignment with the initial reassignment stage built into the mechanism itself.
CE-TTC operationalizes this logic as a complete {\color{red}exchange} mechanism by first reassigning initial endowments according to a predetermined rule (independent of reported preferences) and then applying the self-avoiding TTC algorithm from the resulting CE assignment under the constraint that each division cannot point to its original worker. This ensures the final outcome is CE-compliant.

\begin{definition}[CE-TTC Mechanism]\label{def:cettc}
The \textbf{CE-TTC mechanism} is a Two-Stage procedure:
\begin{enumerate}
\item
  Initial Reassignment: An exogenously determined CE assignment $\mu^0$ is implemented. This assignment $\mu^0$ serves as a new, temporary endowment profile.

\item
  Self-Avoiding TTC: The self-avoiding TTC algorithm from the CE assignment $\mu^0$ is applied to the economy where each division $i$'s temporary endowment is $\mu^0(i)$. To ensure the final outcome remains CE-compliant, an additional constraint is imposed: for the purposes of this TTC stage, each division $i$ considers its original worker $i$ to be unacceptable.
\end{enumerate}
\end{definition}

The initial reassignment $\mu^0$ can be implemented through various predetermined rules. Common approaches include:
\begin{itemize}
  \item Externally determined systematic assignment: A systematic derangement based on a predetermined ordering determined by external parties without consulting participants' preferences
  \begin{itemize}
  \item Cyclical assignment: Each division passes its worker to the next division in sequence, with the last division passing to the first, creating a single cycle derangement: $\mu^0(i) = (i \pmod n) + 1$ for $i=1,...,n-1$ and $\mu^0(n)=1$.
  \end{itemize}
  
  \item Random derangement: A randomly selected permutation with no fixed points.
\begin{itemize}
\item Single-cycle derangement randomization: Division ID randomization followed by cyclical assignment achieves uniform distribution over single-cycle derangements.\footnote{Division ID randomization followed by cyclical assignment does not achieve uniform distribution over all CE-assignments, as it only generates single-cycle derangements.}
\item Complete randomization: For uniform distribution over all CE-assignments, the implementation method depends on the number of divisions: for small $n$ ($n \leq 6$), direct enumeration of all derangements is practical; for medium $n$ ($7 \leq n \leq 15$), modified Fisher-Yates shuffle with rejection sampling is efficient; for large $n$ ($n > 15$), probabilistic construction methods are preferred.
\end{itemize} 

\end{itemize}
Crucially, $\mu^0$ is independent of the reported preference profile $\succ$---it is determined by a predetermined rule that does not depend on any division's reported preferences.

This mechanism satisfies several desirable properties by construction.

\begin{proposition}\label{prop:rttc-properties}
\emph{The CE-TTC mechanism satisfies complete exchange (CE), strategy-proofness (SP), and CE-efficiency (CE-E).}
\end{proposition}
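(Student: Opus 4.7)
The plan is to reduce the verification of all three properties to classical Shapley--Scarf TTC results by reformulating the self-avoiding TTC phase as a standard TTC run on a modified preference profile. For each division $i$, define $\succ^{*}_i$ to be $\succ_i$ with worker $i$ demoted to the bottom (below $\mu^0(i)$). I would argue that the self-avoiding TTC from initial endowment $\mu^0$ under $\succ$ produces exactly the same outcome as classical TTC from endowment $\mu^0$ under $\succ^*$: since $\mu^0(i)\neq i$ and $\mu^0(i)\succ^*_i i$, division $i$ never points to worker $i$ in classical TTC, which mirrors the self-avoiding restriction. Complete exchange then follows from individual rationality of TTC with respect to the profile used: the output $\mu$ satisfies $\mu(i)\succeq^*_i \mu^0(i)\succ^*_i i$, so $\mu(i)\neq i$ for every $i$.

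For strategy-proofness, I would first note that $\mu^0$ is fixed independently of reports, so the only strategic channel is the TTC phase. A report $\tilde{\succ}_i$ induces a modified report $\tilde{\succ}^*_i$ in the obvious way. By strategy-proofness of classical TTC, $f^{\mathrm{TTC}}_i(\succ^*)\succeq^*_i f^{\mathrm{TTC}}_i(\tilde{\succ}^*_i,\succ^*_{-i})$. Because both of these outcomes lie in $X\setminus\{i\}$ by the CE property just established, and because $\succ_i$ and $\succ^*_i$ agree on $X\setminus\{i\}$, the same inequality carries over to $\succeq_i$, yielding strategy-proofness of CE-TTC.

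For CE-efficiency, I would invoke that classical TTC yields the (unique) core allocation and is therefore Pareto efficient under the profile it uses, here $\succ^*$. Suppose for contradiction that some CE-assignment $\nu$ Pareto dominates the output $\mu$ under the true profile $\succ$. Since $\mu$ and $\nu$ are both CE-compliant, $\mu(i),\nu(i)\in X\setminus\{i\}$ for every $i$, and on this domain $\succ_i$ and $\succ^*_i$ coincide; hence $\nu$ also Pareto dominates $\mu$ under $\succ^*$, contradicting Pareto efficiency of TTC under $\succ^*$. The main subtle point---and where most of the care is needed---is precisely this bridge between dominance under $\succ$ restricted to CE-assignments and dominance under $\succ^*$ over all assignments: the self-avoiding restriction confines every feasible outcome to the region where the two profiles agree, which is what makes the reduction clean and lets CE-efficiency follow from the unrestricted efficiency of classical TTC rather than requiring a bespoke argument on the combinatorics of derangements.
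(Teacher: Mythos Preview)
Your proposal is correct and follows essentially the same approach as the paper: both reduce the self-avoiding TTC phase to classical Shapley--Scarf TTC by treating each division's own worker as unacceptable (the paper's definition already says division $i$ ``considers its original worker $i$ to be unacceptable,'' which is exactly your $\succ^*$ construction). Your write-up is more explicit than the paper's terse proof---in particular, you spell out the bridge showing that dominance comparisons coincide on $X\setminus\{i\}$ and that CE follows from individual rationality of TTC with respect to $\succ^*$---but the underlying argument is the same.
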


\begin{proof}
{\bf CE:} By construction, division $i$ cannot point to its original worker $i$ in the TTC stage. Since TTC assigns each division the worker it points to (directly or through cycles), the final assignment satisfies $\mu_i \neq i$ for all $i$.

{\bf SP:} Each division faces a fixed choice set $N \setminus \{i\}$. TTC remains strategy-proof over any fixed choice set \citep{shapleyCoresIndivisibility1974}.

{\bf CE-E:} The self-avoiding TTC stage produces assignments where $\mu_i \neq i$ for all $i$. By the well-known efficiency of TTC, any assignment $\mu$ that satisfies $\mu_i \neq i$ for all $i$ and is produced by the self-avoiding TTC algorithm from a CE assignment with the modified preference profile (where each division $i$ points to its most preferred worker among $N \setminus \{i\}$) is CE-efficient. Since CE-TTC applies the self-avoiding TTC algorithm from the CE assignment $\mu^0$ to exactly this modified preference profile, the outcome is CE-efficient.
\end{proof}

Despite these properties, CE-TTC fails on the critical margin of investment incentives.

\begin{proposition}\label{prop:cettc_fails_ri}
\emph{The CE-TTC mechanism fails to satisfy respecting improvement (RI) for $n=3$.}
\end{proposition}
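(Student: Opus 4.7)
The plan is to exhibit an explicit counterexample with $n=3$: specify the initial reassignment $\mu^0$, a preference profile $\succ$, and an improvement $\succ'$ for some division $i$ such that $f^{CE\text{-}TTC}_i(\succ) \succ_i f^{CE\text{-}TTC}_i(\succ')$. Since the set of CE-compliant assignments for $n=3$ consists of exactly the two three-cycle derangements, $\nu_1=(2,3,1)$ and $\nu_2=(3,1,2)$, the task reduces to finding profiles where improving the standing of worker $i$ causes CE-TTC to switch between these two derangements in a way that hurts division $i$.

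First, I would fix $\mu^0$ to be the cyclic reassignment $\mu^0=\nu_1$, so that each division $j$'s temporary endowment is worker $(j \bmod 3)+1$. Then I would choose $\succ$ so that every division most prefers the ``other'' cycle's option among $N\setminus\{i\}$: for instance $\succ_1: 3\succ 2\succ 1$, $\succ_2: 1\succ 3\succ 2$, $\succ_3: 2\succ 1\succ 3$. A quick trace of the self-avoiding TTC from $\mu^0$ shows a single long cycle forms and the output is $\nu_2$, so in particular $f^{CE\text{-}TTC}_1(\succ)=3$.

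Next, I would define $\succ'$ by replacing only $\succ_3$ with $\succ'_3: 1\succ 2\succ 3$, keeping $\succ_1,\succ_2$ fixed. I would verify that $\succ'$ is an improvement for division $1$ in the sense of Definition \ref{def:RI}: (i) $\succ'_1=\succ_1$, (ii) worker $1$ rises (from rank 2 to rank 1 in $\succ_3$) while never falling elsewhere, and (iii) the relative order of workers $2$ and $3$ in every $\succ'_j$, $j\neq 1$, is unchanged. Re-running CE-TTC from $\mu^0$ under $\succ'$: division $3$ now points to worker $1$, which it owns temporarily under $\mu^0$, forming an immediate self-cycle; after its removal, division $2$ self-cycles on worker $3$ and division $1$ self-cycles on worker $2$. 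The outcome is $\nu_1$, and in particular $f^{CE\text{-}TTC}_1(\succ')=2$. Since $3\succ_1 2$, division $1$ is strictly worse off under the improvement, contradicting RI.

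The only genuine subtlety is confirming that the triple of conditions in Definition \ref{def:RI} is met by the proposed change to $\succ_3$; the TTC traces are short and mechanical. The conceptual takeaway (worth stating briefly before the proof) is that fixing $\mu^0$ independently of reports means an improvement for division $i$ can redirect the cycle structure so that $i$'s temporary endowment owner is ``captured'' by a self-cycle, cutting $i$ out of the Pareto-dominant trade and forcing it into the dominated derangement.
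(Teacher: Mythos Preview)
Your proof is correct: the explicit counterexample works as described, the TTC traces check out, and the verification that $\succ'$ is an improvement for division~$1$ per Definition~\ref{def:RI} is complete. One tiny phrasing point: after division~$3$'s self-cycle is removed, divisions~$1$ and~$2$ do not self-cycle simultaneously---division~$1$ still points to worker~$3$ in the next round, so division~$2$ self-cycles first and only then does division~$1$ self-cycle on worker~$2$. The outcome $(2,3,1)$ is unchanged.

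Your approach differs genuinely from the paper's. The paper proves this proposition in one line by invoking two surrounding results: CE-TTC is CE-efficient (Proposition~\ref{prop:rttc-properties}), and for $n=3$ any CE-efficient mechanism fails RI (Proposition~\ref{prop:ce-ri-incompatibility}); the failure is deduced indirectly from a general incompatibility, without ever running the algorithm. Your route is direct and self-contained: you fix $\mu^0$, trace the self-avoiding TTC on two profiles, and exhibit the violation. This buys mechanical transparency---the reader sees exactly how the improvement redirects the cycle structure (division~$3$'s self-cycle captures worker~$1$ and cuts division~$1$ out of the three-way trade)---and it does not depend on the correctness of the separate impossibility argument. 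The paper's route buys generality: because it goes through CE-efficiency, it covers every instantiation of $\mu^0$ at once. If you want your argument to match that scope, a one-sentence remark that the only other choice $\mu^0=\nu_2$ is handled by relabeling would suffice.
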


\begin{proof}
The result follows directly from Proposition \ref{prop:rttc-properties} (which states CE-TTC is CE-efficient) and Proposition \ref{prop:ce-ri-incompatibility} (which demonstrates that CE-TTC fails RI for $n=3$).
\end{proof}

\begin{remark}
Whether the CE-TTC mechanism fails to satisfy RI for $n \ge 4$ remains an open question. The failure for $n=3$ stems from a general incompatibility result in a constrained environment, not from a structural flaw in the TTC algorithm itself. Determining if this failure extends to larger, less constrained environments is a subject for future research.
\end{remark}

\subsection{CE-E and RI failure for $n=3$}\label{sec:n3-impossibility}

\begin{proposition}
\label{prop:ce-ri-incompatibility}
\emph{For $n=3$, any CE-efficient mechanism achieves CE-efficiency and SP but fails respecting improvement.}
\end{proposition}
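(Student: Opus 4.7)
The plan is to exploit the drastic smallness of the CE-feasible set when $n=3$: the only two derangements of $\{1,2,3\}$ are the 3-cycles
\[
\mu^a:\ 1\mapsto 2,\ 2\mapsto 3,\ 3\mapsto 1 \qquad\text{and}\qquad \mu^b:\ 1\mapsto 3,\ 2\mapsto 1,\ 3\mapsto 2.
\]
Thus any CE-efficient mechanism $f$ is simply a binary chooser between $\mu^a$ and $\mu^b$, with $f(\succ)=\mu^a$ forced whenever $2\succ_1 3$, $3\succ_2 1$, and $1\succ_3 2$ all hold (so that $\mu^a$ Pareto-dominates $\mu^b$ on $\mathcal{C}$), and symmetrically for $\mu^b$. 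The simultaneous achievability of CE-efficiency and SP is already witnessed by CE-TTC (Proposition~\ref{prop:rttc-properties}), so the substantive content is the RI failure. I will assume toward contradiction that some $f$ is both CE-efficient and respects improvement, and construct a single profile $\succ^{\ast}$ on which two different improvement chains, started from two different Pareto-forced profiles, both terminate with RI pinning $f(\succ^{\ast})$ to incompatible values.

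Concretely, I would take a ``Pareto-$\mu^a$'' profile $\succ^A$ with $\succ^A_1:2{\succ}3{\succ}1$, $\succ^A_2:3{\succ}1{\succ}2$, $\succ^A_3:1{\succ}2{\succ}3$, so CE-efficiency forces $f(\succ^A)=\mu^a$ and $f_1(\succ^A)=2$. A single improvement for worker~$1$---raising $1$ from position $2$ to the top of $\succ^A_2$ and leaving all other rankings unchanged---produces a profile $\succ^{\ast}$; because $\succ^{\ast}_1$ still ranks $2$ on top, RI on division~$1$ forces $f_1(\succ^{\ast})=2$ and hence $f(\succ^{\ast})=\mu^a$. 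Independently, take a ``Pareto-$\mu^b$'' profile $\succ^B$ with $\succ^B_1:3{\succ}2{\succ}1$, $\succ^B_2:1{\succ}3{\succ}2$, $\succ^B_3:2{\succ}1{\succ}3$, so $f(\succ^B)=\mu^b$. An improvement for worker~$1$ that lifts $1$ to the top of $\succ^B_3$ yields an intermediate profile $\succ^{B'}$; since division~$1$ still ranks $3$ on top, RI pins $f(\succ^{B'})=\mu^b$. A further improvement for worker~$2$ that lifts $2$ to the top of $\succ^{B'}_1$ then yields a profile which coincides coordinate-by-coordinate with the $\succ^{\ast}$ from the first chain, and since division~$2$'s top-ranked worker in $\succ^{\ast}$ is $1$, RI on division~$2$ pins $f(\succ^{\ast})=\mu^b$---contradicting the first chain's conclusion $f(\succ^{\ast})=\mu^a$.

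The main obstacle, and the one point requiring care, is designing the two starting profiles $\succ^A$ and $\succ^B$ together with the intermediate $\succ^{B'}$ so that the two improvement chains genuinely converge on the same $\succ^{\ast}$ and every single-step transition satisfies the strict requirements of Definition~\ref{def:RI}: only the position of the improved worker may move, and the relative order of the other two workers must be preserved. Once the profiles above are fixed, this reduces to a short, routine check (each step moves one worker up by one slot, leaving the other two workers in the same relative order, and the three chains land on an identical preference triple). The RI applications themselves are also routine, because at every step the improved division is already receiving its top-ranked worker among the workers it can be assigned under a CE outcome, so RI collapses to the observation that its assignment cannot change. The resulting contradiction establishes that CE-efficiency and RI cannot coexist at $n=3$.
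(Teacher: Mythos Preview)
Your RI argument is correct and, in fact, tighter than the paper's. The paper argues from a single Pareto-$\mu^a$ profile $\succ$ to an improved profile $\succ'$ at which both derangements are CE-efficient, and then only observes that \emph{if} $f(\succ')=\mu^b$, division~1 is harmed; it does not dispose of CE-efficient mechanisms that select $\mu^a$ at $\succ'$, so as written it does not establish the ``any'' in the statement. Your convergent-chains construction closes this gap: by walking into the same $\succ^{\ast}$ from a Pareto-$\mu^a$ start (one improvement for worker~1) and, via $\succ^{B'}$, from a Pareto-$\mu^b$ start (an improvement for worker~1 followed by one for worker~2), RI pins $f(\succ^{\ast})=\mu^a$ and $f(\succ^{\ast})=\mu^b$ simultaneously, a genuine contradiction covering every CE-efficient $f$. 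I checked your three transitions against Definition~\ref{def:RI} and the two chains do land on the identical triple $\succ^{\ast}_1:2\,3\,1$, $\succ^{\ast}_2:1\,3\,2$, $\succ^{\ast}_3:1\,2\,3$. The cost of your approach is three improvement steps and the bookkeeping to match the terminal profiles; the benefit is a proof that actually applies to all CE-efficient mechanisms without an unresolved case split.

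One minor remark on the SP clause: you interpret it existentially (``CE-E and SP are jointly achievable'') and cite CE-TTC as a witness. A literal reading of the proposition is the universal claim that \emph{every} CE-efficient mechanism is SP, which is a different statement (and one the paper's own proof does not address). Since the paper's argument is silent on SP anyway, your existential reading is the natural one here and does not affect the soundness of your RI contradiction.
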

\begin{proof}
We show by construction that any CE-efficient mechanism violates RI for a specific profile. Let $N=\{1,2,3\}$. The only two CE-compliant assignments are the derangements $\mu^a=(2,3,1)$ and $\mu^b=(3,1,2)$.

Note that in the preference profiles below, each division's preference over its own worker (worker $i$ for division $i$) is arbitrary and not specified, as it does not affect the analysis of CE-efficient assignments. The preferences shown are only for workers other than the division's own worker.

First, consider the base preference profile $\succ$:
\begin{center}
\begin{tabular}{@{}ll@{}}
\toprule
Division & Preference \\
\midrule
1 & $2 \succ_1 3$ \\
2 & $3 \succ_2 1$ \\
3 & $1 \succ_3 2$ \\
\bottomrule
\end{tabular}
\end{center}
Under these preferences, since every division is assigned its more preferred option under $\mu^a$, $\mu^a=(2,3,1)$ is the unique CE-efficient assignment. Any CE-efficient mechanism $f$ must select $f(\succ)=\mu^a$, which means division 1 is assigned worker 2.

Now, consider an improvement for division 1, where worker 1 becomes more attractive to division 2. The new preference profile $\succ'$ is (again, preferences over own workers are arbitrary and not specified):
\begin{center}
\begin{tabular}{@{}ll@{}}
\toprule
Division & Preference \\
\midrule
1 & $2 \succ'_1 3$ (unchanged) \\
2 & $1 \succ'_2 3$ (worker 1's rank improves) \\
3 & $1 \succ'_3 2$ (unchanged) \\
\bottomrule
\end{tabular}
\end{center}
Under $\succ'$, we compare the two CE-compliant assignments:
\begin{itemize}
    \item Division 1 prefers its assignment in $\mu^a$ ($2 \succ'_1 3$).
    \item Division 2 prefers its assignment in $\mu^b$ ($1 \succ'_2 3$).
    \item Division 3 prefers its assignment in $\mu^a$ ($1 \succ'_3 2$).
\end{itemize}
Since divisions 1 and 3 prefer $\mu^a$ and division 2 prefers $\mu^b$, the two assignments are not Pareto comparable. Therefore, both $\mu^a$ and $\mu^b$ are CE-efficient under $\succ'$.

Any CE-efficient mechanism must select one of these assignments. If it selects $\mu^b=(3,1,2)$, then $f_1(\succ')=3$. Comparing the outcomes for division 1:
\begin{itemize}
    \item Before improvement ($\succ$): $f_1(\succ)=2$.
    \item After improvement ($\succ'$): $f_1(\succ')=3$.
\end{itemize}
Since $2 \succ_1 3$, division 1 is made strictly worse off by the improvement. This violates respecting improvement for the CE-efficient mechanism.
\end{proof}

This demonstrates that any CE-efficient mechanism fails to satisfy respecting improvement for $n=3$.
For $n\geq4$, whether there exists any CE-efficient mechanism that satisfies RI for all profiles remains an open question.

\subsection{Results for $n=4$: CE-E + RI + SP Impossibility}\label{sec:n4-results}

\begin{proposition}\label{prop:ce-e-ri-sp-impossibility-n4}
\emph{For an assignment problem with $n=4$, there is no mechanism that satisfies complete exchange-efficiency, respecting improvement, and strategy-proofness.}
\end{proposition}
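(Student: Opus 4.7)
The plan is to construct a small collection of preference profiles for $n=4$ and show that the combined constraints of CE-efficiency (CE-E), respecting improvement (RI), and strategy-proofness (SP) cannot be simultaneously satisfied. First, I would enumerate the nine derangements of $N=\{1,2,3,4\}$---three double-transpositions and six four-cycles---since these are exactly the CE-feasible assignments. I would then search for a base profile $\succ$ at which only a small set of derangements survives the Pareto test, ideally a profile where exactly two CE-compliant derangements $\mu^a$ and $\mu^b$ are Pareto-incomparable while the other seven are Pareto-dominated. By CE-E, $f(\succ)\in\{\mu^a,\mu^b\}$.

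Without loss of generality suppose $f(\succ)=\mu^a$. I would next exhibit an RI-compatible improvement $\succ'$ for some division $i$ (raising worker $i$ in others' rankings while preserving relative orders on $X\setminus\{i\}$) under which CE-E admits only $\mu^b$ together with derangements in which $i$ is worse off than at $\mu^a$. RI forces $f_i(\succ')\succeq_i f_i(\succ)=\mu^a_i$, which in turn forces $f(\succ')=\mu^b$. I would then construct a unilateral deviation $\succ''=(\succ''_j,\succ'_{-j})$ by a third division $j$, chosen so that at $\succ''$ the set of CE-efficient assignments reduces to a single derangement $\mu^c$ with $\mu^c_j\succ'_j\mu^b_j$, producing a direct SP violation---\emph{or}, alternatively, so that chaining a second improvement back toward $\succ$ loops the RI/CE-E constraints into contradiction with the original $f(\succ)=\mu^a$. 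To make the argument exhaustive, I would run the symmetric case $f(\succ)=\mu^b$ by reusing the $S_4$ symmetry of the problem (renaming divisions and workers simultaneously) so that a single construction suffices up to relabeling.

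One might hope to reduce the $n=4$ problem to the $n=3$ impossibility in Proposition \ref{prop:ce-ri-incompatibility} by freezing one division-worker pair, but the CE constraint does not factor cleanly across subproblems: a derangement of $\{1,2,3,4\}$ need not restrict to a derangement of $\{1,2,3\}$, so any attempt to ``carve off'' division $4$ must bundle both agent and worker, which distorts CE-E. A direct construction in the full $n=4$ space is therefore more promising, and the structural wedge that makes it go through is the distinction between four-cycles and double-transpositions: single-coordinate preference perturbations often flip the unique CE-efficient derangement from a four-cycle to a transposition or vice versa, giving CE-E sufficient pinning power.

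The main obstacle I expect is finding profiles where CE-E genuinely narrows the admissible set to a singleton at each stage. With nine derangements and rich preference space, the CE-Pareto frontier generically contains several derangements, so RI and SP alone---which only give one-sided inequalities on $f_i$---cannot force inconsistency without strong CE-E pin-downs. The delicate part is coordinating the improvement step (which moves worker $i$'s rank but leaves all other pairwise comparisons fixed) with the unilateral deviation step (which rewrites division $j$'s entire list) so that each step produces a tight CE-E constraint while still being a legitimate RI improvement or SP-relevant deviation, respectively. I expect the final construction to involve three or four profiles connected by one improvement and one or two unilateral deviations, and the bulk of the proof will be verifying the CE-E-feasible set at each of these profiles by direct inspection of the nine derangements.
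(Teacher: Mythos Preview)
Your strategy is the paper's: pin down $f$ at carefully chosen profiles via CE-E, then propagate constraints through RI improvements and SP deviations until the system is overdetermined. Two points where your plan diverges from what the paper actually needs:

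\textbf{Scale.} The paper's construction uses ten profiles and five derangements, not three or four. The reason is exactly the obstacle you anticipate: at most profiles the CE-Pareto frontier contains three or four derangements, and a single RI or SP step typically eliminates only one or two of them. The paper does not start from a two-element CE-E set as you propose; it starts from three profiles $\succ^1,\succ^2,\succ^3$ with \emph{singleton} CE-E sets, uses three separate RI steps to narrow a pivot profile $\succ^5$ down to two options, and from there each of the two branches requires a further chain of two to four RI and SP applications (with sub-branching) before reaching contradiction. Your estimate of one improvement plus one or two deviations is optimistic by roughly a factor of three.

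\textbf{The symmetry shortcut.} Your plan to dispatch the case $f(\succ)=\mu^b$ by an $S_4$ relabeling only works if the base profile $\succ$ itself admits an automorphism $\sigma$ with $\sigma\cdot\succ=\succ$ and $\sigma\mu^a\sigma^{-1}=\mu^b$. Without that self-symmetry, relabeling by $\sigma$ yields a contradiction for the conjugated mechanism $f^\sigma$ at the \emph{different} profile $\sigma^{-1}\cdot\succ$, which says nothing about $f(\succ)=\mu^b$. Requiring such a $\sigma$ is a severe constraint on your choice of $\succ$ (and is impossible outright if $\mu^a,\mu^b$ have different cycle types). The paper's two branches at its pivot profile are handled by genuinely different sub-arguments of different lengths, with no symmetry exploited; you should plan to do both cases by hand.
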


We prove the proposition by contradiction. The proof is based on the necessary conditions imposed by CE-Efficiency (CE-E), Respecting Improvement (RI), and Strategy-Proofness (SP) on a mechanism $f$. We show that for a specific set of preference profiles and assignments, these conditions are mutually inconsistent.

The divisions are the set $N = \{1, 2, 3, 4\}$. The proof centers on the following specific CE-assignments and preference profiles.

\begin{table}[h]
\centering
\caption{Assignments Used in the Proof}
\begin{tabular}{@{}ll@{}}
\toprule
ID & Assignment $\mu = (\mu_1, \mu_2, \mu_3, \mu_4)$ \\
\midrule
$\mu^1$ & $(2, 3, 4, 1)$ \\
$\mu^2$ & $(3, 1, 4, 2)$ \\
$\mu^3$ & $(3, 4, 2, 1)$ \\
$\mu^4$ & $(4, 3, 1, 2)$ \\
$\mu^5$ & $(4, 3, 2, 1)$ \\
\bottomrule
\end{tabular}
\end{table}

Note that in the preference profiles below, each division's preference over its own worker (worker $i$ for division $i$) is arbitrary and not specified, as it does not affect the analysis of CE-efficient assignments. The preferences shown are only for workers other than the division's own worker.
\begin{table}[h]
\centering
\caption{Preference Profiles Used in the Proof and their CE-E Assignments}
\begin{tabular}{@{}l||llll|l@{}}
\toprule
ID ($k$) & Div 1 $\succ_1^k$ & Div 2 $\succ_2^k$ & Div 3 $\succ_3^k$ & Div 4 $\succ_4^k$ & $E(\succ^k)$\\
\midrule
$\succ^1$ & 2 3 4 & 3 4 1 & 4 2 1 & 1 2 3 & $\{\mu^1\}$\\
$\succ^2$ & 3 4 2 & 1 3 4 & 4 2 1 & 2 1 3 & $\{\mu^2\}$\\
$\succ^3$ & 4 2 3 & 3 1 4 & 1 4 2 & 2 1 3 & $\{\mu^4\}$\\
$\succ^4$ & 3 2 4 & 3 4 1 & 4 2 1 & 1 2 3 & $\{\mu^1, \mu^2, \mu^3\}$\\
$\succ^5$ & 3 4 2 & 3 1 4 & 4 2 1 & 2 1 3 & $\{\mu^1, \mu^2, \mu^4, \mu^5\}$\\
$\succ^6$ & 4 2 3 & 3 1 4 & 4 1 2 & 2 1 3 & $\{\mu^1, \mu^2, \mu^4\}$\\
$\succ^7$ & 4 2 3 & 3 1 4 & 4 2 1 & 2 1 3 & $\{\mu^1, \mu^2, \mu^4, \mu^5\}$\\
$\succ^8$ & 3 4 2 & 3 1 4 & 4 2 1 & 1 2 3 & $\{\mu^1, \mu^2, \mu^3, \mu^5\}$\\
$\succ^9$ & 3 4 2 & 3 4 1 & 4 2 1 & 1 2 3 & $\{\mu^1, \mu^2, \mu^3, \mu^5\}$\\
$\succ^{10}$ & 3 4 2 & 3 4 1 & 2 4 1 & 1 2 3 & $\{\mu^3, \mu^5\}$\\
\bottomrule
\end{tabular}
\end{table}

For each profile $\succ^k$ in Table 2, the set of CE-Efficient assignments, denoted $E(\succ^k)$, is as in its last column. 
Additionally, the verification of RI conditions (improvement relationships between profiles) and SP conditions (misreport incentives) used in the proof of the proposition below are detailed in Appendix \ref{app:ri-verification} and Appendix \ref{app:sp-verification}, respectively.

\begin{proof}[Proof of Proposition \ref{prop:ce-e-ri-sp-impossibility-n4}]
Assume, for the sake of contradiction, that a mechanism $f$ satisfying CE-E, RI, and SP exists.

\textbf{Step 1: Initial deductions from CE-E.}
By Table 2, for profiles $\succ^1, \succ^2, \succ^3$, the CE-E sets are singletons. Any mechanism $f$ satisfying CE-E must therefore select that unique assignment.
\begin{itemize}
\item $f(\succ^1) = \mu^1$
\item $f(\succ^2) = \mu^2$
\item $f(\succ^3) = \mu^4$
\end{itemize}

\textbf{Step 2: Propagating constraints via RI.}
We use the results from Step 1 to constrain the choices of $f$ for other profiles.
\begin{itemize}
\item Profile $\succ^4$ is an improvement for division 3 with respect to $\succ^1$. Since $f(\succ^1) = \mu^1$, division 3 receives worker 4. RI requires $f_3(\succ^4) \succeq_3^1 4$. Assignment $\mu^3$ gives worker 2 to division 3. As $4 \succ_3^1 2$, this is a worsening. Thus, RI implies $f(\succ^4) \neq \mu^3$.
\item Profile $\succ^5$ is an improvement for division 3 w.r.t. $\succ^2$. Since $f(\succ^2) = \mu^2$, division 3 receives worker 4. RI requires $f_3(\succ^5) \succeq_3^2 4$. Assignments $\mu^5$ (giving worker 2) and $\mu^4$ (giving worker 1) are both worse, as $4 \succ_3^2 2$ and $4 \succ_3^2 1$. Thus, RI implies $f(\succ^5) \neq \mu^5$ and $f(\succ^5) \neq \mu^4$.
\item Profile $\succ^6$ is an improvement for division 4 w.r.t. $\succ^3$. Since $f(\succ^3) = \mu^4$, division 4 receives worker 2. RI requires $f_4(\succ^6) \succeq_4^3 2$. Assignment $\mu^1$ gives worker 1. As $2 \succ_4^3 1$, this is a worsening. Thus, RI implies $f(\succ^6) \neq \mu^1$.
\end{itemize}

\textbf{Step 3: Case analysis.}
By Table 2, $f(\succ^5) \in E(\succ^5) = \{\mu^1, \mu^2, \mu^4, \mu^5\}$. From Step 2, we ruled out $\mu^4$ and $\mu^5$. Therefore, $f(\succ^5) \in \{\mu^1, \mu^2\}$. We analyze these two cases.

\textbf{Case A: Assume $f(\succ^5) = \mu^1$.}
\begin{enumerate}
\item \textbf{SP between $\succ^5$ and $\succ^7$:} These profiles differ only in division 1's preference.
\begin{itemize}
\item If division 1's true preference is $\succ_1^5: 3 \succ 4 \succ 2$, $f(\succ^5)=\mu^1$ gives it worker 2. To prevent a profitable deviation to worker 4 (since $4 \succ_1^5 2$), SP implies $f_1(\succ^7) \neq 4$. This rules out $f(\succ^7) = \mu^4$ and $f(\succ^7) = \mu^5$.
\item If division 1's true preference is $\succ_1^7: 4 \succ 2 \succ 3$, and if $f(\succ^7)=\mu^2$ (giving worker 3), it could profitably deviate by reporting $\succ_1^5$ to get worker 2 (since $2 \succ_1^7 3$). Thus, SP implies $f(\succ^7) \neq \mu^2$.
\end{itemize}
\item \textbf{Forced choice for $\succ^7$:} By Table 2, $f(\succ^7) \in E(\succ^7)$. The SP deductions above eliminate $\mu^2, \mu^4, \mu^5$ from this set, forcing $f(\succ^7) = \mu^1$.
\item \textbf{RI between $\succ^7$ and $\succ^6$:} Profile $\succ^6$ is an improvement for division 1 w.r.t. $\succ^7$. We have $f(\succ^7)=\mu^1$ (giving worker 2). RI requires $f_1(\succ^6) \succeq_1^7 2$. Assignment $\mu^2$ gives worker 3. Since $2 \succ_1^7 3$, this is a worsening. Thus, RI implies $f(\succ^6) \neq \mu^2$.
\item \textbf{Forced choice for $\succ^6$:} By Table 2, $f(\succ^6) \in E(\succ^6) = \{\mu^1, \mu^2, \mu^4\}$. We ruled out $\mu^1$ in Step 2 and $\mu^2$ just now. This forces $f(\succ^6) = \mu^4$.
\item \textbf{Contradiction via SP:} We have deduced $f(\succ^6) = \mu^4$ and $f(\succ^7) = \mu^1$. For division 3 with true preference $\succ_3^6: 4 \succ 1 \succ 2$, $f(\succ^6)=\mu^4$ gives it worker 1. By misreporting as $\succ_3^7$, it would get worker 4 from $f(\succ^7)=\mu^1$. Since $4 \succ_3^6 1$, this is a profitable deviation, violating SP. This is a contradiction.
\end{enumerate}

\textbf{Case B: Assume $f(\succ^5) = \mu^2$.}
\begin{enumerate}
\item \textbf{RI between $\succ^5$ and $\succ^8$:} Profile $\succ^8$ is an improvement for division 1 w.r.t. $\succ^5$. By assumption, $f(\succ^5)=\mu^2$ gives worker 3. RI requires $f_1(\succ^8) \succeq_1^5 3$. Under $\succ_1^5: 3 \succ 4 \succ 2$, assignments $\mu^1$ (giving worker 2) and $\mu^5$ (giving worker 4) are worse. Thus, RI implies $f(\succ^8) \neq \mu^1$ and $f(\succ^8) \neq \mu^5$.
\item \textbf{Reduced choice for $\succ^8$:} By Table 2, $f(\succ^8) \in E(\succ^8)$. We have ruled out $\mu^1$ and $\mu^5$, so $f(\succ^8) \in \{\mu^2, \mu^3\}$. We analyze these two sub-cases.

\textbf{Subcase B1: Assume $f(\succ^8) = \mu^2$.}
\begin{enumerate}
\item \textbf{SP between $\succ^8$ and $\succ^9$:} For division 2 with true preference $\succ_2^8: 3 \succ 1 \succ 4$, $f(\succ^8)=\mu^2$ gives it worker 1. To prevent a profitable deviation to worker 3 (since $3 \succ_2^8 1$), SP implies $f_2(\succ^9) \neq 3$. This rules out $f(\succ^9) = \mu^1$ and $f(\succ^9) = \mu^5$.
\item \textbf{Reduced choice for $\succ^9$:} By Table 2, $f(\succ^9) \in E(\succ^9)$. We have ruled out $\mu^1, \mu^5$, so $f(\succ^9) \in \{\mu^2, \mu^3\}$.
\begin{itemize}
\item \textbf{If $f(\succ^9) = \mu^2$:} Profile $\succ^9$ is an improvement for division 4 w.r.t. $\succ^{10}$. RI requires $f_4(\succ^9) \succeq_4^{10} f_4(\succ^{10})$. We have $f_4(\succ^9)=\mu_4^2=2$. By Table 2, $f(\succ^{10}) \in E(\succ^{10}) = \{\mu^3, \mu^5\}$, both of which assign worker 1 to division 4. The RI condition becomes $2 \succeq_4^{10} 1$. But $\succ_4^{10}$ is $1 \succ 2 \succ 3$, so $1 \succ_4^{10} 2$. The RI condition is violated. Contradiction.
\item \textbf{If $f(\succ^9) = \mu^3$:} Profile $\succ^4$ is an improvement for division 2 w.r.t. $\succ^9$. $f(\succ^9)=\mu^3$ gives worker 4. RI requires $f_2(\succ^4) \succeq_2^9 4$. Assignment $\mu^2$ gives worker 1, which is worse ($4 \succ_2^9 1$). Thus, $f(\succ^4) \neq \mu^2$. By Table 2, $f(\succ^4) \in E(\succ^4) = \{\mu^1, \mu^2, \mu^3\}$. We ruled out $\mu^3$ (Step 2) and $\mu^2$ (just now), forcing $f(\succ^4) = \mu^1$. This creates a profitable deviation for division 1 between $\succ^4$ and $\succ^9$: at $\succ^4$ (preference $\succ_1^4: 3 \succ 2 \succ 4$), it gets worker 2, but by misreporting to get $\succ^9$, it gets worker 3. Since $3 \succ_1^4 2$, this violates SP. Contradiction.
\end{itemize}
\end{enumerate}

\textbf{Subcase B2: Assume $f(\succ^8) = \mu^3$.}
\begin{enumerate}
\item \textbf{SP between $\succ^8$ and $\succ^9$:} For division 2 with true preference $\succ_2^8: 3 \succ 1 \succ 4$, $f(\succ^8)=\mu^3$ gives it worker 4. To prevent a profitable deviation to worker 3 (since $3 \succ_2^8 4$), SP implies $f_2(\succ^9) \neq 3$, ruling out $f(\succ^9) = \mu^1$ and $f(\succ^9) = \mu^5$.
\item \textbf{Reduced choice for $\succ^9$:} This again restricts $f(\succ^9)$ to $\{\mu^2, \mu^3\}$.
\item Both possibilities lead to contradictions. If $f(\succ^9) = \mu^2$, the logic from Subcase B1.b shows this violates RI between $\succ^{10}$ and $\succ^9$. If $f(\succ^9) = \mu^3$, the logic from Subcase B1.b shows this violates SP between $\succ^4$ and $\succ^9$.
\end{enumerate}
\end{enumerate}

All possible cases have been exhausted and each leads to a contradiction. Therefore, our initial assumption that such a mechanism $f$ exists must be false.
\end{proof}

\section{Applications}\label{sec:applications}

\subsection{Reforming the NPB Active Player Draft}\label{subsec:npb-draft}

We apply our framework to the ''Active Player Draft'' (\emph{gen'eki dorafuto}) in Nippon Professional Baseball (NPB) in Japan, demonstrating how theoretical insights address real-world policy challenges. This draft exemplifies institutional complete exchange (CE): every participating club must both lose one player and acquire one, ensuring mandatory movement throughout the system \citep{nikkanSports2022}. The mechanism is widely interpreted as promoting respecting improvement (RI) by granting higher draft positions to clubs that list attractive players, thereby encouraging quality submissions rather than disposal of unwanted assets (Mizuyashiki, 2022; Zushi, 2022).

The current mechanism exhibits a mixed performance profile across desirable properties. Positively, it maintains CE-efficiency for all $n \geq 3$ (Proposition \ref{prop:npb-cee}) and achieves strategy-proofness when $n=3$ (Umenishi, 2025). However, critical incentive properties fail at larger, more realistic market sizes: strategy-proofness fails for $n \geq 4$ ($n=4$ Umenishi, 2025, $n \geq 5$: Proposition \ref{prop:npb-not-sp-n4}), and respecting improvement fails already at $n \geq 3$ (Proposition \ref{prop:npb-not-ri-all-n}). These failures generate the strategic dilemma of ``bleeding preparation'' (\emph{shukketsu kakugo}): clubs must choose between listing valuable players to secure better positions while risking loss, or listing inferior players safely but accepting worse opportunities (Zushi, 2022).

We propose reforms based on Online C-SD that exploits NPB's two-league structure as a natural assignment partition.
This mechanism offers a different trade-off: while it does not achieve market-wide CE-efficiency (satisfying instead the weaker efficiency under assignment partition, EAP), it simultaneously delivers CE, strategy-proofness, and respecting improvement-properties the current mechanism cannot jointly satisfy at realistic scales. By mandating cross-league transfers, our approach also resolves information asymmetries that plague same-league moves (Dojima Tigers, 2024). For completeness, we note that our CE-TTC benchmark (Section \ref{subsec:ce-efficiency}) achieves CE-efficiency and strategy-proofness but fails respecting improvement (Propositions \ref{prop:rttc-properties}, \ref{prop:cettc_fails_ri}), illustrating the fundamental trade-offs between different efficiency concepts and incentive properties under CE constraints.

\subsubsection{Background and the Current Mechanism}

The NPB Active Player Draft, introduced in 2022, serves multiple institutional objectives beyond simple player redistribution. The mechanism enhances competitive balance by providing opportunities for talented but underutilized players blocked by depth chart constraints, distinguishing it from typical free agency through its mandatory movement requirement \citep{nikkanSports2022}.

The current mechanism, simplified for a general number of $n$ clubs, operates as follows:\footnote{The main simplification is in the listing stage: while the actual NPB rules allow clubs to nominate multiple players ($\ge 2$), we focus on the single-player nomination case ($n$ clubs, $n$ players) to analyze the core mechanism. This simplification preserves the essential strategic elements while making the analysis tractable.}

\begin{enumerate}
\item
  Listing: Each of the $n$ clubs places one of its own players on the draft list. 
\item
  Voting: Every club reviews the list of $n$ players and secretly casts a vote for the one player it most desires. (Note that clubs are not allowed to vote for their own player.)
\item
  Draft-Priority Ranking: A draft-priority, denoted by an order relation $\triangleright^D$, is created. Clubs whose listed players receive more votes in Step 2 are granted higher priority. Ties are broken by a predetermined, exogenous priority order $\triangleright$.
\item
  Selection Round: A sequential selection process begins, following the draft-priority $\triangleright^D$.
  \begin{itemize}
  \item Vote-Binding Rule: When a club's turn arrives, if its voted-for player from Step 2 is still available, it must select that player. Otherwise, it may select any other available player.\footnote{We reserve $\triangleright$ for the exogenous priority (tie-breaking) and use $\triangleright^D$ for the draft-priority.}
  \item Chain Rule: When a club's listed player is selected, that club immediately gains the next right to select, provided it has not already made a selection.
  \item Fallback Rule: If the club whose player was just selected has already made its pick, the right to select passes to the highest-ranked club in the draft-priority $\triangleright^D$ that has not yet selected.
  \item Last-Two Rule: When only two clubs, say $i$ and $j$, remain that have not yet acquired a new player, and it is club $i$'s turn to select, club $i$ must select the player listed by club $j$.
  \item Self-avoidance: Note that clubs are not allowed to select their own player.
  \end{itemize}
\end{enumerate}

\subsubsection{Formalization and Properties of the Current Rule}

We can formalize the current NPB Active Player Draft using the notation of our paper. Let $N$ be the set of $n$ clubs. Each club $i \in N$ lists its own player, who we also denote by $i$, consistent with our $o(i)=i$ convention.

The mechanism is a variant of a chaining mechanism. The single vote has a dual role: it influences the draft-priority $\triangleright^D$, and it can constrain a club's choice during the selection round.

Crucially, the mechanism satisfies complete exchange by its very structure. The process is designed to continue until every club has lost one player and gained one player, ensuring no club retains its own listed player.

However, the mechanism fails to satisfy strategy-proofness for $n \ge 4$. The source of this failure is the combination of the vote's dual role and the last-two rule. This structure creates opportunities for strategic manipulation in what preference to indicate.

The properties of the current mechanism critically depend on the number of participating clubs. For the special case of $n=3$, the mechanism exhibits desirable properties.

\begin{proposition}[\citet{umenishiGenekiDorafutoRuuru2025}]\label{prop:npb-n3}
\emph{For $n=3$, the current NPB Active Player Draft mechanism satisfies CE-efficiency and strategy-proofness.}
\end{proposition}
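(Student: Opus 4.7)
The plan is to exploit the fact that for $n=3$ there are only two CE assignments, $\mu^a=(2,3,1)$ and $\mu^b=(3,1,2)$, and to reduce everything to a full characterization of the mechanism's output. First I would unroll the algorithm and show that, with only three clubs, the entire assignment is pinned down by the pair $(c_1, v_{c_1})$, where $c_1$ is the club whose listed player received the most votes (ties broken by $\triangleright$) and $v_{c_1}$ is its vote. Tracing the four selection rules: $c_1$ picks $v_{c_1}$ by vote-binding; the chain rule then transfers the turn to club $v_{c_1}$, which is now in the two-club situation and by the last-two rule is forced to pick the listed player of the third club $c_3$; the remaining club then picks $c_1$. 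The output is therefore the 3-cycle $c_1 \to v_{c_1} \to c_3 \to c_1$, so it always equals $\mu^a$ or $\mu^b$, with the identity determined explicitly by $(c_1, v_{c_1})$.

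For CE-efficiency, I would use that, with strict preferences, $\mu^a$ and $\mu^b$ differ on every division, so one Pareto-dominates the other only when all three divisions unanimously prefer it. There are exactly two such unanimity profiles; the corresponding truthful top-of-non-self votes are $(v_1,v_2,v_3)=(2,3,1)$ and $(3,1,2)$. A direct computation from the cyclic characterization shows that both of these vote vectors produce a three-way tie in vote counts, making $c_1=1$ by $\triangleright$, and the mechanism then outputs $\mu^a$ in the first case and $\mu^b$ in the second---precisely the dominating assignment in each case. In every other preference profile the two derangements are Pareto-incomparable, so any CE output is automatically CE-efficient.

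For strategy-proofness, the key observation is that the mechanism extracts from each reported preference only the top non-self player, so the strategic space for division $i$ collapses to a binary vote over $N\setminus\{i\}$. Fixing the other two votes $v_{-i}$, the cyclic characterization shows that as $v_i$ ranges over its two possible values the output is either constant (the deviation is irrelevant) or switches between $\mu^a$ and $\mu^b$; in the latter subcase the two outputs assign division $i$ its two distinct non-self players, and by definition of its true top $t_i$ the truthful vote selects the preferred one. Verifying this dichotomy for each of the three divisions and the four configurations of $v_{-i}$ completes the argument. The hard part is simply the bookkeeping---in particular, correctly tracking the interplay of the vote-binding, chain, fallback, and last-two rules in the two fully tied vote profiles---but once the cyclic characterization is in hand, both CE-efficiency and strategy-proofness reduce to small finite checks.
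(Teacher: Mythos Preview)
Your proposal is correct and is essentially the same approach as the paper's: both rest on the observation that for $n=3$ the mechanism's output depends only on each club's top non-self choice, reducing the problem to an exhaustion over the $2^3=8$ vote profiles, and both dispatch CE-efficiency and strategy-proofness by a finite case check. Your write-up is in fact more explicit than the paper's (which simply cites the exhaustion and omits details), adding a useful cyclic characterization of the output and a clean dichotomy for the SP step; just note that for $n=3$ the fallback rule never actually fires, so the bookkeeping you flag as ``hard'' is lighter than suggested.
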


\begin{proof}
The proof for the case of $n=3$ proceeds by exhaustion. The mechanism's outcome depends only on which of the two other players each club ranks highest. This partitions the space of all preference profiles into $2^3 = 8$ equivalence classes. A full case-by-case analysis of these classes, which is omitted for brevity, confirms that the mechanism satisfies the stated properties.
\end{proof}

However, the mechanism's desirable properties do not extend to larger markets. While CE-efficiency is achieved for all $n \ge 3$, the mechanism fails strategy-proofness for $n \ge 4$. Regarding respecting improvement, which was not examined in Proposition~\ref{prop:npb-n3}, the mechanism fails this property already at $n \ge 3$.

\begin{proposition}\label{prop:npb-cee}
For every market size $n \ge 3$ and every strict preference profile $\succ$, the assignment $\mu$ produced by the current NPB Active Player Draft mechanism is CE-efficient.
\end{proposition}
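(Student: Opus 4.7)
The plan is to verify that $\mu$ satisfies complete exchange and that $\mu$ is not Pareto dominated by any CE-compliant assignment. Complete exchange is immediate from the mechanism's construction: self-avoidance prevents any club from freely selecting its own player, and the last-two rule is specifically designed so that the final pair of pending clubs swap rather than one being stuck with its own listed player.

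For Pareto efficiency within the CE set, I would argue by contradiction. Suppose $\mu' \neq \mu$ is CE-compliant with $\mu'_i \succeq_i \mu_i$ for every $i \in N$. Let $i^*$ be the first club in the draft-priority order $\triangleright^D$ at which $\mu'$ and $\mu$ disagree. Because $\mu'_k = \mu_k$ for every club $k$ preceding $i^*$, the set $T$ of players consumed before $i^*$'s turn is identical under both assignments; bijectivity of $\mu'$ together with the CE property of $\mu'$ then force $\mu'_{i^*}$ to lie in the effective choice set $(N \setminus T) \setminus \{i^*\}$ that $i^*$ actually faced in the mechanism.

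I would then split into cases according to which rule governed $i^*$'s selection. In the vote-bound case, sincere voting makes $\mu_{i^*}$ the $\succ_{i^*}$-top element of $N \setminus \{i^*\}$; combined with $\mu'_{i^*} \succeq_{i^*} \mu_{i^*}$, CE, and strict preferences, this collapses to $\mu'_{i^*} = \mu_{i^*}$, contradicting the choice of $i^*$. In the free-pick case, $\mu_{i^*}$ is the $\succ_{i^*}$-maximum of the effective choice set, which contains $\mu'_{i^*}$, and the same conclusion follows by strictness. In the last-two case, only two players remain, and $\mu_{i^*} = j$ for the other pending club $j$; since $\mu'_{i^*} \neq j$, bijectivity of $\mu'$ pins $\mu'_j$ to $j$ itself, violating CE of $\mu'$.

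The hard part will be the last-two case, specifically the availability claim underlying it: when the last-two rule fires with pending clubs $\{i^*, j\}$, the listed player of $j$ must still be in the remaining pool. This is not obvious from the rule's bare statement and requires tracing the chain and fallback dynamics to argue that if $j$'s listed player had been consumed at an earlier step, the chain rule would have triggered $j$'s selection immediately thereafter, contradicting $j$'s continued pendency at the last-two moment. Once availability is settled, the CE violation for $\mu'$ is a two-cycle argument essentially forced by bijectivity on the two-element remaining set.
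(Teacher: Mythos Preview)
Your overall architecture matches the paper's: assume a Pareto-dominating CE assignment $\mu'$, locate the first divergence, show $\mu'_{i^*}$ is available at $i^*$'s turn, and split into vote-bound, free-pick, and last-two cases. Your handling of the last-two case is in fact cleaner than the paper's appendix version: you correctly reduce it to a CE violation ($\mu'_j=j$), and you rightly flag that availability of player $j$ in the remaining pool requires the owner-call argument.

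There is, however, one genuine gap. You define $i^*$ as the first disagreement in the \emph{draft-priority order} $\triangleright^D$. But the NPB mechanism does not process clubs in $\triangleright^D$ order: the chain rule (owner-call) can hand the turn to a club ranked arbitrarily low in $\triangleright^D$ immediately after its player is taken. Consequently, the set $\{k: k\,\triangleright^D\, i^*\}$ is \emph{not} the set of clubs that actually selected before $i^*$, so your equality ``$T=\{\mu_k: k \text{ precedes } i^*\}$'' fails, and with it the claim that $\mu'_{i^*}$ lies in $i^*$'s effective choice set. Concretely, a club $j$ with $i^*\,\triangleright^D\, j$ may have moved before $i^*$ via owner-call and consumed $\mu'_{i^*}$; since $\mu_j=\mu'_j$ is not guaranteed for such $j$, bijectivity of $\mu'$ does not rule this out.

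The repair is exactly what the paper does: let $(i_1,\dots,i_n)$ be the \emph{realized mover sequence} produced by the chain/fallback/last-two dynamics, and take $i^*=i_{t^*}$ where $t^*$ is the first index with $\mu_{i_{t^*}}\neq\mu'_{i_{t^*}}$. Then the clubs preceding $i^*$ are precisely those that consumed players before $i^*$'s turn, equation~(1) of the paper holds, and your availability and case analysis go through verbatim.
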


\begin{proof}[Proof Sketch]
Fix $n\ge 3$ and $\succ$. Let $\mu$ be the assignment produced by the draft; suppose a CE-assignment $\mu'$ Pareto-dominates $\mu$. Let $(i_1,\dots,i_n)$ be the realized mover sequence producing $\mu$, and let $i$ be the first club with $\mu_i\neq \mu'_i$. At $i$'s move, $\mu'_i$ is available. If the last-two rule applies, any swap of the two remaining players would violate the CE constraint (since both players must be assigned to different clubs), forcing the assignment to remain unchanged, which contradicts $\mu_i\neq \mu'_i$. If vote-binding is inactive at $i$, the mechanism assigns $i$'s top available player, so it would pick $\mu'_i$. If vote-binding is active, the voted player $v(i)$ is $i$'s top admissible player and is available, hence $v(i)\succ_i \mu'_i$. All cases contradict Pareto domination. Full details are in Appendix~\ref{app:npb-cee-proof}.
\end{proof}

\citet{umenishiGenekiDorafutoRuuru2025} showed that the current NPB Active Player Draft mechanism is not strategy-proof for $n = 4$. Below, we generalize this result to all $n \ge 4$ by embedding the counterexample from Umenishi [2025].

\begin{proposition}\label{prop:npb-not-sp-n4}
\emph{The current NPB Active Player Draft mechanism is not strategy-proof for $n \ge 4$.}
\end{proposition}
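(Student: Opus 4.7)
The plan is to reduce the general case to the $n=4$ case already established by \citet{umenishiGenekiDorafutoRuuru2025} via an embedding argument. Let $(\succ^{(4)}, \succ'_i)$ denote Umenishi's counterexample on clubs $\{1,2,3,4\}$, where truthfully reporting $\succ^{(4)}_i$ yields a strictly worse outcome for some club $i\in\{1,2,3,4\}$ than misreporting as $\succ'_i$. For any $n\ge 5$, I would construct a profile $\succ^{(n)}$ that extends $\succ^{(4)}$ by adding $n-4$ \emph{filler} clubs whose participation leaves the subgame on clubs $\{1,2,3,4\}$ and players $\{1,2,3,4\}$ completely intact. Concretely, I would give each filler club $j\in\{5,\dots,n\}$ a preference that ranks filler players (say in a single cycle $5\to 6\to\cdots\to n\to 5$) at the top and all players in $\{1,2,3,4\}$ strictly below every filler player; symmetrically, I would extend each club $k\in\{1,2,3,4\}$'s preferences from $\succ^{(4)}_k$ by appending the filler players in some fixed order at the bottom. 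The deviating club $i$'s misreport $\succ'_i$ is extended in the same way.

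The second step is to show that the current NPB mechanism, run on $\succ^{(n)}$ (and on $(\succ'_i,\succ^{(n)}_{-i})$), decomposes into two independent sub-drafts: one on $(\{1,2,3,4\},\{1,2,3,4\})$ and one on $(\{5,\dots,n\},\{5,\dots,n\})$. Since each club votes for its most-desired player and top preferences are confined within each group, every vote in Step~2 is cast for a player inside the voter's own group. Choosing the exogenous tie-breaking priority $\triangleright$ so that clubs in $\{1,2,3,4\}$ are ranked above filler clubs (or otherwise consistent with Umenishi's example) ensures the draft-priority $\triangleright^D$ has all four original clubs placed first, in exactly the order produced by the $n=4$ subgame. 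In Step~4, when a club in $\{1,2,3,4\}$ selects, its vote-bound or freely chosen player lies in $\{1,2,3,4\}$ (since filler players rank strictly below every $\{1,2,3,4\}$-player); hence the chain stays within the original group, exactly as in the $n=4$ execution. After the first four picks, only filler clubs and filler players remain, so the filler sub-draft runs on its own, independently of $i$'s report. The induced assignment for clubs $\{1,2,3,4\}$ therefore coincides with what the $n=4$ mechanism would produce on $\succ^{(4)}$, and analogously for the deviation $\succ'_i$.

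Once this invariance is in place, Umenishi's profitability inequality transfers verbatim: $f_i(\succ'_i,\succ^{(n)}_{-i})\succ^{(n)}_i f_i(\succ^{(n)})$, violating SP. The main obstacle I expect to manage carefully is verifying that the chain, fallback, and last-two rules do not couple the two sub-drafts. The last-two rule in particular must be checked: since at the moment it could trigger for clubs $\{1,2,3,4\}$ there are still at least $n-4\ge 1$ filler clubs without assignments, ``only two clubs remain'' would be false within the original sub-draft until after the original sub-draft is complete. I would therefore phrase the embedding so that the last-two clause only activates in the filler sub-draft, after all of $\{1,2,3,4\}$ have been served, which is automatic under the construction above. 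Handling the corner case $n=5$ (where the filler ``group'' has a single club and the last-two rule can trigger earlier) may require a minor adjustment of the tie-breaking order $\triangleright$ or the choice of which club serves as the deviator, but the argument remains structurally the same.
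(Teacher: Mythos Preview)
Your embedding plan has a genuine gap, and it is precisely at the point you flag yourself: the last-two rule. In the $n=4$ counterexample, the last-two rule is the \emph{engine} of the SP failure. Under truthful reporting, the chain consumes two players, and when the deviating club moves there are exactly two clubs left; the last-two rule then forces it to take the \emph{other} club's listed player rather than its voted-for top choice, which is still available. That forced bad pick is what the misreport avoids. In your embedding for $n\ge 5$, at the analogous moment there are still $n-4\ge 1$ filler clubs unassigned, so ``only two clubs remain'' is false, the last-two rule does \emph{not} fire, and vote-binding kicks in instead: the deviating club simply takes its top player under truth. The profitable deviation evaporates. Your sentence ``I would therefore phrase the embedding so that the last-two clause only activates in the filler sub-draft, after all of $\{1,2,3,4\}$ have been served'' is exactly backwards: you need the last-two clause to activate \emph{inside} the $\{1,2,3,4\}$ sub-draft for the counterexample to survive, but with fillers present it cannot.

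Trying to repair this by having the filler clubs processed \emph{first} runs into the vote-count structure of the draft-priority: in the $n=4$ profile, player~$1$ receives two votes, so club~$1$ is pinned to the top of $\triangleright^D$, and any attempt to push filler clubs above it requires concentrating multiple filler votes on a single filler player, which in turn breaks the self-contained filler cycle and tangles the chain/fallback transitions with the original four clubs. The paper sidesteps all of this by abandoning the embedding idea and instead \emph{scaling} the counterexample: for each $n$ it builds a length-$(n-2)$ chain $1\to 2\to\cdots\to n-2$ that consumes players $2,\dots,n-1$, so that clubs $n-1$ and $n$ are genuinely the last two and the last-two rule fires for them. The deviating club is $n-1$, and the misreport shifts a vote so as to reorder the draft-priority and let $n-1$ reach its target before the last-two stage.
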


\begin{proof}[Proof Sketch]
Fix $n\ge 4$ and the exogenous tie-break $1 \triangleright \cdots \triangleright n$. We construct a preference profile $\succ$ and a misreport $\succ'_{n-1}$ where club $(n-1)$ obtains a better outcome by misreporting, violating strategy-proofness.

Consider a profile $\succ$ where: (1) each club $k \in \{1,2,\ldots,n-2\}$ ranks player $(k+1)$ first; (2) club $(n-1)$ ranks player $1$ first, then player $2$, then player $n$; (3) club $n$ ranks player $1$ first in draft-priority. This creates a deterministic selection chain: clubs $1,2,\ldots,n-2$ select players $2,3,\ldots,n-1$ respectively, leaving only clubs $\{n-1,n\}$ and players $\{1,n\}$. By the last-two rule, club $(n-1)$ must select player $n$. Thus $f_{n-1}(\succ) = n$.

Now consider the misreport $\succ'_{n-1}$ where club $(n-1)$ changes its vote to player $2$ (ranking player $2$ first, then player $1$, then player $n$). By not voting for player $1$, club $(n-1)$ breaks out of the selection chain, allowing player $2$ to remain available when it is club $(n-1)$'s turn to select. Under this misreport, club $(n-1)$ can select player $2$ when it moves, since more than two clubs remain and vote-binding applies. Thus $f_{n-1}(\succ'_{n-1},\succ_{-(n-1)}) = 2$.

Since $2 \succ_{n-1} n$ by construction, we have $f_{n-1}(\succ'_{n-1},\succ_{-(n-1)}) \succ_{n-1} f_{n-1}(\succ)$, violating strategy-proofness. Details are in Appendix~\ref{app:npb-sp-proof}.
\end{proof}

  \begin{proposition}\label{prop:npb-not-ri-all-n}
  The current NPB Active Player Draft mechanism fails respecting improvement for every market size $n \ge 3$.
  \end{proposition}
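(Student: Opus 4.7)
The plan is to prove the proposition by exhibiting, for each $n \ge 3$, an explicit pair $(\succ, \succ')$ of preference profiles such that $\succ'$ is an improvement for club $3$ and $f_3(\succ) \succ_3 f_3(\succ')$, thereby violating RI. The design will exploit the interplay of vote-binding, the chain rule, and the last-two rule: under $\succ$ the chain propagates along a path that skips club $3$ and terminates in a last-two step forcing some other club to pick player $3$, so club $3$ obtains player $1$ (its top) at the final step; under $\succ'$, a single vote shift caused by player $3$'s improved rank reroutes the chain so that it reaches club $3$ at a last-two step, forcing a strictly inferior selection.

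For the base case $n=3$, I would take $\succ$ with club $1$'s top (among non-self players) at player $2$ and clubs $2, 3$ both voting player $1$. Under the exogenous order $1 \triangleright 2 \triangleright 3$, club $1$ picks player $2$ by vote-binding, chain passes to club $2$, which by last-two on $\{2,3\}$ picks player $3$, and club $3$ then picks player $1$. Defining $\succ'$ by promoting player $3$ above player $2$ in club $1$'s preferences (all other relative orders unchanged) shifts club $1$'s vote to player $3$; after a routine check of the conditions of Definition~\ref{def:RI}, under $\succ'$ club $1$ picks player $3$, chain passes to club $3$, and last-two on $\{2,3\}$ forces club $3$ to pick player $2$, strictly worse than player $1$.

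For $n \ge 4$, I would use the same idea with a longer chain. Design $\succ$ so that the chain reads $1 \to 2 \to 4 \to 5 \to \cdots \to n$, with each intermediate club's top among non-self players being the next club's listed player (so vote-binding carries the chain forward); at step $n-1$ the last-two rule on $\{3, n\}$ (or on $\{2, 3\}$ when $n=4$) forces the penultimate chained club to pick player $3$, which chain-passes to club $3$ at the final step with only player $1$ remaining. Define $\succ'$ by promoting player $3$ to the top of exactly one intermediate club's preferences---club $2$'s top from player $4$ to player $3$ when $n=4$, club $n-1$'s top from player $n$ to player $3$ when $n \ge 5$---leaving all other relative rankings fixed. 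This single change reroutes the chain: player $3$ is picked one step earlier, chain-passes club $3$ into the last-two position at step $n-1$, and forces club $3$ to pick the listed player of the remaining club (player $4$ for $n=4$, player $n$ for $n \ge 5$), strictly inferior to player $1$.

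The main obstacle will be the bookkeeping required at each $n$: one must verify that (i) the vote shift defining $\succ'$ satisfies the improvement conditions verbatim (player $3$ only rises in every other club's ranking and the relative order of players other than player $3$ is preserved); (ii) the resulting vote counts still produce the intended draft-priority order so that the chain starts at club $1$ and unfolds as designed; and (iii) the last-two rule engages precisely at the pivot step and, where it conflicts with vote-binding, takes precedence consistently with how the paper resolves such conflicts. The parametric form of the construction keeps the key invariants (club $1$'s vote for player $2$, club $3$'s vote for player $1$, and the terminal pair $\{3, n\}$) stable across $n$, which should make the verification uniform for all $n \ge 4$ once the cases $n=3$ and $n=4$ are handled explicitly.
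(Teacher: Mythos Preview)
Your plan is sound and would yield a valid proof. It follows the same core idea as the paper---an improvement in the victim's worker reroutes the selection chain so that the victim, rather than some other club, is the one caught by the last-two rule---but your realization differs in its details. The paper takes club $(n-1)$ as the victim, uses the straight chain $1\to 2\to\cdots\to(n-2)$, and obtains the improvement by a single swap of players $(n-1)$ and $n$ in club $(n-2)$'s list; this works uniformly for every $n\ge 3$ without case distinctions. Your construction instead fixes club $3$ as the victim and builds a chain that skips it, which forces you to treat $n=3$, $n=4$, and $n\ge 5$ separately and to place the improvement at different clubs in each case. Both routes hit the same phenomenon, but the paper's is a bit cleaner to write out.

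One small slip: for $n=4$ under your profile $\succ$ the chain is $1\to 2\to 4$, so the last-two pair is $\{3,4\}=\{3,n\}$, not $\{2,3\}$; your general formula $\{3,n\}$ already covers it, so the parenthetical exception should be dropped. Your obstacle (iii)---that the last-two rule must override vote-binding---is exactly the hinge the paper's counterexample relies on as well (clubs $n-1$ and $n$ both vote for player $1$, yet the last-two rule forces whichever of them arrives first to take the other's listed player), so you are reading the mechanism consistently with the paper.
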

  
  \begin{proof}[Proof Sketch]
  Fix $1 \triangleright \cdots \triangleright n$. We construct two preference profiles $\succ$ and $\succ'$ where $\succ$ is an improvement for club $(n-1)$ with respect to $\succ'$, but club $(n-1)$ obtains a worse outcome under $\succ$ than under $\succ'$.

Consider a preference profile $\succ$ such that (1) each club $k \in \{1,2,\ldots,n-3\}$ ranks player $(k+1)$ first; (2) club $(n-2)$ ranks player $(n-1)$ first and player $n$ second; (3) both clubs $(n-1)$ and $n$ rank player $1$ first. This creates a deterministic selection chain: clubs $1,2,\ldots,n-2$ select players $2,3,\ldots,n-1$ respectively, leaving only clubs $\{n-1,n\}$ and players $\{1,n\}$. By the last-two rule, club $(n-1)$ must select player $n$, leaving player $1$ for club $n$. Thus $f_{n-1}(\succ) = n$.

Now consider $\succ'$ where we swap players $n$ and $(n-1)$ in club $(n-2)$'s preferences, keeping everything else unchanged. Under $\succ'$, club $(n-2)$ selects player $n$, owner-call passes to club $n$, and the last-two rule forces club $n$ to select player $(n-1)$, leaving player $1$ for club $(n-1)$. Thus $f_{n-1}(\succ') = 1$.

Since $\succ$ is an improvement for club $(n-1)$ with respect to $\succ'$ (player $(n-1)$ moves up in club $(n-2)$'s preferences) but $f_{n-1}(\succ) = n \prec_{n-1} 1 = f_{n-1}(\succ')$, this contradicts respecting improvement. Details are in Appendix~\ref{app:npb-ri-proof}.
  \end{proof}

\subsubsection{A Strategy-Proof Online Alternative with League-Based Partition}

While the current draft incorporates well-designed elements, its strategic vulnerability stems from the interaction of these rules. Our framework, particularly the online implementation Online C-SD, preserves the appealing features of the original rules while offering CE-efficient, strategy-proof, and respecting improvement solutions. A practical issue with the current draft is the difficulty of players succeeding after same-league transfers due to information asymmetries \citep{dojimatigersGenekiDorafuto2024}. Our proposal addresses this by using NPB's two-league structure as an assignment partition and mandating cross-league transfers.

We model the NPB environment as follows. NPB consists of 12 clubs divided into two leagues: the Central League and the Pacific League, with 6 clubs each. Let A and B denote the two leagues (A and B are placeholders and need not correspond to Central or Pacific in any fixed way). Let $N_A$ and $N_B$ be the sets of clubs in the two leagues, with $|N_A|=|N_B|=6$. We impose cross-exchange as an assignment partition by setting $(N_A,X_A)=(N_A,N_B)$ and $(N_B,X_B)=(N_B,N_A)$, so that A-league clubs can select only from B-league players, and vice versa. This structure delivers CE by construction and directly targets the practical concern that same-league moves often underperform.

For the exogenous priority convention, we use the alternating global priority that matches reported NPB practice: starting from the last-place club in the All-Star winner league, clubs move up the standings while alternating the league at every step \citep{nikkanSports2022}. In notation, if we label clubs within each league in the order they would appear under this rule,
\[
A^{(1)} \triangleright B^{(1)} \triangleright A^{(2)} \triangleright B^{(2)} \triangleright \cdots \triangleright A^{(6)} \triangleright B^{(6)},
\]
and we treat this alternating order as an exogenous priority input to the mechanisms. This convention will also be used in our quantitative comparisons below.

Using this two-league, cross-exchange partition, our online mechanism provides a strategy-proof alternative that refines the logic of the current draft while repairing its incentive flaws: Online C-SD retains the dynamic ''chain'' flavor with binding choices and owner-call/fallback.
It satisfies CE, SP, and RI and achieve efficiency under the assignment partition (EAP) within each league.

\paragraph{Proposed Mechanism: Online C-SD for the NPB Draft}

This proposal formalizes and strengthens the ``chain rule'' from the current draft, making it the central feature of the selection process.

\begin{enumerate}
\item
  Assignment partition: The draft is partitioned into the Central and Pacific Leagues, mandating cross-league transfers.
\item
  Exogenous Priority: A fixed priority order $\triangleright$ over all 12 clubs is established (e.g., based on the reverse order of the annual rookie draft).
\item
  Online C-SD Execution: A live, online selection process unfolds.
  \begin{itemize}
  \item The highest-priority club in $\triangleright$ begins. It selects its most-preferred player from the opposing league's pool.
  \item The right to select immediately transfers to the club that owned the just-selected player (the core ''chaining'' feature).
  \item The process continues according to the Online C-SD chaining and tie-breaking rules until all 12 clubs have selected one player.
  \end{itemize}
\end{enumerate}

This mechanism builds on the familiar ''chain rule'' logic from the current draft, making it an intuitive reform that eliminates strategic manipulation while preserving the draft's core operational structure.

\subsection{Job Rotation: Theory and Practice}\label{subsec:job-rotation-theory-practice}

To clarify our contribution, we compare our mechanisms with the most relevant benchmark: the Backward Top Trading Cycle (BTTC) mechanism proposed by \citet{yuMarketDesignApproach2020} for job rotation problems.\footnote{The robustness of our choice of BTTC as the benchmark is underscored by recent findings. \citet{guTwoMechanismsJob2024a} show that in our setting, BTTC is outcome-equivalent to two other mechanisms adapted from the housing market literature. Our analysis of BTTC therefore implicitly applies to this entire class of mechanisms.}

There are several reasons why BTTC is the appropriate benchmark. First, while we analyze a division-worker assignment problem and they analyze a worker-position problem, both are formally one-sided matching markets, making their mechanism structurally applicable to our setting. Second, and more importantly, the core motivation of their work aligns with ours: designing a mechanism to facilitate exchange. While our framework imposes complete exchange as a hard constraint, their mechanism \emph{induces} it through a novel priority rule where an incumbent has the lowest priority for their own position. This makes BTTC the closest existing mechanism in spirit and function to our problem.

Therefore, evaluating BTTC within our framework is a crucial step. Our analysis is not a critique of their mechanism for its intended purpose, for which it has excellent properties like stability and weak group strategy-proofness. Rather, we use BTTC as a benchmark to answer a new question critical to our setting: does the most sophisticated mechanism for \emph{inducing} rotation also satisfy the investment incentives (\emph{respecting improvement}) required in a world where exchange is \emph{mandatory}? As we will show, it does not, and this finding provides the central motivation for the development of our C-SD and T-SD mechanisms.

\subsubsection{The BTTC Mechanism: A Reinterpretation for Our Framework}

The BTTC mechanism, as adapted to our framework, is defined as follows. The key modification from the standard TTC is that \emph{divisions cannot point to their initially assigned workers} (except when only one division remains), which reflects the goal of inducing rotation.

\begin{definition}[BTTC Mechanism]\label{def:BTTC}
Given an assignment problem $\mathcal{A}=(N, X, \succ)$, where $N$ is the set of both divisions and workers, and division $i \in N$ initially owns worker $i$. The BTTC mechanism $f^B$ executes as follows:

Stage 1 (Generating Cycles):
\begin{itemize}
\item Step 1: Set $N^1 = N$ (all divisions and workers are present).
\item In each step $t \ge 1$:
  \begin{itemize}
  \item Each remaining worker $w \in N^t$ points to its initial owner, division $w$.
  \item Each remaining division $i \in N^t$:
    \begin{itemize}
    \item If $|N^t| > 1$, it points to its most preferred worker in $N^t \setminus \{i\}$.
    \item If $|N^t| = 1$, it points to worker $i$.
    \end{itemize}
  \item Identify all cycles and remove involved divisions and workers.
  \item Set $N^{t+1} = N^t$ minus the divisions/workers removed in this step.
  \item Continue until all divisions are in cycles.
  \end{itemize}
\end{itemize}

Stage 2 (Clearing Cycles by Backward Induction):
Let $\bar{t}$ be the final step of Stage 1. For each cycle $c$ formed in step $t$, let $I(c)$ denote the set of divisions in $c$, and let $p_i$ denote the worker that division $i$ points to.

\begin{itemize}
\item For cycles in step $\bar{t}$ (the last step):
  \begin{itemize}
  \item If all $i \in I(c)$ prefer keeping their own worker to the proposed trade (i.e., $i \succ_i p_i$), then they form a ''staying'' cycle where $\mu_i = i$ for all $i \in I(c)$.
  \item Otherwise, they form a ''trading'' cycle where $\mu_i = p_i$ for all $i \in I(c)$.
  \end{itemize}
\item For cycles in steps $t < \bar{t}$ (processed in reverse order from $t = \bar{t}-1$ down to 1):
  \begin{itemize}
  \item A cycle $c$ at step $t$ is a ''staying'' cycle if all divisions in it prefer to keep their own worker ($i \succ_i p_i$ for all $i \in I(c)$) AND all divisions $j$ in cycles from later steps ($t+1, \ldots, \bar{t}$) prefer their assigned workers to any worker in cycle $c$.
  \item Otherwise, cycle $c$ is a ''trading'' cycle.
  \end{itemize}
\end{itemize}
\end{definition}

\subsubsection{Desirable Properties of BTTC}

\citet{yuMarketDesignApproach2020} justify the BTTC mechanism by demonstrating that it fulfills several crucial properties within the unique context of job rotation. The key to understanding these properties lies in the ''Job Rotation Priority'' structure they impose on the model, which stands in stark contrast to the traditional housing market model of \citet{shapleyCoresIndivisibility1974}. This priority structure is defined as follows: each incumbent worker has the \emph{lowest} priority for their current job, while all other workers have an equally high priority for it. This can be interpreted as the incumbent possessing a right to ''veto'' staying in their current position, and it is this unique structure that enables the following properties.

\paragraph{Stability.} Typically a concept for two-sided markets, stability is achieved here because the priority structure implicitly defines the jobs' preferences. Since an incumbent worker $k$ has the lowest priority for their job $p_k$, the job ''prefers'' any other worker $j$ to its incumbent. A blocking pair can therefore be formed if a worker $j$ prefers job $p_k$ over their own assignment. BTTC is designed to find an allocation where no such blocking pairs exist, leading \citet{yuMarketDesignApproach2020} to conclude it is the ''best stable mechanism'' for this setting. This ensures the resulting assignment is sustainable and participants have little incentive to deviate.

\paragraph{Constrained Efficiency.} The BTTC mechanism is not fully Pareto efficient, but it is \emph{constrained efficient}. The ''constraint'' is precisely the job rotation priority structure, which grants every worker the right to refuse their current job. BTTC is desirable because it finds a Pareto efficient allocation within the set of assignments that respect this institutional constraint.

\paragraph{Weakly Group Strategy-Proof.} An even stronger incentive property, BTTC is shown to be \emph{weakly group strategy-proof}. This means that no group of divisions can misreport their preferences in a way that makes \emph{all} members of the group strictly better off. This is a powerful result, especially in light of the well-known impossibility theorem stating that for four or more divisions, no mechanism can be simultaneously stable, Pareto efficient, and (individually) strategy-proof. Weak group strategy-proofness is therefore a compelling and realistic property that minimizes incentive distortions while maintaining stability and constrained efficiency.

\subsubsection{The Failure of BTTC to Respect Improvement}

These properties establish BTTC as a robust benchmark for mechanisms designed to induce rotation. Given its formal applicability to our division-worker setting, it is the natural first candidate to consider for our problem of enforcing complete exchange. The critical question for our purposes, therefore, is whether this sophisticated mechanism \emph{also} provides the investment incentives crucial for mandatory exchange systems---that is, whether it satisfies respecting improvement.

Indeed, had BTTC satisfied respecting improvement, it would represent a powerful, pre-existing approach for institutions with mandatory rotation. However, as we now demonstrate, this is not the case. The finding that BTTC violates RI is not a critique of the mechanism for its original purpose, but rather the central motivation for our work: it reveals a critical need for new mechanisms specifically designed to satisfy RI under a strict complete exchange constraint.

\begin{proposition}\label{prop:bttc-fails-ri}
\emph{BTTC does not satisfy respecting improvement.}
\end{proposition}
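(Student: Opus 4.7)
The plan is to prove the negative result by constructing an explicit counterexample: a preference profile $\succ$, a distinguished division $i$, and an improvement $\succ'$ for $i$ with respect to $\succ$ such that $f^B_i(\succ) \succ_i f^B_i(\succ')$. Since BTTC is a deterministic procedure, verifying the failure reduces to running the mechanism on two small profiles and comparing the assignments of division $i$.

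First, I would look for the smallest instance in which BTTC can exhibit non-monotonic behavior. A three-division case is attractive because the set of possible cycle patterns is limited and the backward-induction step is transparent; if that fails, I would move to $n=4$, which gives enough room for a non-trivial ''staying'' cycle to coexist with a ''trading'' cycle and for an improvement to change how Stage 2 resolves them. The heuristic guiding the construction is that BTTC's vulnerability to RI lives in the interaction between Stage 1 cycle formation and Stage 2 backward-induction reclassification: by raising worker $i$ in some other division $j$'s preferences, I can redirect $j$'s pointer in Stage 1 so that a different cycle forms, which in turn changes which cycle is classified as ''staying'' versus ''trading'' in Stage 2, leaving division $i$ in a worse cycle.

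Concretely, I would (i) pick a base profile $\succ$ in which a favorable cycle containing $i$ forms at some step $t$ and is resolved as a trading cycle, giving $i$ a preferred worker; (ii) then modify only division $j$'s ranking, strictly raising worker $i$ while preserving the relative order of all other workers and keeping every other division's preferences (including $i$'s own) unchanged, so that Definition 2.2 is satisfied; (iii) re-run BTTC on $\succ'$, tracing the new cycle structure step by step; (iv) verify that the new assignment gives $i$ a worker that $i$ strictly disprefers to the original assignment. The last step is a direct comparison using $\succ_i$, which is unchanged by assumption.

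The main obstacle will be calibrating the profile so that the improvement actually flips the backward-induction classification rather than merely reshuffling trading partners in a way that happens to leave $i$'s assignment weakly better. Because BTTC's Stage 2 condition for a cycle to ''stay'' depends not only on the cycle's own members preferring their endowments but also on members of later-removed cycles not wanting any worker in the current cycle, the improvement must be designed so that raising worker $i$ in division $j$'s list changes either the composition of a late cycle or the pattern of preferences those late-cycle divisions have over the workers in $i$'s cycle. Once such a profile is identified, the rest of the argument is bookkeeping: exhibit both executions, record the assignments of $i$, and invoke the definition of RI to conclude the violation.
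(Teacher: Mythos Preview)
Your overall strategy --- exhibit a small counterexample and trace BTTC on both profiles --- is exactly what the paper does, and $n=3$ already suffices. But the heuristic in your step~(i) points the search in a harder direction: you propose to start from a baseline where $i$ sits in a \emph{trading} cycle and obtains a preferred worker, then look for an improvement that ``flips the backward-induction classification'' in Stage~2. That is more delicate than necessary and may push you toward $n=4$ for no reason. The cleanest route is to let $i$ rank its \emph{own} worker first. In the baseline, $i$ is left alone in a last-step \emph{staying} cycle and keeps worker $i$, its top choice. The improvement then makes some other division $j$ point to worker $i$ in Step~1, so the cycle $(i,j)$ forms early; because $j$ strictly prefers worker $i$ to its own endowment, this cycle is automatically a trading cycle, and $i$ is forced to take $j$'s worker. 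No subtle Stage-2 reclassification is involved --- the violation lives entirely in Stage~1's no-self-pointing rule plus a single redirected pointer. The paper's proof is precisely this with $N=\{1,2,3\}$, $i=1$, $j=2$: the outcome moves from $(1,3,2)$ to $(2,1,3)$, so $f^B_1$ drops from worker $1$ to worker $2$.

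As written, your proposal is a plan rather than a proof: it never commits to concrete profiles or traces the two BTTC runs. To finish, write down the three-division preferences, verify the Stage~1 cycle structure and Stage~2 resolution under both $\succ$ and $\succ'$, check that $\succ'$ meets the three clauses of Definition~\ref{def:RI}, and record that $f^B_1(\succ)=1 \succ_1 2 = f^B_1(\succ')$.
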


\begin{proof}
Consider the following counterexample with $n=3$:

Base preference profile $\succ$:
\begin{center}
\begin{tabular}{ll}
\toprule
Division & Preference \\
\midrule
1 & $1 \succ_1 2 \succ_1 3$ \\
2 & $3 \succ_2 1 \succ_2 2$ \\
3 & $2 \succ_3 3 \succ_3 1$ \\
\bottomrule
\end{tabular}
\end{center}
Under $\succ$, the BTTC outcome is $f^B(\succ) = (1,3,2)$. Division 1 gets its most-preferred worker, 1.

BTTC under $\succ$:
\begin{itemize}
  \item Stage 1 (Cycle Generation):
  \begin{itemize}
    \item Step 1: Divisions point: 1 to 2, 2 to 3, 3 to 2.
    \item Workers point: 1 to 1, 2 to 2, 3 to 3.
    \item Formed cycles: $(2 \to 3 \to 2)$. Divisions 2 and 3 get their pointed workers.
    \item Remaining: Division 1 and Worker 1.
    \item Step 2: Division 1 points to 1. Worker 1 points to 1. Cycle $(1 \to 1)$.
  \end{itemize}
  \item Stage 2 (Backward Clearing):
  \begin{itemize}
    \item The cycle from Step 2, (1), is a staying cycle (Div 1 prefers 1). So, $\mu_1 = 1$.
    \item The cycle from Step 1, (2,3), is a trading cycle (Div 2 prefers 3, Div 3 prefers 2). So, $\mu_2 = 3$, $\mu_3 = 2$.
  \end{itemize}
\end{itemize}
Result: $f^B(\succ) = (1,3,2)$. Division 1 gets its most-preferred worker, 1.

Now, consider an improved profile $\succ'$ for division 1. Worker 1 becomes more attractive to division 2. The preferences are identical to $\succ$ except for division 2:
\begin{center}
\begin{tabular}{ll}
\toprule
Division & Preference \\
\midrule
1 & $1 \succ_1 2 \succ_1 3$ (unchanged) \\
2 & $1 \succ_2 3 \succ_2 2$ (worker 1's rank improves for Div 2) \\
3 & $2 \succ_3 3 \succ_3 1$ (unchanged) \\
\bottomrule
\end{tabular}
\end{center}
Under $\succ'$, the BTTC outcome is $f^B(\succ') = (2,1,3)$. Division 1 gets its second-preferred worker, 2.

BTTC under $\succ'$:
\begin{itemize}
  \item Stage 1 (Cycle Generation):
  \begin{itemize}
    \item Step 1: Divisions point: 1 to 2, 2 to 1, 3 to 2.
    \item Workers point: 1 to 1, 2 to 2, 3 to 3.
    \item Formed cycles: $(1 \to 2 \to 1)$. Divisions 1 and 2 get their pointed workers.
    \item Remaining: Division 3 and Worker 3.
    \item Step 2: Division 3 points to 3. Worker 3 points to 3. Cycle $(3 \to 3)$.
  \end{itemize}
  \item Stage 2 (Backward Clearing):
  \begin{itemize}
    \item The cycle from Step 2, (3), is a staying cycle (Div 3 prefers 3). So, $\mu_3 = 3$.
    \item The cycle from Step 1, (1,2), is a trading cycle (Div 1 prefers 2, Div 2 prefers 1). So, $\mu_1  = 2$, $\mu_2 = 1$.
  \end{itemize}
\end{itemize}
Result: $f^B(\succ') = (2,1,3)$. Division 1 gets its second-preferred worker, 2.

Comparing the outcomes for division 1: under $\succ$, it received worker 1; under $\succ'$, it received worker 2. Since $f^B_1(\succ) = 1 \succ_1 2 = f^B_1(\succ')$, division 1 is made strictly worse off by an improvement in its own worker's quality. Therefore, BTTC violates respecting improvement.
\end{proof}

\begin{remark}[Comparison with TTC]\label{rem:bttc-vs-ttc}
It is instructive to compare the outcome of BTTC with the standard Top Trading Cycles mechanism using the same counterexample. As established in the literature (e.g., \citet{biroShapleyScarfHousing2024}), TTC satisfies respecting improvement. We can verify this for the profiles $\succ$ and $\succ'$.

\begin{itemize}
    \item Under profile $\succ$: Division 1 points to worker 1 (forming cycle (1)), while division 2 points to 3 and 3 points to 2 (forming cycle (2,3)). TTC clears these simultaneously, yielding the assignment $(1, 3, 2)$.
    \item Under profile $\succ'$: Division 1 still points to worker 1, forming cycle (1). This cycle is cleared first. In the remaining problem with divisions $\{2, 3\}$ and workers $\{2, 3\}$, division 2 points to 3 and 3 points to 2, forming cycle (2,3). This is also cleared. The final assignment is identical: $(1, 3, 2)$.
\end{itemize}
Since division 1's assignment is unchanged ($f_1(\succ')=1$, $f_1(\succ)=1$), TTC satisfies RI in this case. The crucial difference is that TTC allows division 1 to point to its own worker, immediately securing its assignment. In contrast, BTTC's primary rule, which forbids pointing to one's own worker, forces division 1 into a trading cycle that ultimately makes it worse off after the improvement. This highlights how adaptations intended to induce rotation can break desirable incentive properties.
\end{remark}

The violation occurs because the constraint preventing divisions from pointing to their initial workers causes different cycle structures to form under the improved profile. This can lead to the paradoxical result where the increased desirability of a division's worker harms that division.

This limitation of BTTC---its failure to satisfy respecting improvement---has important practical implications for real-world job rotation systems. While BTTC excels at inducing rotation through its ingenious priority structure, it cannot address the investment incentive problem that arises when rotation is institutionally mandated. This motivates examining how our C-SD and T-SD mechanisms can provide solutions for actual mandatory rotation systems.

\subsection{Public Teacher Rotation in Japan}\label{subsec:teacher-rotation-japan}

The personnel transfer system for public school teachers in Japan offers a compelling real-world application of a mandatory exchange environment. Prefectural boards of education conduct large-scale, periodic rotations of teachers across municipalities to ensure equitable educational quality, prevent teacher shortages in remote areas, and foster professional development \citep{watanabeKenpiFutanKyoshokuin2019, nierTeacherTransfer2017}. This mandated movement is a clear manifestation of the complete exchange constraint.

However, this system generates a significant externality in human capital investment. A school principal who invests in developing a teacher's skills faces the near certainty of losing that teacher. The predictable loss of talent can dampen incentives for investment, as the benefits accrue to other schools \citep{watanabeKenpiFutanKyoshokuin2019}. Our framework offers a structure to address this challenge.

\subsubsection{Designing the Assignment Partition for Teacher Rotation}
Many prefectural Boards of Education (PBoEs) in Japan already utilize a structure analogous to our assignment partition: the ''personnel transfer block'' (\emph{jinji id\={o} burokku}). Prefectures are often partitioned into several geographical blocks to manage personnel flow \citep{nierTeacherTransfer2017}. This existing practice can be formalized and enhanced using our framework.

The practical design of such partitions requires that the size of each group not exceed half the total number of divisions ($|N_k| \leq |N|/2$). PBoEs can strategically design these partitions to meet policy objectives, such as by partitioning a prefecture into multiple regions of varying sizes, as long as no single region exceeds the 50\% threshold. The specific assignment of schools and teachers to these blocks can be systematically derived using the constructive algorithm from Proposition \ref{prop:a_partition-existence}. This ensures that the resulting partition meets practical needs while maintaining the guarantees of our mechanisms.

\subsubsection{Applying C-SD or T-SD for Incentive-Compatible Transfers}
Once the PBoE has defined its partition, it can use our proposed mechanisms for the cohort of teachers due for rotation. Given the large number of participants, the offline implementation is appropriate. Schools would submit preference rankings over teachers in their designated pools, and the PBoE would compute the assignment.

\begin{itemize}
    \item Chain Serial Dictatorship (C-SD): The PBoE would compute the selection chain from submitted preferences to determine the final selection order and then the assignment.
    \item Two-Stage Serial Dictatorship (T-SD): This mechanism is particularly well-suited to the investment problem.
    \begin{itemize}
        \item In Stage 1, the PBoE simulates a nomination process based on submitted preferences to determine an endogenous priority order. A school that developed a highly-ranked teacher earns a high-priority pick for their replacement.
        \item In Stage 2, the PBoE computes the final assignment using this order.
    \end{itemize}
    This directly satisfies respecting improvement. Principals are rewarded for investing in teacher development with a better choice of incoming staff, internalizing the positive externality.
\end{itemize}

By leveraging an assignment partition based on existing blocks and applying an incentive-compatible mechanism like T-SD, PBoEs can achieve the equity goals of their rotation systems while fostering a culture where investing in teachers is strategically rewarded.

It is instructive to contrast our approach with the Teacher-Optimal Block Exchange (TO-BE) mechanism \citep{combeDesignTeacherAssignment2022}, designed for contexts where IR is established. Applying such an IR-based mechanism to Japan's system would undermine its objective of staffing remote schools. This highlights the importance of institutional context. Unlike BTTC, which fails to satisfy RI, our proposed C-SD and T-SD mechanisms are specifically designed for mandatory rotation environments, ensuring coverage while restoring investment incentives.

\section{Conclusion}\label{sec:conclusion}

This paper has addressed the design of assignment mechanisms for environments where complete exchange is a mandatory institutional requirement. We formalized this constraint and introduced the respecting improvement property as a solution to the critical investment incentive problem that arises in such settings. Our main contribution is the development of two novel mechanisms, Chain Serial Dictatorship (C-SD) and Two-Stage Serial Dictatorship (T-SD), which operate within a flexible structure called an assignment partition. We proved that both mechanisms simultaneously satisfy complete exchange, respecting improvement, strategy-proofness, and efficiency under assignment partition. The introduction of the EO-SD form provided a unified analytical basis, clarifying the underlying logic of both mechanisms and facilitating the proofs of their desirable properties.

Our analysis also explored the limits of efficiency under the complete exchange constraint, demonstrating that CE-TTC fails respecting improvement for small markets ($n=3$) and an impossibility of achieving CE-efficiency, respecting improvement, and strategy-proofness simultaneously for $n=4$. These results highlight the trade-offs inherent in mechanism design under strict institutional constraints and underscore the significance of our proposed mechanisms, which achieve a robust combination of desirable properties by operating within the more flexible EAP framework.

The practical relevance of our findings was illustrated through applications to public teacher rotation systems and a proposed reform of the NPB Active Player Draft. These examples demonstrate how our framework can be applied to real-world problems, offering concrete solutions that align institutional objectives with individual incentives. Future research could explore the extension of these mechanisms to more complex environments, such as those with multi-unit demand or two-sided matching, and further investigate the boundaries of possibility for incentive-compatible mechanisms under various institutional constraints.

\appendix

\section{Detailed Proof of Assignment Partition Sufficiency}\label{app:a_partition-sufficiency}

We provide a constructive proof of sufficiency via the Largest-First Construction algorithm.

\begin{proof}[Detailed Proof of Sufficiency]
For clarity, we distinguish divisions and workers notationally.

Let $W=N$ be the set of workers. For any $S\subset N$, we write $S\subset W$ to denote the set of workers owned by divisions in $S$ (since division $i$ initially owns worker $i$).

Let $N_{\text{max}}$ be a group of maximum size identified in Step 1 of the algorithm. For the purpose of this proof, let $N_1 := N_{\text{max}}$. Let $\{N_2, \dots, N_K\}$ be the remaining groups from the partition, indexed in processing order.

The algorithm proceeds as follows. In Step 2, it relabels groups so that $N_1 := N_{\text{max}}$ and $\{N_2, \dots, N_K\}$ are the remaining groups. In Step 3, it forms an ordered list of workers
Let $L$ be the ordered list of workers formed by concatenating the list of workers from group $N_1$, followed by the list from group $N_2$, and so on, up to $N_K$.

In Step 4, it sequentially processes the groups $N_2, \dots, N_K$. For each group $N_j$ with $j = 2, 3, \ldots, K$, it assigns the first $n_j$ workers from $L$ to the worker set $X_{N_j}$, and removes them from $L$.
Finally, in Step 5, it assigns the remaining $n_1$ workers in $L$ to the worker set $X_{N_1}$ of $N_1$.

We establish the correctness of this procedure via two lemmas.

\begin{lemma}\label{lem:sufficiency-1}
\emph{For each group $N_k$ with $k \in \{2, \dots, K\}$, when it is processed by the algorithm, the number of available workers in $L^{(k)}$ is at least $n_k$. Moreover, the separation condition is satisfied: the assigned workers do not belong to $N_k$ (i.e., no division in $N_k$ receives a worker it initially owned).}
\end{lemma}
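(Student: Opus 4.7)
The plan is to prove both the availability and separation claims by tracking a simple position invariant on the list $L^{(k)}$ at the moment group $N_k$ is processed. Since the algorithm only ever deletes workers from the front of $L$, it suffices to pin down how many workers have been removed and to identify precisely which original positions they came from.

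First I would set up the position layout of the initial list. In $L = L^{(2)}$, the workers of $N_1$ occupy positions $1, \ldots, n_1$; the workers of $N_2$ occupy positions $n_1+1, \ldots, n_1+n_2$; and in general the workers of $N_j$ occupy positions $\sum_{i<j} n_i + 1, \ldots, \sum_{i\le j} n_i$. After the algorithm has processed $N_2, \ldots, N_{k-1}$, exactly $\sum_{j=2}^{k-1} n_j$ workers have been removed from the head of $L$, so $|L^{(k)}| = n - \sum_{j=2}^{k-1} n_j$. This immediately gives availability: $|L^{(k)}| = n_1 + n_k + \sum_{j > k} n_j \ge n_k$.

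For separation I would show the invariant that the workers of $N_k$ sit in positions $n_1 + 1, \ldots, n_1 + n_k$ of $L^{(k)}$. Their original positions were $\sum_{i<k} n_i + 1, \ldots, \sum_{i\le k} n_i$, and shifting down by the $\sum_{j=2}^{k-1} n_j$ removals collapses this range exactly to $\{n_1+1, \ldots, n_1+n_k\}$. The algorithm takes the first $n_k$ workers of $L^{(k)}$, i.e., those in positions $1, \ldots, n_k$. Because $N_1$ is a group of maximum size, $n_k \le n_1$, so these positions lie strictly below $n_1+1$. Thus every worker assigned to $X_{N_k}$ comes from the initial $N_1$-block of $L^{(k)}$, and since $N_k \cap N_1 = \emptyset$ (distinct blocks of the input partition), none of these workers is owned by a division in $N_k$, establishing separation.

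The argument is essentially bookkeeping, so no step is genuinely hard; the main thing to get right is the index shift showing that $N_k$'s workers always sit immediately behind the $N_1$-buffer in $L^{(k)}$. The critical inequality is $n_k \le n_1$, which is guaranteed by the choice of $N_1$ as a largest group and is the only structural fact used beyond the algorithm's definition. Note that the assumption $\max_k n_k \le n/2$ is not needed at this stage of the argument; it will be required only later (in Lemma A.2) to ensure that the $N_1$-buffer is fully consumed by the time $N_1$'s own workers are assigned.
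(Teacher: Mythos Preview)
Your availability argument is correct and cleaner than the paper's: you compute $|L^{(k)}| = n - \sum_{j=2}^{k-1} n_j = n_1 + n_k + \sum_{j>k} n_j \ge n_k$ directly, whereas the paper argues by contradiction to the same inequality.

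Your separation argument has the right invariant but a slip in the final step. You correctly show that in $L^{(k)}$ the workers of $N_k$ occupy positions $n_1+1,\ldots,n_1+n_k$, and that the taken positions $1,\ldots,n_k$ lie strictly below $n_1+1$ because $n_k\le n_1$. That pair of facts already yields separation: the taken range and $N_k$'s range are disjoint, so $X_{N_k}\cap N_k=\emptyset$. However, your stated conclusion---that every assigned worker ``comes from the initial $N_1$-block'' and hence $N_k\cap N_1=\emptyset$ finishes the argument---is false in general. Once the cumulative demand $\sum_{j=2}^{k-1}n_j$ exceeds $n_1$, positions $1,\ldots,n_k$ of $L^{(k)}$ contain workers from $N_2,\ldots,N_{k-1}$, not from $N_1$. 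For instance, with $n_1=n_2=3$, $n_3=n_4=2$, when $N_4$ is processed five workers have already been removed, and positions $1,2$ of $L^{(4)}$ hold the last $N_2$ worker and the first $N_3$ worker. The fix is simply to drop the ``$N_1$-block'' claim and conclude directly from the disjointness of the two position ranges you have already established. (The paper's own separation paragraph is itself garbled---it asserts that all $N_1$ workers are consumed before $N_k$ is processed, which fails already at $k=2$---so your position-tracking approach, once the last line is corrected, is actually the cleaner proof.)
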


\begin{proof}
Suppose, for contradiction, that for some $m \in \{2, \dots, K\}$, this condition is violated for the first time. Let $U \subset W$ be the set of workers in $L^{(m-1)}$ just before $N_m$ is processed. The violation means $|U| < n_m$.

The number of workers in $U$ is what remains after groups $N_2, \dots, N_{m-1}$ have been served, so $|U| = n - \sum_{j=2}^{m-1}n_j$. Substituting this into the inequality gives $n - \sum_{j=2}^{m-1}n_j < n_m$. By re-expressing $n=\sum_{j=1}^K n_j$, we obtain $n_1 + \sum_{j=m}^{K}n_j < n_m \Longrightarrow n_1 + \sum_{j=m+1}^{K}n_j < 0,$
which is impossible since all group sizes are non-negative.

For the separation condition, note that by construction, the first $n_1$ positions of $L$ are occupied by workers from $N_1$. Since $N_k$ for $k \geq 2$ processes workers sequentially from the front of $L^{(k-1)}$, and since $n_1 \leq n/2$ implies that the total demand from groups $N_2, \dots, N_K$ is at least $n_1$, all workers from $N_1$ will be consumed by the time $N_k$ is processed. Therefore, $N_k$ can only be assigned workers from $W \setminus N_1$, which are distinct from $N_k$ since $N_k \neq N_1$.
\end{proof}

\begin{lemma}\label{lem:sufficiency-2}
\emph{After groups $N_2, \dots, N_K$ have been processed, the set of remaining workers, $X_1$, is a subset of $W \setminus N_1$.}
\end{lemma}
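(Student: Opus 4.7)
The plan is to show that the first $n_1$ entries of the initial list $L$---which by construction are exactly the workers owned by divisions in $N_1$---are all consumed during the processing of groups $N_2,\dots,N_K$, so that nothing belonging to $N_1$ survives to form $X_1$.

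First I would record the structural fact from Step 3 of the algorithm: the ordered list $L$ is built by concatenating the workers of $N_1$ (occupying positions $1,\dots,n_1$) with the workers of $N_2,\dots,N_K$. Since each subsequent group $N_j$ ($j\ge 2$) draws its allotment $n_j$ workers from the front of the current list (Step 4), the total number of workers removed by the time $X_1$ is formed equals $\sum_{j=2}^K n_j$.

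Next I would invoke the hypothesis $\max_k n_k \le n/2$ applied to $k=1$: since $N_1=N_{\max}$ has the largest size, the remaining groups jointly contain $\sum_{j=2}^K n_j = n - n_1 \ge n_1$ divisions. Hence the prefix consumption $\sum_{j=2}^K n_j \ge n_1$ exhausts at least the first $n_1$ positions of $L$, which are precisely the workers in $N_1$. By Lemma \ref{lem:sufficiency-1}, the algorithm always has enough workers available when processing each $N_j$, so this consumption indeed occurs without early termination.

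Therefore, at the moment Step 5 assigns the residual list to $X_1$, every worker originally placed in the first $n_1$ positions of $L$ has already been removed. The surviving workers must come from the tail of $L$, i.e., from $W\setminus N_1$, giving $X_1 \subseteq W\setminus N_1$ as required. Combined with the balance property $|X_1|=n_1$ ensured by construction, this completes the verification of the separation condition for $N_1$, and together with Lemma \ref{lem:sufficiency-1} it establishes sufficiency. The only subtle point---and the step most worth writing carefully---is confirming that the prefix of $L$ of length $n_1$ coincides exactly with the worker set $N_1$; this is the reason for using $N_{\max}$ as the buffer in Step 1, and the inequality $n_1 \le n/2$ is precisely what makes the buffer large enough to be fully absorbed.
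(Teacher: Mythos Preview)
Your proof is correct and follows essentially the same argument as the paper: both observe that the first $n_1$ positions of $L$ are exactly the workers in $N_1$, that the total demand $\sum_{j=2}^K n_j = n - n_1 \ge n_1$ by the hypothesis $n_1 \le n/2$, and hence the entire $N_1$-prefix is consumed before $X_1$ is formed. Your additional appeal to Lemma~\ref{lem:sufficiency-1} to guarantee the process does not terminate early is a nice touch that the paper leaves implicit.
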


\begin{proof}
The combined demand of groups $N_2, \dots, N_K$ is $\sum_{k=2}^{K}n_k = n-n_1$. The proposition's hypothesis, $n_1 \le n/2$, implies that this demand is at least $n_1$. By construction, the first $n_1$ positions of the list $L$ are occupied by the workers in $N_1$. Since the demand from the other groups is sufficient to exhaust this initial part of the list, all workers from $N_1$ are assigned to groups $N_2, \dots, N_K$. Consequently, the remaining workers that form $X_1$ are those in $L^{(K)}$, which cannot be from $N_1$.
\end{proof}

Completion of the proof.
Lemma \ref{lem:sufficiency-1} guarantees that the assignment in Step 4 is always feasible. Lemma \ref{lem:sufficiency-2} guarantees that the final assignment in Step 5 is feasible and satisfies the separation condition. The algorithm involves a single pass over the list of workers, so its running time is $O(n)$. This establishes the sufficiency of the condition.
\end{proof}

\subsection{Formal Algorithm Specification}

\textbf{Input:} Partition $\{N_k\}_{k=1}^K$ of $N$ satisfying $\max_{k=1,\ldots,K} n_k \leq n/2$

\textbf{Output:} Assignment partition $(N_k, X_k)_{k=1}^K$

\begin{enumerate}
\item
  \textbf{Identify Largest Group:}\\
  $k^* := \min\left\{k \in \{1,\ldots,K\} : n_k = \max_{j=1,\ldots,K} n_j\right\}$\\
  Set $N_{\max} := N_{k^*}$ and $n_{\max} := n_{k^*}$.
\item
  \textbf{Reindex Groups:}\\
  Relabel groups so that $N_1 := N_{\max}$ and $\{N_2, \ldots, N_K\}$ are the remaining groups in processing order.
\item
  \textbf{Form Worker Queue:}\\
  Let $a(N_k)$ be an arbitrary ordering (list) of the elements of $N_k$. Then, form an ordered list $L$ by concatenating the lists of workers from $N_1, N_2, \ldots, N_K$ in sequence, where the order within each group $N_k$ is arbitrary but fixed.
\item
  \textbf{Sequential Assignment:}\\
  Initialize $L^{(1)} := L$.\\
  For $j = 2, 3, \ldots, K$ in order:
\begin{itemize}
  \item
    Set $X_{N_j} := \{w_1^{(j)}, w_2^{(j)}, \ldots, w_{n_j}^{(j)}\}$ where $w_1^{(j)}, \ldots, w_{n_j}^{(j)}$ are the first $n_j$ workers in the current list $L^{(j-1)}$
  \item
    Update $L^{(j)} := L^{(j-1)} \setminus X_{N_j}$
\end{itemize}
\item
  \textbf{Final Assignment:}\\
  $X_{N_1} := L^{(K)}$
\end{enumerate}

\paragraph{Complexity.}The search for the largest group can be performed in $O(n)$ time by a single pass through the group sizes. The linear-time complexity $O(n)$ comes from the single pass through the worker queue.
\section{Detailed Proofs for \texorpdfstring{$n=4$}{n=4} Incompatibility}\label{app:detailed-proofs}

This appendix provides detailed justifications for the claims made in the main proof of Proposition \ref{prop:ce-e-ri-sp-impossibility-n4}. It is organized into two sections: B.2 verifies the ``improvement'' conditions for every application of RI, and B.3 verifies the ``profitable deviation'' conditions for every application of SP.

\subsection{Verification of Respecting Improvement (RI) Conditions}\label{app:ri-verification}

We verify the premises for each application of RI in the proof. An improvement for division $i$ from $\succ$ to $\succ'$ requires: (1) $\succ'_i = \succ_i$, (2) worker $i$'s rank does not worsen in any $\succ'_j$ ($j \neq i$), and (3) relative ranks of workers other than $i$ are preserved in all $\succ'_j$ ($j \neq i$).

\begin{enumerate}
\item \textbf{RI for $f(\succ^4) \neq \mu^3$}: Improvement for Div 3 from $\succ^1$ to $\succ^4$. Conditions: (1) $\succ_3^4 = \succ_3^1$; (2) Only $\succ_1$ changes, with worker 3's rank improving (from $2 \succ_1^1 3$ to $3 \succ_1^4 2$); (3) Relative ranks of workers $\{1,2,4\}$ preserved in $\succ_1$.

\item \textbf{RI for $f(\succ^5) \neq \mu^5, \mu^4$}: Improvement for Div 3 from $\succ^2$ to $\succ^5$. Conditions: (1) $\succ_3^5 = \succ_3^2$; (2) Only $\succ_2$ changes, with worker 3's rank improving (from $1 \succ_2^2 3$ to $3 \succ_2^5 1$); (3) Relative ranks of workers $\{1,2,4\}$ preserved in $\succ_2$.

\item \textbf{RI for $f(\succ^6) \neq \mu^1$}: Improvement for Div 4 from $\succ^3$ to $\succ^6$. Conditions: (1) $\succ_4^6 = \succ_4^3$; (2) Only $\succ_3$ changes, with worker 4's rank improving (from $1 \succ_3^3 4$ to $4 \succ_3^6 1$); (3) Relative ranks of workers $\{1,2,3\}$ preserved in $\succ_3$.

\item \textbf{RI in Case A for $f(\succ^6) \neq \mu^2$}: Improvement for Div 1 from $\succ^7$ to $\succ^6$. Conditions: (1) $\succ_1^6 = \succ_1^7$; (2) Only $\succ_3$ changes, with worker 1's rank improving (from $2 \succ_3^7 1$ to $1 \succ_3^6 2$); (3) Relative ranks of workers $\{2,3,4\}$ preserved in $\succ_3$.

\item \textbf{RI in Case B for $f(\succ^8) \neq \mu^1, \mu^5$}: Improvement for Div 1 from $\succ^5$ to $\succ^8$. Conditions: (1) $\succ_1^8 = \succ_1^5$; (2) Only $\succ_4$ changes, with worker 1's rank improving (from $2 \succ_4^5 1$ to $1 \succ_4^8 2$); (3) Relative ranks of workers $\{2,3,4\}$ preserved in $\succ_4$.

\item \textbf{RI in Sub-subcase B1.1}: Improvement for Div 4 from $\succ^{10}$ to $\succ^9$. All three conditions verified and confirmed to hold.

\item \textbf{RI in Sub-subcase B1.2}: Improvement for Div 2 from $\succ^9$ to $\succ^4$. Conditions: (1) $\succ_2^4 = \succ_2^9$; (2) Only $\succ_1$ changes, with worker 2's rank improving (from $4 \succ_1^9 2$ to $2 \succ_1^4 4$); (3) Relative ranks of workers $\{1,3,4\}$ preserved in $\succ_1$.
\end{enumerate}

\subsection{Verification of Strategy-Proofness (SP) Conditions}\label{app:sp-verification}

We verify that each application of SP involves a profitable deviation.

\begin{enumerate}
\item \textbf{SP in Subcase B1, step a}: Div 2, $\succ^8$ vs $\succ^9$. True profile $\succ^8$: $f(\succ^8)=\mu^2$ gives worker 1 ($\succ_2^8: 3 \succ 1 \succ 4$). Deviation to $\succ_2^9$ leads to $\succ^9$ where $f(\succ^9)=\mu^1$ or $\mu^5$ gives worker 3. Since $3 \succ_2^8 1$, profitable.

\item \textbf{SP in Sub-subcase B1.2}: Div 1, $\succ^4$ vs $\succ^9$. True profile $\succ^4$: $f(\succ^4)=\mu^1$ gives worker 2 ($\succ_1^4: 3 \succ 2 \succ 4$). Deviation to $\succ_1^9$ leads to $\succ^9$ where $f(\succ^9)=\mu^3$ gives worker 3. Since $3 \succ_1^4 2$, profitable.

\item \textbf{SP in Subcase B2, step a}: Div 2, $\succ^8$ vs $\succ^9$. True profile $\succ^8$: $f(\succ^8)=\mu^3$ gives worker 4 ($\succ_2^8: 3 \succ 1 \succ 4$). Deviation to $\succ_2^9$ leads to $\succ^9$ where $f(\succ^9)=\mu^1$ or $\mu^5$ gives worker 3. Since $3 \succ_2^8 4$, profitable.
\end{enumerate}

\section{NPB Active Player Draft: Proofs of CE-Efficiency, Strategy-Proofness Failure, and RI Failure}\label{app:npb-proofs}

This appendix first restates the procedural consequences of the NPB rules used in the proofs-vote-binding, owner-call, fallback, and the last-two rule-and then gives full proofs of CE-efficiency (Proposition~\ref{prop:npb-cee}), SP failure (Proposition~\ref{prop:npb-not-sp-n4}), and RI failure (Proposition~\ref{prop:npb-not-ri-all-n}).

Let $\mathcal{A}=(N, X, \succ, \triangleright)$ be an assignment problem, where $N$ is the set of clubs, $X$ is the set of players, $\succ=(\succ_i)_{i\in N}$ is the preference profile, and $\triangleright$ is the exogenous tie-breaking order. We assume $X=N$ (each club initially owns its own player) and $o(i)=i$ for all $i\in N$.

We analyze the NPB Active Player Draft mechanism under the following model. In the vote stage, each club $i \in N$ casts a single vote for its most-desired player in $X \setminus \{i\}$ according to $\succ_i$; the draft-priority $\triangleright^D$ is determined by vote counts with exogenous tie-breaking $1 \triangleright 2 \triangleright \cdots \triangleright n$. In the selection stage, when club $i$ moves: if its voted player is available, vote-binding requires $i$ to select it; otherwise, $i$ selects its most-preferred available player (subject to self-avoidance). After a player $w$ is selected, the procedural rules apply as follows:

\begin{itemize}
\item If the owner of $w$ (club $j$) has not yet selected, the next chooser is club $j$ (owner-call).
\item If owner-call targets a club that has already selected, the next chooser is the highest-ranked unassigned club in the draft-priority $\triangleright^D$ (fallback).
\item When exactly two clubs remain, the next mover must select the other remaining club's listed player (last-two rule).
\end{itemize}

These rules determine the assignment $\mu$ for any preference profile $\succ$.

\subsection{Proof of Proposition~\ref{prop:npb-cee}}\label{app:npb-cee-proof}

Fix any integer $n \ge 3$ and a strict preference profile $\succ=(\succ_i)_{i\in N}$. Let $\mu$ be the assignment produced by this mechanism at $\succ$. We prove that $\mu$ is CE-efficient.

Assume, for contradiction, that there exists a CE-assignment $\mu'\in \mathcal{C}$ that Pareto-dominates $\mu$; that is,
\[
\mu'_j \succeq_j \mu_j \quad \text{ for all } j\in N,\qquad \text{and}\qquad \mu'_k \succ_k \mu_k \text{ for some } k\in N.
\]
Let $(i_1,i_2,\dots,i_n)$ be the realized sequence of movers (clubs) in the draft that produces $\mu$. Define the earliest divergence index
\[
t^* \;:=\; \min\{t\in\{1,\dots,n\}\colon \mu_{i_t}\neq \mu'_{i_t}\},
\]
and set $i:=i_{t^*}$. By construction,
\begin{equation}
\mu_{i_t}=\mu'_{i_t} \qquad \text{for all } t<t^*. \tag{1}
\end{equation}

We claim that $\mu'_i$ is available when club $i$ moves in the realized draft that produces $\mu$. Suppose not. Then some earlier mover $i_s$ with $s<t^*$ must have chosen $\mu'_i$, i.e., $\mu_{i_s}=\mu'_i$. But by (1), $\mu_{i_s}=\mu'_{i_s}$, hence $\mu'_{i_s}=\mu'_i$, contradicting the bijectivity of $\mu'$. Therefore $\mu'_i$ is available at $i$'s move.

Because $\mu'$ Pareto-dominates $\mu$ and $\succ$ is strict, we must have
\begin{equation}
\mu'_i \succ_i \mu_i. \tag{2}
\end{equation}
Indeed, if $\mu'_i \preceq_i \mu_i$ and $\mu'_i\neq \mu_i$, then $\mu'$ would not weakly improve upon $\mu$ for club $i$, contradicting Pareto-domination; equality at the earliest divergence is impossible by definition.

Consider the state when $i$ moves in the realized draft that produces $\mu$. There are three exhaustive and mutually exclusive cases:

\emph{Case A (last-two stage).} If exactly two clubs remain to select at $i$'s move (including $i$), the available set has size two, say $\{x,y\}$. The last-two rule determines the two final assignments uniquely. Any CE-assignment $\mu'\neq \mu$ differing on these two clubs must swap $x$ and $y$ between them. Under strict preferences, such a swap cannot produce $\mu'_j \succeq_j \mu_j$ for both clubs with at least one strict inequality; one of the two clubs must strictly prefer $\mu$ to $\mu'$. This contradicts the assumption that $\mu'$ Pareto-dominates $\mu$, hence Case A cannot occur under (2).

\emph{Case B (vote-binding inactive at $i$).} Suppose the voted-for player $v(i)$ is not available when $i$ moves. By the mechanism's selection rule, club $i$ then chooses its most-preferred available player (subject to self-avoidance). Since $\mu'_i$ is available and $\mu'_i \succ_i \mu_i$ by (2), the mechanism would assign $\mu'_i$ to $i$, contradicting $\mu_i\neq \mu'_i$ at the earliest divergence. Thus Case B cannot occur under (2).

\emph{Case C (vote-binding active at $i$).} Suppose the voted-for player $v(i)$ is available when $i$ moves. By the vote stage specification, each club votes for its most-desired player (its top-ranked admissible player under $\succ_i$), hence
\[
v(i)=\max_{\succ_i}\bigl(X\setminus\{i\}\bigr).
\]
Because vote-binding is active, $v(i)$ is available and the mechanism assigns $\mu_i=v(i)$ to club $i$. For any available player $x\neq v(i)$, we have $v(i)\succ_i x$. In particular, since $\mu'_i$ is available, we have
\[
\mu_i=v(i) \succ_i \mu'_i,
\]
contradicting (2). Hence Case C cannot occur under (2).

In each of the three cases at $i$'s move, we obtained a contradiction with (2). Therefore, our initial assumption that there exists a CE-assignment $\mu'$ that Pareto-dominates $\mu$ is false. Hence $\mu$ is CE-efficient for every $n\ge 3$. $\blacksquare$

\subsection{Proof of Proposition~\ref{prop:npb-not-sp-n4}}\label{app:npb-sp-proof}

We show that club $(n-1)$ can profitably misreport its preferences, violating strategy-proofness. Specifically, we construct a profile $\succ$ where club $(n-1)$ receives a worse outcome under truthful reporting than under a strategic misreport $\succ'_{n-1}$.

We use the NPB Active Player Draft mechanism as modeled above.

Fix $n \ge 4$. Define a profile $\succ=(\succ_i)_{i=1}^n$ as follows:
\begin{align*}
\text{Club } k \text{ for } k \in \{1,2,\ldots,n-2\} &: \quad (k+1) \succ_k \cdots \\
\text{Club } (n-1) &: \quad 1 \succ_{n-1} 2 \succ_{n-1} n \succ_{n-1} \cdots \\
\text{Club } n &: \quad 1 \succ_n \cdots
\end{align*}

By this construction, the voting pattern is as follows: clubs $(n-1)$ and $n$ vote for player $1$, each $k\in\{1,2,\ldots,n-2\}$ votes for player $(k+1)$. Thus player $1$ receives two votes, and each player $(k+1)$ for $k\in\{1,2,\ldots,n-2\}$ receives one vote. Therefore the draft-priority is $\triangleright^D: 1 \triangleright^D 2 \triangleright^D \cdots \triangleright^D n$.

By the procedural consequences above, under these truthful votes, the selection follows the chain: clubs $1,2,\ldots,n-2$ select players $2,3,\ldots,n-1$ respectively, leaving only clubs $\{n-1,n\}$ and players $\{1,n\}$. By the last-two rule, club $(n-1)$ must select player $n$, leaving player $1$ for club $n$. Thus $f_{n-1}(\succ)=n$.

Now consider the misreport $\succ'_{n-1}$ where club $n-1$ changes its vote to player $2$:
\begin{align*}
\text{Club } (n-1) &: \quad 2 \succ'_{n-1} 1 \succ'_{n-1} n \succ'_{n-1} \cdots
\end{align*}
The new vote totals give player $2$ two votes (from clubs $1$ and $n-1$), so club $2$ is first in the draft-priority $\triangleright^D$, followed by one-vote clubs in exogenous order. By the same procedural consequences, clubs $2,3,\ldots,n-1$ select players $3,4,\ldots,n-1$ respectively, leaving player $2$ available when club $n-1$ moves; vote-binding forces it to select player $2$. Hence $f_{n-1}(\succ'_{n-1},\succ_{-(n-1)})=2$.

Since $2 \succ_{n-1} n$ by construction, we have $f_{n-1}(\succ'_{n-1},\succ_{-(n-1)}) \succ_{n-1} f_{n-1}(\succ)$, violating strategy-proofness.

\subsection{Proof of Proposition~\ref{prop:npb-not-ri-all-n}}\label{app:npb-ri-proof}

We use the NPB Active Player Draft mechanism as modeled above. Fix $n \ge 3$ and the exogenous tie-break $1 \triangleright 2 \triangleright \cdots \triangleright n$. Suppose, for contradiction, that the NPB Active Player Draft mechanism satisfies respecting improvement. We will construct a pair $(\succ,\succ')$ such that $\succ$ is an improvement for club $(n-1)$ with respect to $\succ'$, but $f_{n-1}(\succ') \succ_{n-1} f_{n-1}(\succ)$. This contradicts our assumption that respecting improvement is satisfied.

We begin by defining the baseline profile $\succ=(\succ_i)_{i=1}^n$ through the following strict-order constraints, where each ''$\cdots$'' may be completed arbitrarily provided the displayed comparisons hold:
\begin{align*}
\text{Club } k \text{ for } k \in \{1,2,\ldots,n-3\} &: \quad (k+1) \succ_k \cdots \quad \text{(empty if } n = 3\text{)} \\
\text{Club } (n-2) &: \quad (n-1) \succ_{n-2} n \succ_{n-2} \cdots \\
\text{Club } (n-1) &: \quad 1 \succ_{n-1} \cdots \\
\text{Club } n &: \quad 1 \succ_n \cdots
\end{align*}
This specification fully determines the top segments of all clubs' preferences: each club $k \in \{1,2,\ldots,n-3\}$ (if any) satisfies $(k+1) \succ_k \cdots$; if $n = 3$, the set $\{1,2,\ldots,n-3\}$ is empty, and clubs $1$ and $2$ are covered by the $(n-2)$ and $(n-1)$ constraints respectively.

By this construction, the voting pattern is as follows: clubs $n-1$ and $n$ vote for player $1$, each $k\in\{1,2,\ldots,n-3\}$ (if any) votes for player $(k+1)$. Thus player $1$ receives two votes, and each player $(k+1)$ for $k\in\{1,2,\ldots,n-3\}$ receives one vote. Therefore the draft-priority is $\triangleright^D: 1 \triangleright^D 2 \triangleright^D \cdots \triangleright^D n$.

By the procedural consequences, the selection process under $\succ$ proceeds as follows. Each club $k \in \{1,2,\ldots,n-2\}$ selects player $(k+1)$ (vote-binding), with owner-call passing to club $(k+1)$. For the remaining clubs $\{n-1,n\}$, by the last-two rule, club $(n-1)$ must select player $n$, leaving club $n$ with the only remaining player $1$.

Therefore, for all $n\ge 3$, we have $f_{n-1}(\succ)=n$, and by the specification of $\succ_{n-1}$, we have $1 \succ_{n-1} n$.

We now consider an alternative preference profile $\succ'$ by setting $\succ'_j=\succ_j$ for $j\neq n-2$, and:
\begin{align*}
\text{Club } (n-2) &: \quad n \succ'_{n-2} (n-1) \succ'_{n-2} \cdots
\end{align*}
i.e., we swap $n$ and $(n-1)$ at club $n-2$.

Under $\succ'$, club $(n-2)$ now votes for player $n$ instead of player $(n-1)$, so the draft-priority becomes $\triangleright^D: 1 \triangleright^D 2 \triangleright^D \cdots \triangleright^D n \triangleright (n-1)$. The selection path coincides with that under $\succ$ up to club $(n-2)$. At club $(n-2)$'s turn, both $(n-1)$ and $n$ are available; by $\succ'_{n-2}$, club $(n-2)$ now selects player $n$, and owner-call passes to club $n$. Exactly two clubs remain, $\{n,n-1\}$, so by the last-two rule, club $n$ must select player $(n-1)$, leaving club $(n-1)$ with player $1$. Hence $f_{n-1}(\succ')=1$.

Since $\succ$ is an improvement for club $(n-1)$ with respect to $\succ'$ but $f_{n-1}(\succ)=n \prec_{n-1} 1=f_{n-1}(\succ')$, this contradicts our assumption that respecting improvement is satisfied.
Therefore, the current NPB Active Player Draft mechanism violates respecting improvement for every $n \ge 3$. $\blacksquare$

\bibliographystyle{abbrvnat}
\bibliography{references}
\nocite{*}

\end{document}